\newdimen\proofrulebreadth \proofrulebreadth=.05em
\newdimen\proofdotseparation \proofdotseparation=1.25ex
\newdimen\proofrulebaseline \proofrulebaseline=2ex
\let\then\relax
\def\hfi{\hskip0pt plus.0001fil}
\mathchardef\squigto="3A3B
\newif\ifinsideprooftree\insideprooftreefalse
\newif\ifonleftofproofrule\onleftofproofrulefalse
\newif\ifproofdots\proofdotsfalse
\newif\ifdoubleproof\doubleprooffalse
\let\wereinproofbit\relax
\newdimen\shortenproofleft
\newdimen\shortenproofright
\newdimen\proofbelowshift
\newbox\proofabove
\newbox\proofbelow
\newbox\proofrulename
\def\shiftproofbelow{\let\next\relax\afterassignment\setshiftproofbelow\dimen0 }
\def\shiftproofbelowneg{\def\next{\multiply\dimen0 by-1 }%
\afterassignment\setshiftproofbelow\dimen0 }
\def\setshiftproofbelow{\next\proofbelowshift=\dimen0 }
\def\setproofrulebreadth{\proofrulebreadth}
\def\prooftree{
%
\ifnum  \lastpenalty=1
\then   \unpenalty
\else   \onleftofproofrulefalse
\fi
%
\ifonleftofproofrule
\else   \ifinsideprooftree
        \then   \hskip.5em plus1fil
        \fi
\fi
%
\bgroup
\setbox\proofbelow=\hbox{}\setbox\proofrulename=\hbox{}%
\let\justifies\proofover\let\leadsto\proofoverdots\let\Justifies\proofoverdbl
\let\using\proofusing\let\[\prooftree
\ifinsideprooftree\let\]\endprooftree\fi
\proofdotsfalse\doubleprooffalse
\let\thickness\setproofrulebreadth
\let\shiftright\shiftproofbelow \let\shift\shiftproofbelow
\let\shiftleft\shiftproofbelowneg
\let\ifwasinsideprooftree\ifinsideprooftree
\insideprooftreetrue
%
\setbox\proofabove=\hbox\bgroup$\displaystyle 
\let\wereinproofbit\prooftree
%
\shortenproofleft=0pt \shortenproofright=0pt \proofbelowshift=0pt
%
\onleftofproofruletrue\penalty1
}
\def\eproofbit{
%
\ifx    \wereinproofbit\prooftree
\then   \ifcase \lastpenalty
        \then   \shortenproofright=0pt  
        \or     \unpenalty\hfil         
        \or     \unpenalty\unskip       
        \else   \shortenproofright=0pt  
        \fi
\fi
%
\global\dimen0=\shortenproofleft
\global\dimen1=\shortenproofright
\global\dimen2=\proofrulebreadth
\global\dimen3=\proofbelowshift
\global\dimen4=\proofdotseparation
\global\count255=\proofdotnumber
%
$\egroup  
%
\shortenproofleft=\dimen0
\shortenproofright=\dimen1
\proofrulebreadth=\dimen2
\proofbelowshift=\dimen3
\proofdotseparation=\dimen4
\proofdotnumber=\count255
}
\def\proofover{
\eproofbit 
\setbox\proofbelow=\hbox\bgroup 
\let\wereinproofbit\proofover
$\displaystyle
}%
\def\proofoverdbl{
\eproofbit 
\doubleprooftrue
\setbox\proofbelow=\hbox\bgroup 
\let\wereinproofbit\proofoverdbl
$\displaystyle
}%
\def\proofoverdots{
\eproofbit 
\proofdotstrue
\setbox\proofbelow=\hbox\bgroup 
\let\wereinproofbit\proofoverdots
$\displaystyle
}%
\def\proofusing{
\eproofbit 
\setbox\proofrulename=\hbox\bgroup 
\let\wereinproofbit\proofusing
\kern0.3em$
}
\def\endprooftree{
\eproofbit 
  \dimen5 =0pt
%
\dimen0=\wd\proofabove \advance\dimen0-\shortenproofleft
\advance\dimen0-\shortenproofright
%
\dimen1=.5\dimen0 \advance\dimen1-.5\wd\proofbelow
\dimen4=\dimen1
\advance\dimen1\proofbelowshift \advance\dimen4-\proofbelowshift
%
\ifdim  \dimen1<0pt
\then   \advance\shortenproofleft\dimen1
        \advance\dimen0-\dimen1
        \dimen1=0pt
        \ifdim  \shortenproofleft<0pt
        \then   \setbox\proofabove=\hbox{%
                        \kern-\shortenproofleft\unhbox\proofabove}%
                \shortenproofleft=0pt
        \fi
\fi
%
\ifdim  \dimen4<0pt
\then   \advance\shortenproofright\dimen4
        \advance\dimen0-\dimen4
        \dimen4=0pt
\fi
%
\ifdim  \shortenproofright<\wd\proofrulename
\then   \shortenproofright=\wd\proofrulename
\fi
%
\dimen2=\shortenproofleft \advance\dimen2 by\dimen1
\dimen3=\shortenproofright\advance\dimen3 by\dimen4
%
\ifproofdots
\then
        \dimen6=\shortenproofleft \advance\dimen6 .5\dimen0
        \setbox1=\vbox to\proofdotseparation{\vss\hbox{$\cdot$}\vss}%
        \setbox0=\hbox{%
                \advance\dimen6-.5\wd1
                \kern\dimen6
                $\vcenter to\proofdotnumber\proofdotseparation
                        {\leaders\box1\vfill}$%
                \unhbox\proofrulename}%
\else   \dimen6=\fontdimen22\the\textfont2 
        \dimen7=\dimen6
        \advance\dimen6by.5\proofrulebreadth
        \advance\dimen7by-.5\proofrulebreadth
        \setbox0=\hbox{%
                \kern\shortenproofleft
                \ifdoubleproof
                \then   \hbox to\dimen0{%
                        $\mathsurround0pt\mathord=\mkern-6mu%
                        \cleaders\hbox{$\mkern-2mu=\mkern-2mu$}\hfill
                        \mkern-6mu\mathord=$}%
                \else   \vrule height\dimen6 depth-\dimen7 width\dimen0
                \fi
                \unhbox\proofrulename}%
        \ht0=\dimen6 \dp0=-\dimen7
\fi
%
\let\doll\relax
\ifwasinsideprooftree
\then   \let\VBOX\vbox
\else   \ifmmode\else$\let\doll=$\fi
        \let\VBOX\vcenter
\fi
\VBOX   {\baselineskip\proofrulebaseline \lineskip.2ex
        \expandafter\lineskiplimit\ifproofdots0ex\else-0.6ex\fi
        \hbox   spread\dimen5   {\hfi\unhbox\proofabove\hfi}%
        \hbox{\box0}%
        \hbox   {\kern\dimen2 \box\proofbelow}}\doll%
%
\global\dimen2=\dimen2
\global\dimen3=\dimen3
\egroup 
\ifonleftofproofrule
\then   \shortenproofleft=\dimen2
\fi
\shortenproofright=\dimen3
%
\onleftofproofrulefalse
\ifinsideprooftree
\then   \hskip.5em plus 1fil \penalty2
\fi
}
\begin{document}

\[
\begin{prooftree}
A \rTo^{f} B
\justifies
p \rTo^{(a,c)} q
\end{prooftree}
\]

\end{document}

doesn't leave enough space below the line in the proof tree, so that
the (a,c) label on the lower arrow runs into the line. It's fine with
article.cls.

\urlstyle{sf}
\onehalfspacing
\setdescription{leftmargin=\parindent,labelindent=\parindent}
\lstset{frame=shadowbox}


\newtheorem{mytheo}{Theorem}[section]
\newtheorem{myprop}[mytheo]{Proposition}
\newtheorem{mycoro}[mytheo]{Corollary}
\newtheorem{mylem}[mytheo]{Lemma}

\theoremstyle{definition}
\newtheorem{mydef}[mytheo]{Definition}

\theoremstyle{remark}
\newtheorem{mynot}[mytheo]{Notation}
\newtheorem{mysam}[mytheo]{Example}
\newtheorem{myrem}[mytheo]{Remark}
\newtheorem{myprob}[mytheo]{Open problem}

\everymath=\expandafter{\the\everymath\displaystyle}

\newcommand{\bra}[1]{\left\langle #1 \right|}
\newcommand{\ket}[1]{\left| #1 \right\rangle}
\newcommand{\abs}[1]{\left| #1 \right|}
\newcommand{\demi}{\frac{1}{2}}
\newcommand{\IC}{\textsc{I}}
\newcommand{\vect}[1]{\overrightarrow{#1}}
\newcommand{\C}{\mathbb{C}}
\newcommand{\R}{\mathbb{R}}
\newcommand{\N}{\mathbb{N}}
\newcommand{\Z}{\mathbb{Z}}
\newcommand{\U}{\mathbb{U}}
\newcommand{\PX}{\mathcal{P}(X)}
\newcommand{\B}{\mathcal{B}}
\newcommand{\Proj}{\operatorname{Proj}}
\newcommand{\deriv}{\mathrm{d}}
\newcommand{\prodscal}[2]{\left\langle #1 \vert #2 \right\rangle}
\newcommand{\scprod}{\mathrel{\bullet}}
\newcommand{\norm}[1]{\left\| #1 \right\|}
\newcommand{\set}[1]{\left\lbrace #1 \right\rbrace}
\newcommand{\pair}[2]{\left\langle #1 , #2 \right\rangle}
\newcommand{\ltwo}{\ell^2}
\newcommand{\unit}{[0,1]}
\newcommand{\linspan}{\operatorname{span}}
\renewcommand{\Im}{\operatorname{Im}}
\renewcommand{\Re}{\operatorname{Re}}
\newcommand{\kl}[1]{{\mathcal Kl}(#1)}
\newcommand{\kln}[1]{{\mathcal Kl_{\N}}(#1)}
\newcommand{\EiMo}{\mathcal{EM}}
\newcommand{\Ef}{{\mathcal Ef }}
\newcommand{\lwnr}{\sqsubseteq}
\newcommand{\fix}{\operatorname{fix}}
\newcommand{\dirsubset}{\subseteq_{dir}}
\newcommand{\lub}{\bigvee}
\newcommand{\glb}{\bigwedge}
\newcommand{\ran}{\operatorname{ran}}
\newcommand{\wkp}{\operatorname{wp}}

\newcommand{\Pred}{\operatorname{Pred}}
\newcommand{\PredS}{\operatorname{Pred_{\leq 1}}}

\newcommand{\Komp}{\operatorname{\mathcal{K}}}
\newcommand{\Atom}{\operatorname{\mathcal{A}}}

\newcommand{\lublambda}{\lub_{\lambda \in \Lambda}}
\newcommand{\lubgamma}{\lub_{\gamma \in \Gamma}}

\newcommand{\Mb}{\mathcal M}
\newcommand{\DM}{\mathcal D_{= 1}}
\newcommand{\SDM}{\mathcal D_{\leq 1}}
\newcommand{\RM}{\mathcal R_{= 1}}
\newcommand{\SRM}{\mathcal R_{\leq 1}}
\newcommand{\EM}{\mathcal E_{= 1}}
\newcommand{\SEM}{\mathcal E_{\leq 1}}
\newcommand{\EA}{\mathbf{EA}}
\newcommand{\EAS}{\EA_\mathbf{s}}
\newcommand{\EMod}{\mathbf{EMod}}
\newcommand{\EModS}{\EMod_\mathbf{s}}
\newcommand{\GEA}{\mathbf{GEA}}
\newcommand{\GEMod}{\mathbf{GEMod}}
\newcommand{\sdcGEMod}{\mathbf{sdcGEMod}}
\newcommand{\dcEModS}{\mathbf{dc}\EModS}
\newcommand{\sdcEModS}{\mathbf{s}\dcEModS}
\newcommand{\Conv}{\mathbf{Conv}}
\newcommand{\SConv}{\mathbf{SubConv}}
\newcommand{\Dcpo}{\mathbf{Dcpo}}

\newcommand{\FdHilb}{\mathbf{FdHilb}}
\newcommand{\CStar}{\mathbf{CStar}}
\newcommand{\WStar}{\mathbf{WStar}}
\newcommand{\MIU}{\CStar_\mathrm{MIU}}
\newcommand{\PU}{\CStar_\mathrm{PU}}
\newcommand{\PSU}{\CStar_\mathrm{PsU}}
\newcommand{\wMIU}{\WStar_\mathrm{MIU}}
\newcommand{\wPU}{\WStar_\mathrm{NU}}
\newcommand{\wPSU}{\WStar_\mathrm{NsU}}
\newcommand{\CPSU}{\mathbf{C}\PSU}
\newcommand{\FdCPSU}{\mathbf{FdC}\PSU}
\newcommand{\NS}{\mathcal{NS}}

\newcommand{\opp}[1]{#1^\mathbf{op}}
\newcommand{\sa}[1]{#1_{\mathrm{sa}}}

\newcommand{\guillemets}[1]{\og #1 \fg{}}
\newcommand{\supernote}[1]{$^\text{#1}$}
\newcommand{\supercite}[1]{\supernote{\cite{#1}}}

\title{On operator algebras in quantum computation}
\author{Mathys Rennela, under the supervision of Bart Jacobs \\ Institute for Computing and Information Sciences, Radboud Universiteit Nijmegen)}
\date{}

\begin{document}

\maketitle

\subsubsection*{The general context}
In the following we discuss how the theory of operator algebras, also called operator theory, can be applied in quantum computer science. From a computer scientist point of view, we will discuss some fundamental results of operator theory and their relevance in the context of domain theory.
The theory of operator algebras originated in functional analysis in the 1930s and was extensively applied in mathematical physics, in order to understand the mathematical foundations of quantum mechanics. In the past 15 years, domain theory was successfully applied to quantum computation, for semantics and verification (see \cite{gay} for an overview of the literature). 

\subsubsection*{The research problem}
Our aim here is to use the theory of operator algebras to study the differences and similarities between probabilistic and quantum computations, by unveiling their domain-theoretic and topological structure. To our knowledge, the deep connection between the theory of operator algebras and domain theory was not fully exploited before. This might be due to the fact that the theory of operator algebras, mostly unknown to computer scientists, was developed way before the theory of domains.

Although there is now a real commercialization of quantum cryptographic systems, it is still unknown if quantum computers exist and moreover, it is yet unknown what such a computer (if any) would look like. However, in our opinion, it is important to provide some formal tools for the design and verification at an early level, to prevent system failure.

\subsubsection*{Our contribution}
Our main contribution is a connection between two different communities: the community of theoretical computer scientists, which use domain theory to study program language semantics (and logic), and the community of mathematicians and theoretical physicists, which use a special class of algebras called W*-algebras to study quantum mechanics. This connection involves a quantum domain theory and a quantum setting for a weakest precondition calculus, described categorically.  We will also introduce the notion of effect algebras, in order to associate a predicate logic to computations.

\subsubsection*{Arguments supporting its validity}
We only assume that W*-algebras are suitable for representing quantum computations, which is a common assumption in mathematical physics. During our research, we have found in the standard literature of W*-algebras some theorems suggesting that W*-algebras can be successfully used to provide a semantics for quantum computations, although it was not explicitly expressed in domain-theoretic terms. 


\subsubsection*{Future work}
Further research will concentrate on applying these brand new settings to the formal representation of quantum cryptographic protocols in a computer algebra tool, for verification and simulation.


Peter Selinger gave a denotational semantics of a quantum programming language QPL, which features loops, recursive procedures and structured data types $\cite{selinger}$. In this semantics, the type of bits and qubits is defined by bounded operators on finite-dimensional Hilbert spaces, which are W*-algebras. Thus, the denotational semantics of QPL do not consider at all infinite-dimensional W*-algebras. It turns out that quantum streams (i.e. infinite sequences of qubits) can be semantically denoted as infinite tensor products of W*-algebras, which are necessarily infinite-dimensional W*-algebras, see \cite[III.3.1.4]{blackadar}. This is an interesting point of start for future research.

\tableofcontents

\newpage

\section{Preliminaries}
From now on, we will assume that the reader is familiar with category theory. Otherwise, an introduction to category theory can be found in \cite{awodey,maclane}.

In this section, assuming basic knowledge of linear algebra, we will briefly recall standard notions of topology and order theory, and then lay the foundation for further discussion on quantum computation, providing some standards definitions in the theory of operator algebras. The interested reader will find in Appendix \ref{appendix-correspondence} a detailed correspondence between operator theory and order theory.
 
\subsection{Order theory}
\begin{mydef}
A set $P$ together with a partial order $\leq$ is called a partially ordered set (or poset).

A bottom of $P$ is an element $\perp \in P$ such that $\perp \leq x$ for every $x \in P$.

A top of $P$ is an element $\top \in P$ such that $x \leq \top$ for every $x \in P$.

A bounded poset is a poset that has both a top and a bottom.
\end{mydef}

For every poset, it is clear that if a top (or a bottom) exist, then it is unique.

\begin{mydef}
In a poset $P$, the down set of an element $x \in P$ is the set 
\[
  {\downarrow} x = \set{y \in P \mid y \leq x}.
\]
\end{mydef}

\begin{mydef}
A poset $(P,\leq)$ is a chain if every pair of elements of $P$ is comparable: 
\[
 \forall x, y \in P, x \leq y \text{ or } y \leq x.
\]
\end{mydef}

We denote respectively $a \vee b$ and $a \wedge b$ the least upper bound (or join) and the greatest lower bound (or meet) of two elements $a$ and $b$ of a poset, if they exist.
For any subset $X$, the join (resp. the meet) of $X$ is denoted by $\bigvee X$ (resp. denoted by $\bigwedge X$). 

\begin{mydef}
A non-empty subset $\Delta$ of a poset $P$ is called directed if every pair of elements of $\Delta$ has an upper bound. We denote it by $\Delta \dirsubset P$.
\end{mydef}

\begin{mydef}[Completeness]
Let $P$ be a poset.
\begin{itemize}
 \item $P$ is a directed-complete partial order (dcpo) if each directed subset has a least upper bound.
 \item $P$ is bounded-complete if for each subset $S \subseteq P$, $S$ has some upper bound implies that $S$ has a least upper bound.
 \item $P$ is chain-complete if all chains in $P$ have a least upper bound. 
\end{itemize}
\end{mydef}

It can be proved with Zorn's lemma that a poset is chain-complete if and only if it is directed-complete.

\begin{mydef}
Let $\phi : P \to Q$ be a function between two posets $P$ and $Q$.

$\phi$ is monotonic if $x \leq_P y \implies \phi(x) \leq_Q \phi(y)$ for all $x, y \in P$.

$\phi$ is Scott-continuous if for every directed subset $\Delta \dirsubset P$ with least upper bound $\lub \Delta \in P$, the subset $\phi(\Delta)$ of $Q$ is directed with least upper bound  $\bigvee \phi(\Delta) = \phi(\bigvee \Delta)$.

The set of all Scott-continuous maps from $P$ to $Q$ is denoted by $[P \rightarrow Q]$ and can be ordered pointwise: \[
 f \leq g \text{ if and only if } \forall x \in P, f(x) \leq_Q g(x) \quad (f, g \in [P \rightarrow Q])
\]

We denote by $\Dcpo$ the category with dcpos as objects and Scott-continuous maps as morphisms.
\end{mydef}

\begin{mytheo}[\cite{davey-priestley}, Theorem 8.9]\label{cont-dcpo}
Let $P$ and $Q$ be two posets.

The poset $[P \rightarrow Q]$ is a dcpo whenever $P$ and $Q$ are dcpos. 
\end{mytheo}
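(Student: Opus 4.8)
The plan is to show that the pointwise-ordered set $[P \rightarrow Q]$ of Scott-continuous maps inherits directed completeness from $Q$. I would proceed in three stages: first exhibit a candidate supremum for an arbitrary directed family of continuous maps, then verify that this candidate is itself Scott-continuous (so that it actually lives in $[P \rightarrow Q]$), and finally confirm that it is the least upper bound in the pointwise order.

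First I would fix a directed subset $\mathcal{F} \dirsubset [P \rightarrow Q]$ and define the candidate supremum $g$ pointwise by $g(x) = \bigvee_{f \in \mathcal{F}} f(x)$ for each $x \in P$. The point here is that for each fixed $x$, the set $\set{f(x) \mid f \in \mathcal{F}}$ is a directed subset of $Q$: given $f_1, f_2 \in \mathcal{F}$, directedness of $\mathcal{F}$ in the pointwise order yields some $h \in \mathcal{F}$ above both, and then $h(x)$ is an upper bound of $f_1(x)$ and $f_2(x)$ in $Q$. Since $Q$ is a dcpo, this join exists, so $g$ is well-defined. Monotonicity of $g$ is immediate, since each $f$ is monotonic and suprema of directed sets respect the order.

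The main obstacle is the second stage: verifying that $g$ is Scott-continuous, i.e. that $g$ preserves the supremum of an arbitrary directed subset $\Delta \dirsubset P$. This amounts to an interchange-of-suprema argument. Writing out both sides, I must show
\[
  g\Bigl(\bigvee \Delta\Bigr) = \bigvee_{f \in \mathcal{F}} f\Bigl(\bigvee \Delta\Bigr) = \bigvee_{f \in \mathcal{F}} \bigvee_{d \in \Delta} f(d)
  \quad\text{equals}\quad
  \bigvee_{d \in \Delta} g(d) = \bigvee_{d \in \Delta} \bigvee_{f \in \mathcal{F}} f(d),
\]
where the first equality uses Scott-continuity of each individual $f$. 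The two double joins are taken over the same doubly-indexed family $\set{f(d) \mid f \in \mathcal{F}, d \in \Delta}$, so they coincide by the general fact that iterated directed suprema agree with the join of the whole family (one checks each side is the least upper bound of that family). I would also note that $g(\Delta)$ is directed in $Q$, which follows from monotonicity of $g$ and directedness of $\Delta$, so the right-hand supremum genuinely is a directed join as required by the definition of Scott-continuity.

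Finally I would check that $g$ is the least upper bound of $\mathcal{F}$ in $[P \rightarrow Q]$. It is an upper bound because $f(x) \leq g(x)$ for every $x$ and every $f \in \mathcal{F}$, by construction of the pointwise join. For minimality, if $h \in [P \rightarrow Q]$ satisfies $f \leq h$ for all $f \in \mathcal{F}$, then $f(x) \leq h(x)$ for every $x$, so $h(x)$ is an upper bound of $\set{f(x) \mid f \in \mathcal{F}}$ and hence $g(x) = \bigvee_{f} f(x) \leq h(x)$; as this holds pointwise, $g \leq h$. This establishes that $[P \rightarrow Q]$ is a dcpo. I expect the bulk of the genuine work to reside in the continuity verification, as the well-definedness and extremality claims reduce to routine pointwise manipulations.
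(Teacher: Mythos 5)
Your proof is correct. The paper itself gives no proof of this statement --- it is quoted directly from \cite{davey-priestley} (Theorem 8.9) --- and your argument (pointwise directed joins, the interchange-of-iterated-directed-suprema step to verify Scott-continuity of the candidate, and the routine extremality check) is precisely the standard one found in that reference, so there is nothing to compare against in the paper and no gap to report.
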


\begin{mydef}
Let $P$ and $Q$ be two posets with bottoms $\perp_P$ and $\perp_Q$ respectively.

A function $\phi : P \to Q$ is strict if $\phi(\perp_P)=\perp_Q$.

We denote by $\Dcpo_\perp$ (resp. $\Dcpo_{\perp !}$) the category of dcpos with bottoms and Scott-continuous maps (resp. strict Scott-continuous maps). 
\end{mydef}

The product $D_1 \times \cdots \times D_n$ of a family of dcpos $D_1, \cdots, D_n$ is defined by the $n$-tuples $(x_1,\cdots,x_n)$ where $x_i \in D_i$ for every $1 \leq i \leq n$. The partial order is defined coordinatewise by $(x_1,\cdots,x_n) \leq (y_1,\cdots,y_n)$ iff $x_i \leq y_i$ for every $1 \leq i \leq n$. It is known that the product of dcpos forms itself a dcpo where the least upper bounds are calculated coordinatewise. Moreover, the categories $\Dcpo$ and $\Dcpo_\perp$ are cartesian closed, whereas $\Dcpo_{\perp!}$ is only monoidal closed, see \cite{abramsky-jung}.

\subsection{Topology}
\begin{mydef}
Let $X$ be a nonempty set. A topology on $X$ is a subset $\tau$ of $\PX$ such that:
\begin{itemize}
 \item $X$ and $\emptyset$ are in $\tau$.
 \item If two sets $U$ and $V$ are in $\tau$, then $U \cap V$ is in $\tau$.
 \item If $I$ is the index set of a family $(U_i)_{i \in I}$ of elements of $\tau$, then $\bigcup_{i \in I} U_i \in \tau$.
\end{itemize}

A topological space $(X,\tau)$ is a set $X$ with a family $\tau$ that satisfies these properties. It is common to say that $X$ is a topological space when $\tau$ is understood from the context.
 
The elements of $X$ are called points and the elements of $\tau$ are called open sets.
\end{mydef}

\begin{mydef}
Let $(X,\tau)$ be a topological space. 

A subbase for $\tau$ is a subcollection $B$ of $\tau$ which generates $\tau$. That is to say, $\tau$ is the smallest topology which contains $B$: if a topology $\tau'$ on $X$ contains $B$, then it also contains $\tau$.

A subset $Y \subseteq X$ is a closed set if $X \setminus Y$ is an open set.

A subset $Y \subseteq X$ is a subspace of $X$ if $(Y,\tau')$ is a topological space, where $\tau'=\set{U \cap Y \mid U \in \tau}$.

A net is a function $(x_\lambda)_{\lambda \in \Lambda}$ from some directed set $\Lambda$ to $X$. 
\end{mydef}

A net is a generalization of the notion of sequence, which can be seen as a net with $A=\N$. Nets are used in topology to consider continuity for functions between topological spaces, since sequences do not fully encode all the information about such functions. In fact, the range of a function between topological spaces is not always the natural numbers but can be any topological space.

In a topological space $X$, a neighbourhood of a point $x \in X$ is a subset $V$ of $X$ such that $x \in U \subseteq V$ where $U$ is an open set of $X$. A (countable) basis $\mathcal{B}_x$ at a point $x \in X$ is a collection of (countable) neighbourhoods of $x$ such that, for every neighbourhood $V$ of $x$, there is a neighbourhood $V'$ in $\mathcal{B}_x$ such that $V' \subseteq V$. $X$ is said to be first-countable when every point has a countable basis. When $X$ is not first-countable, there might be some points $x \in X$ with an uncoutable basis. It follows that sequences, which are countable by definition, might not succeed to get close enough to $x$.

Indeed, a function $f : X \to Y$ between topological spaces is continuous at a point $x \in X$ if and only if, for every net $(x_\lambda)_{\lambda \in \Lambda}$ with $\lim x_\lambda = x$, we have $\lim f(x_\lambda) = f(x)$ \cite{willard}. This statement is generally not true if we replace "net" by "sequence", since we have to allow for more directed sets than just the natural numbers when $X$ is not first-countable.

\subsection{C*-algebras} 
\begin{mydef}

A Banach space is a normed vector space where every Cauchy sequence converges.\\
A Banach algebra is a linear associative algebra $A$ over $\C$ with a norm $\norm{\cdot}$ such that: \begin{itemize}
 \item The norm $\norm{\cdot}$ is submultiplicative: $\forall x, y \in A, \norm{xy} \leq \norm{x}\norm{y}$
 \item $A$ is a Banach space under the norm $\norm{\cdot}$.
\end{itemize}
\end{mydef}

\begin{mydef}
A unit is an element of a Banach algebra $A$ such that $a1=1a=a$ for every element $a \in A$.

A Banach algebra $A$ is unital if it has a unit $1$ and satifies $\norm{1}=1$.
\end{mydef}

\begin{mydef}
A *-algebra is a linear associative algebra $A$ over $\C$ with an operation $(-)^* : A \rightarrow A$ such that for all $x$, $y$ in $A$:
\begin{align*}
 (x^*)^*=x && (x+y)^*=(x^*+y^*) && (xy)^*=y^*x^* && (\lambda x)^* = \overline{\lambda}x^* \quad  (\lambda \in \C)
\end{align*}

A *-homomorphism of *-algebras is a linear map that preserves all this structure.
\end{mydef}

\begin{mydef}
A C*-algebra is a Banach *-algebra $A$ such that $\norm{x^*x} = \norm{x}^2$ for all $x \in A$.

This identity is sometimes called the C*-identity, and implies that every element $x$ of a C*-algebra is such that $\norm{x} = \norm{x^*}$.
\end{mydef}

We will now assume that C*-algebras are unital (i.e. have a unit element denoted $1$).

\begin{mydef}
Let $A$ be a C*-algebra.
\begin{itemize}
 \item An element $x \in A$ is self-adjoint (or hermetian) if $x=x^*$. \\
       We write $\sa{A} \hookrightarrow A$ for the subset of self-adjoint elements of $A$.
 \item An element $x \in A$ is positive if it can be written in the form $x=y^* y$, where $y \in A$. \\
       We write $A^+ \hookrightarrow A$ for the subset of positive elements of $A$.
\end{itemize}
\end{mydef}

Every self-adjoint element of a C*-algebra $A$ can be written as difference $x = x_+ - x_-$ where $x_+,x_- \in A^+$, with $\norm{x_+}, \norm{x_-} \leq \norm{x}$. Moreover, an arbitrary element $x$ of a C*-algebra $A$ can be written as linear combination of four positive elements $x_i \in A$ such that $x = x_1 - x_2 + i x_3 - i x_4$ with $\norm{x_i} \leq \norm{x}$, see \cite[II.3.1.2]{blackadar}.

\begin{mydef}
A linear map $f : A \rightarrow B$ of C*-algebras is a positive map if it preserves positive elements, i.e. $\forall x \in A^+, f(x) \in B^+$.

This means that $f$ restricts to a function $A^+ \to B^+$. Alternatively, $\forall x \in A, \exists y \in B,  f(x^*x) = y^*y$.
\end{mydef}

For every C*-algebra, the subset of positive elements is a convex cone and thus induces a partial order structure on self-adjoint elements, see \cite[Definition 6.12]{takesaki1}. That is to say, one can define a partial order on self-adjoint elements of a C*-algebra $A$ as the binary relation $\leq$ defined for $x, y \in \sa{A}$ by: $x \leq y$ if and only if $y-x \in A^+$. By convention, one writes $x \geq 0$ when $x \in A^+$.

\begin{mylem}\label{presrv-order}
A positive map of C*-algebras preserves the order $\leq$ on self-adjoint elements.
\end{mylem}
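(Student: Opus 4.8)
The plan is to reduce the statement to a one-line application of linearity together with the very definition of the order. Let $f : A \to B$ be a positive map of C*-algebras and take self-adjoint elements $x, y \in \sa{A}$ with $x \leq y$. By the definition of the order on self-adjoint elements, this hypothesis means precisely that $y - x \in A^+$.

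Before applying $f$, I would first verify that the conclusion $f(x) \leq f(y)$ is even meaningful, since the relation $\leq$ is only defined on $\sa{B}$; this requires that $f$ send self-adjoint elements to self-adjoint elements. To see this, recall that every positive element $z^*z$ is self-adjoint, so the positivity of $f$ already gives a restriction $A^+ \to B^+ \subseteq \sa{B}$. Invoking the decomposition of an arbitrary self-adjoint element as $x = x_+ - x_-$ with $x_+, x_- \in A^+$, stated earlier in this section, linearity of $f$ yields $f(x) = f(x_+) - f(x_-)$, a difference of self-adjoint elements and hence itself self-adjoint. Thus $f$ maps $\sa{A}$ into $\sa{B}$.

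The core of the argument is then immediate: since $f$ is linear and positive, from $y - x \in A^+$ we obtain $f(y) - f(x) = f(y - x) \in B^+$, and by the definition of the order this is exactly $f(x) \leq f(y)$.

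I do not expect any genuine obstacle here; the entire content is the interplay between linearity and the cone-based definition of $\leq$. The only point deserving explicit mention is the well-definedness observation of the second paragraph, namely that positivity alone already forces $f$ to preserve self-adjointness, so that the inequality $f(x) \leq f(y)$ is a legitimate statement about elements of $\sa{B}$ rather than a type error.
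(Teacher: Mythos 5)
Your proposal is correct and follows the same argument as the paper: $x \leq y$ means $y-x \in A^+$, and linearity plus positivity give $f(y)-f(x)=f(y-x)\in B^+$. The extra paragraph checking that $f$ maps $\sa{A}$ into $\sa{B}$ (via the decomposition $x = x_+ - x_-$) is a sensible well-definedness remark that the paper leaves implicit, but it does not change the route.
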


\begin{proof}
Let $f : A \to B$ be a positive map of C*-algebras and $x, y \in \sa{A}$.

If $x \leq y$, then $y-x \in A^+$. Thus $f(y)-f(x)=f(y-x) \in B^+$ and therefore, $f(x) \leq f(y)$.
\end{proof}


\begin{mydef}
Let $f : A \rightarrow B$ be a linear map between unital C*-algebras $A$ and $B$.

\begin{itemize}
 \item $f$ is a multiplicative map if $\forall x,y \in A, f(xy)=f(x)f(y)$.
 \item $f$ is an involutive map if $\forall x \in A, f(x^*)=f(x)^*$.
 \item $f$ is a unital map if it preserves the unit, i.e. $f(1) = 1$. 
 \item $f$ is a sub-unital map if $0 \leq f(1) \leq 1$.
\end{itemize}
\end{mydef}

\begin{mydef}
We shall define three categories $\MIU$, $\PU$ and $\PSU$ with C*-algebras as objects but different morphisms: 

\begin{itemize}
 \item A morphism $f : A \to B$ in $\MIU$ is a \textbf{M}ultiplicative \textbf{I}nvolutive \textbf{U}nital map (or MIU-map).
 \item A morphism $f : A \to B$ in $\PU$ is a \textbf{P}ositive \textbf{U}nital map (or PU-map).
 \item A morphism $f : A \to B$ in $\PSU$ is a \textbf{P}ositive \textbf{s}ub-\textbf{U}nital map (or PsU-map). 
\end{itemize}
\end{mydef}

\begin{mylem}
There are inclusions $\MIU \hookrightarrow \PU \hookrightarrow \PSU$.
\end{mylem}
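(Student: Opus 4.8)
The plan is to observe first that all three categories share the same class of objects (namely unital C*-algebras), so an inclusion is determined entirely at the level of morphisms: I need only check that every MIU-map is a PU-map and that every PU-map is a PsU-map. Once these two morphism-class inclusions are established, functoriality is automatic, since identities and composites of these maps are computed identically in each of the three categories, and the identity map of any C*-algebra is multiplicative, involutive, unital, positive and sub-unital.

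For the inclusion $\MIU \hookrightarrow \PU$, I would take an MIU-map $f : A \to B$ and verify that it is positive and unital. Unitality is part of the definition of an MIU-map, so only positivity remains to be shown. Given $x \in A^+$, I would use the definition of a positive element to write $x = y^* y$ for some $y \in A$, and then apply multiplicativity followed by involutivity to obtain $f(x) = f(y^* y) = f(y^*)f(y) = f(y)^* f(y)$, which is positive by definition. Hence $f$ maps $A^+$ into $B^+$ and is therefore a PU-map.

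For the inclusion $\PU \hookrightarrow \PSU$, I would take a PU-map $f : A \to B$, which is already positive, and check only that it is sub-unital, i.e. that $0 \leq f(1) \leq 1$. Since $f$ is unital we have $f(1) = 1$; the lower bound follows from $1 = 1^* 1 \in B^+$, which gives $0 \leq 1 = f(1)$, and the upper bound $f(1) = 1 \leq 1$ holds trivially because $1 - 1 = 0 \in B^+$. Thus $f$ is a PsU-map.

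The argument is routine throughout, and I expect no serious obstacle; the one place where the definitions must genuinely be combined is the positivity of MIU-maps, where multiplicativity and involutivity are used together on the decomposition $x = y^* y$. The remaining verifications, concerning preservation of identities and composition, are immediate since the underlying functions and their composites coincide across $\MIU$, $\PU$ and $\PSU$.
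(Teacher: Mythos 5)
Your proof is correct and follows essentially the same route as the paper: positivity of an MIU-map is obtained by applying multiplicativity and involutivity to $f(y^*y)=f(y)^*f(y)$, and the inclusion $\PU \hookrightarrow \PSU$ is the trivial observation that $f(1)=1$ implies $0 \leq f(1) \leq 1$. The additional remarks on identities and composition are harmless padding that the paper leaves implicit.
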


\begin{proof}
Let $f : A \to B$ be a linear map between two C*-algebras $A$ and $B$.

If $f : A \to B$ is a MIU-map, then for every $x \in A$, $f(x^*x)=f(x^*)f(x)=(f(x))^*f(x)$. It follows that $f$ is positive ($\forall x \in A$, $f(x^*x)=y^*y$ where $y=f(x)\in B$).

If $f : A \to B$ is a PU-map, $f(1)=1$ and therefore $0 \leq f(1) \leq 1$. Hence, $f$ is a PsU-map.
\end{proof}

\begin{mydef}
A state on a C*-algebra $A$ is a PsU-map $\phi : A \to \C$. 
The state space of a C*-algebra $A$ is the hom-set $\PSU(A,\C)$.
\end{mydef}

\subsection{Hilbert spaces}
\begin{mydef}
A Hilbert space is a Banach space $H$ together with an inner product $\prodscal{\cdot}{\cdot}$ and a norm defined by $\norm{x}^2=\prodscal{x}{x}$ ($x \in H$).
\end{mydef}

\begin{myprop}[Cauchy-Schwarz inequality]
Let $H$ be a Hilbert space. 

For every $x, y \in H$, $\abs{\prodscal{x}{y}} \leq \norm{x}\norm{y}$.
\end{myprop}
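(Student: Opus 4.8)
The plan is to exploit the only two facts available at this point: that the inner product is positive semidefinite, so $\prodscal{z}{z} = \norm{z}^2 \geq 0$ for every $z \in H$, and that it is sesquilinear. First I would dispose of the degenerate case $y = 0$, where $\prodscal{x}{y} = 0$ and $\norm{y} = 0$, so that both sides of the inequality vanish; hence I may assume $y \neq 0$, and in particular $\norm{y} > 0$.

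For the main case, I would introduce the auxiliary vector $z = x - \lambda y$ with a scalar parameter $\lambda \in \C$ still to be fixed, and start from the inequality $0 \leq \prodscal{z}{z}$. Expanding by sesquilinearity produces four terms: $\norm{x}^2$, a term linear in $\lambda$, a term in $\overline{\lambda}$, and $\abs{\lambda}^2 \norm{y}^2$. The idea is then to choose $\lambda$ so as to minimise the right-hand side — concretely $\lambda = \prodscal{x}{y}/\norm{y}^2$ (up to a conjugate, depending on which argument the inner product is linear in) — which plays the two cross terms off against each other and leaves exactly $0 \leq \norm{x}^2 - \abs{\prodscal{x}{y}}^2/\norm{y}^2$. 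Multiplying through by $\norm{y}^2 > 0$ and taking square roots yields $\abs{\prodscal{x}{y}} \leq \norm{x}\norm{y}$, as required.

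The delicate point is the bookkeeping in the complex case: one must track which argument of $\prodscal{\cdot}{\cdot}$ is conjugate-linear, so that the two cross terms are genuinely complex conjugates of one another and combine into a single $-2\Re$ contribution, and so that the minimising $\lambda$ is the correct complex number rather than a real one. A cleaner variant that sidesteps some of this is to reduce first to the case $\prodscal{x}{y} \in \R$ with $\prodscal{x}{y} \geq 0$, by replacing $x$ with $e^{i\theta} x$ for a suitable phase $\theta$ (which alters neither $\norm{x}$ nor $\abs{\prodscal{x}{y}}$), and then to run the argument above with a real parameter $\lambda$. Either way the computation is routine once the optimal $\lambda$ is identified; the only real obstacle is keeping the conjugation conventions consistent with the inner product as fixed in the definition.
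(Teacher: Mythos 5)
Your argument is the standard and correct proof of the Cauchy--Schwarz inequality: expand $0 \leq \prodscal{x-\lambda y}{x-\lambda y}$ by sesquilinearity, optimise $\lambda$, and handle the conjugation bookkeeping (or sidestep it by a phase rotation). The paper itself states this proposition as a standard fact and gives no proof, so there is nothing to compare against; your write-up fills that gap correctly, and the one point worth making explicit in a final version is the conjugation convention for $\prodscal{\cdot}{\cdot}$, which you already flag.
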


We now consider the situation of operators (i.e. linear maps) $H \rightarrow H$ on a Hilbert space $H$.

\begin{mydef}
A linear map $f : A \rightarrow B$ between Banach spaces is a bounded operator if there exists a $k > 0$ such that $\norm{f(a)}_B \leq k \cdot \norm{a}_A$ for every $a$ of $A$. 

The collection of all bounded operators between two Hilbert spaces $H_1$ and $H_2$ is denoted $\B(H_1,H_2)$. For every Hilbert space $H$, we denote by $\B(H)$ the collection $\B(H,H)$.

The set of effects $\Ef(H)$ on a Hilbert space $H$ is the set of positive bounded operators below the unit, i.e. $\Ef(H)=\set{T \in \B(H) \mid 0 \leq T \leq 1}$.
\end{mydef}

\begin{mydef}
Let $H$ be a Hilbert space.
For every bounded operator $T \in B(H)$, we define the following sets:
\begin{align*}
 \text{\textbf{Kernel:}} \quad \ker T = \set{x \in H \mid Tx = 0} &&
 \text{\textbf{Range:}} \quad \ran T = \set{y \in H \mid \exists x \in H, y = T x}
\end{align*}
\end{mydef}

For every Hilbert space $H$, it is known that $\B(H)$ is a Banach space and therefore a C*-algebra. Self-adjoint and positive elements of $\B(H)$ can be defined alternatively through the inner product of $H$, as shown by the two following theorems\footnote{We will deliberately admit these standard theorems, as their proofs involve arguments coming from spectral theory, which is totally out of our scope. For more details, we refer the reader to \cite[II.2.12,VIII.3.8]{conway}.}, taken from \cite{conway}:

\begin{mytheo}\label{self-adjoint-BH}
Let $H$ be a Hilbert space and $T \in \B(H)$. 

Then $T$ is self-adjoint if and only if $\forall x \in H, \prodscal{Tx}{x} \in \R$.
\end{mytheo}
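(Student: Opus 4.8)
The plan is to derive the equivalence directly from the defining relation of the adjoint, $\prodscal{Tx}{y} = \prodscal{x}{T^*y}$ for all $x,y \in H$, together with the conjugate symmetry $\overline{\prodscal{x}{y}} = \prodscal{y}{x}$ of the inner product. No spectral theory is in fact required for this particular equivalence (unlike the companion positivity statement), so the whole argument can be kept elementary and purely algebraic.

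For the forward implication I would assume $T = T^*$. Then for any $x \in H$ I compute $\prodscal{Tx}{x} = \prodscal{x}{T^*x} = \prodscal{x}{Tx} = \overline{\prodscal{Tx}{x}}$, so the scalar $\prodscal{Tx}{x}$ equals its own complex conjugate and hence lies in $\R$.

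For the converse I would set $S = T - T^*$ and reduce the goal to proving $S = 0$. Using the defining relation for $T^*$ (and $(T^*)^* = T$) one gets $\prodscal{T^*x}{x} = \prodscal{x}{Tx} = \overline{\prodscal{Tx}{x}}$; since the hypothesis states $\prodscal{Tx}{x} \in \R$, this common value equals $\prodscal{Tx}{x}$ itself, and therefore $\prodscal{Sx}{x} = \prodscal{Tx}{x} - \prodscal{T^*x}{x} = 0$ for every $x \in H$. It then remains to show that an operator whose diagonal quadratic form $\prodscal{Sx}{x}$ vanishes identically must itself be zero.

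This last step is the crux, and it is precisely where the complex scalars become essential. I would establish it by polarization: expanding $\prodscal{S(x+y)}{x+y} = 0$ and cancelling the (vanishing) diagonal terms gives $\prodscal{Sx}{y} + \prodscal{Sy}{x} = 0$, and repeating the computation with $y$ replaced by $iy$ yields the second relation $\prodscal{Sy}{x} = \prodscal{Sx}{y}$; combining the two forces $\prodscal{Sx}{y} = 0$ for all $x,y \in H$. Taking $y = Sx$ then gives $\norm{Sx}^2 = 0$, so $S = 0$ and hence $T = T^*$. The one delicate point is exactly this polarization: the substitution $y \mapsto iy$ has no analogue over the reals, which is why the statement can fail on a real Hilbert space and why the complex structure of $H$ must be used.
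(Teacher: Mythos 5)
Your proof is correct and complete. Note, however, that the paper does not actually prove this theorem: it is deliberately admitted, with a footnote claiming that the proofs of Theorems \ref{self-adjoint-BH} and \ref{positive-BH} ``involve arguments coming from spectral theory'' and deferring to Conway. Your argument shows that, for this particular equivalence, that caveat is unnecessary: the forward direction is a one-line conjugate-symmetry computation, and the converse reduces via $S = T - T^*$ to the purely algebraic fact that a bounded operator on a \emph{complex} Hilbert space with identically vanishing quadratic form is zero, established by polarization with the substitution $y \mapsto iy$. You are also right to flag that this substitution is the one place where the complex structure is essential and that the statement fails over $\R$ (e.g.\ a rotation by $\pi/2$ in $\R^2$ has $\prodscal{Tx}{x} = 0$ for all $x$ yet is not self-adjoint). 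The companion statement, Theorem \ref{positive-BH}, is where spectral-theoretic input (existence of positive square roots) genuinely enters; your proof correctly isolates the elementary half of the pair.
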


\begin{mytheo}\label{positive-BH}
Let $H$ be a Hilbert space and $T \in \B(H)$. 

Then $T$ is positive if and only if $T$ is self-adjoint and $\forall x \in H, \prodscal{Tx}{x} \geq 0$.
\end{mytheo}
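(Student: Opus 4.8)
The plan is to prove the two implications separately, exploiting that one direction is elementary while the other is where the genuine analytic content sits. For the forward direction, suppose $T$ is positive. By the C*-algebraic definition of positivity applied to $\B(H)$, there is some $S \in \B(H)$ with $T = S^* S$. Then $T^* = (S^* S)^* = S^* S = T$, so $T$ is self-adjoint (consistently with Theorem \ref{self-adjoint-BH}), and for every $x \in H$ the defining property of the adjoint gives $\prodscal{Tx}{x} = \prodscal{S^* S x}{x} = \prodscal{Sx}{Sx} = \norm{Sx}^2 \geq 0$. This half needs no spectral theory.

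For the backward direction, assume $T$ is self-adjoint with $\prodscal{Tx}{x} \geq 0$ for all $x \in H$. The aim is to write $T = S^* S$, and the natural candidate is a positive square root $S = T^{1/2}$. First I would check that the spectrum $\sigma(T)$ lies in $[0, \infty)$: since $T$ is self-adjoint, $\sigma(T) \subseteq \R$, and for any real $\lambda < 0$ the estimate $\prodscal{(T - \lambda)x}{x} = \prodscal{Tx}{x} - \lambda \norm{x}^2 \geq |\lambda|\, \norm{x}^2$ combined with the Cauchy--Schwarz inequality yields $\norm{(T - \lambda)x} \geq |\lambda|\, \norm{x}$, so $T - \lambda$ is bounded below; being self-adjoint it then has dense range and is invertible, so no negative $\lambda$ belongs to $\sigma(T)$. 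With the spectrum nonnegative, the continuous functional calculus applied to $t \mapsto \sqrt{t}$ (continuous on $\sigma(T)$) produces a self-adjoint $S = T^{1/2}$ with $S^2 = T$, whence $S^* S = S S = T$ exhibits $T$ as positive.

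The main obstacle is precisely this construction of the square root: it rests on the continuous functional calculus, that is, on the spectral theory of self-adjoint operators, which is exactly what the surrounding footnote declares out of scope. An alternative that avoids explicit spectral measures would build $T^{1/2}$ as the norm limit of the recursion $B_0 = 0$, $B_{n+1} = B_n + \tfrac{1}{2}(T' - B_n^2)$ for the rescaled operator $T' = T / \norm{T}$, proving by induction that the $B_n$ are positive, commute with $T'$, and increase monotonically to a limit $B$ with $B^2 = T'$; this relies only on monotone convergence of bounded operators together with the hypothesis $\prodscal{Tx}{x} \geq 0$. Either way, the crux is verifying that the candidate operator is a genuine square root and that nonnegativity of $\prodscal{Tx}{x}$ is what guarantees its existence.
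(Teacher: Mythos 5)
Your proof is correct, but be aware that the paper does not prove this theorem at all: it is deliberately admitted as standard, with a footnote delegating the argument to \cite[II.2.12, VIII.3.8]{conway} precisely because the nontrivial direction needs spectral theory, which the author declares out of scope. Your forward direction ($T=S^{*}S$ gives $T^{*}=T$ and $\prodscal{Tx}{x}=\norm{Sx}^{2}\geq 0$) is elementary and complete. Your backward direction is the standard route: the estimate $\norm{(T-\lambda)x}\geq\abs{\lambda}\norm{x}$ for $\lambda<0$, together with the observation that a bounded-below self-adjoint operator is injective with closed dense range, correctly places $\sigma(T)\subseteq[0,\infty)$, and the continuous functional calculus then produces $S=T^{1/2}$ with $S^{*}S=T$. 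Your alternative via the iteration $B_{0}=0$, $B_{n+1}=B_{n}+\tfrac{1}{2}(T'-B_{n}^{2})$ is also sound and is arguably the more interesting choice here, because the passage to the limit of the increasing bounded sequence $(B_{n})$ is exactly Vigier's theorem, which this paper proves as Lemma \ref{strong-lub}; so that route would establish the theorem using machinery the paper actually develops. Two caveats on it: the convergence obtained this way is a strong (not norm) limit, and showing that the $B_{n}$ are increasing and bounded by $1$ is itself delicate --- the usual device is to verify that $1-B_{n}$ is a polynomial in $1-T'$ with nonnegative coefficients, since one cannot yet invoke the order-theoretic facts about products of commuting positive operators without circularity. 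Either way, you have correctly identified that the existence of the positive square root is the entire content of the hard direction.
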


\subsection{W*-algebras}
In this section, we investigate some topological structures of bounded operators on Hilbert spaces, in order to define a special class of C*-algebras, known as W*-algebras (or von Neumann algebras), that were introduced in the 1930s and 1940s in a series of papers by Murray and von Neumann \cite{murray-von-neumann}, and latter used by Girard for his Geometry of Interaction \cite{girard-GoI5}.

There are several standard topologies that one can define on $\B(H)$ (see \cite{takesaki1,blackadar} for an overview).
 
\begin{mydef}
The operator norm $\norm{T}$ is defined for every bounded operator $T$ in $\B(H)$ by:
\[ \norm{T} = \sup\set{\norm{T(x)} \mid x \in H, \norm{x} \leq 1}. \]

The norm topology (or uniform topology) is the topology induced by the operator norm on $\B(H)$.

A subbase for this topology is given by the sets $\set{ B \in \B(H) \mid \norm{A - B} < \epsilon}$ where $A \in \B(H)$ and $\epsilon > 0$.

A sequence of bounded operators $(T_n)$ converges to a bounded operator $T$ in this topology if and only if $\norm{T_n-T} \underset{n \rightarrow \infty}{\longrightarrow} 0$.
\end{mydef}

\begin{mydef}
The strong operator topology (or SOT) on $\B(H)$ is the topology of pointwise convergence in the norm of $H$: a net of bounded operators $(T_\lambda)_{\lambda \in \Lambda}$ converges to a bounded operator $T$ in this topology if and only if $\norm{(T_\lambda - T)x} \longrightarrow 0$ for each $x \in H$. In that case, $T$ is said to be strongly continuous (or SOT-continuous).

A subbase for this topology is given by the sets $\set{ B \in \B(H) \mid \norm{(A - B)x} < \epsilon}$ where $x \in H$, $A \in \B(H)$ and $\epsilon > 0$.
\end{mydef}

\begin{mydef}
The weak operator topology (or WOT) on $\B(H)$ is the topology of pointwise weak convergence in the norm of $H$: a net of bounded operators $(T_\lambda)_{\lambda \in \Lambda}$ converges to a bounded operator $T$ in this topology if and only if $\prodscal{(T_\lambda - T)x}{y} \longrightarrow 0$ for $x, y \in H$.  In that case, $T$ is said to be weakly continuous (or WOT-continuous).

A subbase for this topology is given by the sets $\set{ B \in \B(H) \mid \prodscal{(A - B)x}{y} < \epsilon}$ where $x,y \in H$, $A \in \B(H)$ and $\epsilon > 0$.
\end{mydef}

The word "operator" is often omitted.

\begin{myprop}
Let $H$ be a Hilbert space. The weak operator topology on $\B(H)$ is weaker than the strong operator topology on $\B(H)$.
\end{myprop}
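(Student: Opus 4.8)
The plan is to show that the weak operator topology is \emph{coarser} than the strong operator topology, that is, that every WOT-open subset of $\B(H)$ is also SOT-open, so that $\tau_{\mathrm{WOT}} \subseteq \tau_{\mathrm{SOT}}$ (``weaker'' meaning ``fewer open sets''). Since a subbase generates a topology, it suffices to prove that each WOT-subbasic set $U = \set{B \in \B(H) \mid \abs{\prodscal{(A-B)x}{y}} < \epsilon}$, with $A \in \B(H)$, $x,y \in H$ and $\epsilon > 0$, is open in the SOT. Once this is established, the inclusion at the level of the full topologies follows because finite intersections and arbitrary unions of SOT-open sets are again SOT-open.

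The single tool I would use is the Cauchy--Schwarz inequality of the preceding proposition: for any $T \in \B(H)$ and $x,y \in H$ one has $\abs{\prodscal{Tx}{y}} \leq \norm{Tx}\,\norm{y}$. This lets me control the weak functional $B \mapsto \prodscal{(A-B)x}{y}$ by the strong seminorm $B \mapsto \norm{(A-B)x}$, which is exactly what governs the SOT-subbasic sets.

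Concretely, I would fix $B_0 \in U$ (if $U$ is empty there is nothing to prove, and if $y = 0$ then $U$ is either all of $\B(H)$ or empty, so assume $y \neq 0$). Setting $\delta = \epsilon - \abs{\prodscal{(A-B_0)x}{y}} > 0$, consider the SOT-subbasic set $V = \set{B \in \B(H) \mid \norm{(B_0 - B)x} < \delta / \norm{y}}$, an SOT-open neighbourhood of $B_0$. For $B \in V$, the triangle inequality followed by Cauchy--Schwarz gives $\abs{\prodscal{(A-B)x}{y}} \leq \abs{\prodscal{(A-B_0)x}{y}} + \norm{(B_0 - B)x}\,\norm{y} < \epsilon$, so $V \subseteq U$. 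Hence every point of $U$ has an SOT-neighbourhood contained in $U$, which makes $U$ SOT-open, as required.

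I expect no serious obstacle: the mathematical content is entirely Cauchy--Schwarz, and the only care required is the bookkeeping about the direction of the comparison (the statement to establish is $\tau_{\mathrm{WOT}} \subseteq \tau_{\mathrm{SOT}}$) together with the degenerate case $y=0$. Equivalently, and more in the spirit of the net-based discussion above, one could instead argue that any SOT-convergent net $(T_\lambda)_{\lambda \in \Lambda}$ is WOT-convergent to the same limit, since $\abs{\prodscal{(T_\lambda - T)x}{y}} \leq \norm{(T_\lambda - T)x}\,\norm{y} \longrightarrow 0$; the comparison of topologies then follows from the characterisation of the coarser topology as the one in which more nets converge.
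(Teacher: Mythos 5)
Your proposal is correct, and its primary argument takes a different (and in one respect more careful) route than the paper's. The paper argues entirely with nets: it takes an SOT-convergent net $(T_\lambda)_{\lambda \in \Lambda}$ and applies Cauchy--Schwarz, $\abs{\prodscal{(T_\lambda - T)x}{y}} \leq \norm{(T_\lambda - T)x}\,\norm{y}$, to conclude WOT-convergence; the comparison of topologies is then left implicit, resting on the standard fact that a coarser topology is exactly one with more convergent nets. Your main argument instead works directly at the level of open sets, showing each WOT-subbasic set is SOT-open via the triangle inequality plus Cauchy--Schwarz, and correctly reduces the full containment $\tau_{\mathrm{WOT}} \subseteq \tau_{\mathrm{SOT}}$ to the subbasic case. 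This buys you a self-contained proof of the topological statement without appealing to the net characterisation of coarseness, and it also handles the degenerate case $y=0$ and uses the absolute value of the (complex) inner product, which the paper's statement of the subbase glosses over. The mathematical engine---Cauchy--Schwarz---is identical in both, and your closing remark about nets is precisely the paper's proof, so nothing is missing on either side.
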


\begin{proof}
Let $(T_\lambda)_{\lambda \in \Lambda}$ be a net of bounded operators in $\B(H)$.
Suppose that $(T_\lambda)_{\lambda \in \Lambda}$ converges strongly to a bounded operator $T \in \B(H)$. Then, $\norm{(T_\lambda - T)x} \longrightarrow 0$ for every $x \in H$. 

By the Cauchy-Schwarz inequality, we observe that $\abs{\prodscal{(T_\lambda - T)x}{y}} \leq \norm{(T_\lambda - T)x} \norm{y}$ for every $x, y \in H$. Thus, $\prodscal{(T_\lambda - T)x}{y} \longrightarrow 0$ for every $x, y \in H$ and therefore, $(T_\lambda)_{\lambda \in \Lambda}$ converges weakly to $T$.
\end{proof}

\begin{myprop}
Let $H$ be a Hilbert space. The strong operator topology on $\B(H)$ is weaker than the norm topology on $\B(H)$.
\end{myprop}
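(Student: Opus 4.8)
The plan is to establish this topological comparison exactly as in the preceding proposition, by showing that norm convergence of a net implies strong operator convergence. Concretely, I would take an arbitrary net $(T_\lambda)_{\lambda \in \Lambda}$ of bounded operators in $\B(H)$ converging to some $T \in \B(H)$ in the norm topology, so that $\norm{T_\lambda - T} \longrightarrow 0$, and aim to deduce that $\norm{(T_\lambda - T)x} \longrightarrow 0$ for each fixed $x \in H$, which is precisely strong convergence to $T$.

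The key step is the elementary operator-norm bound: for any $S \in \B(H)$ and any $x \in H$, one has $\norm{Sx} \leq \norm{S}\,\norm{x}$. This follows directly from the definition of the operator norm as $\norm{S} = \sup\set{\norm{S(y)} \mid \norm{y} \leq 1}$; applying it to the unit vector $x / \norm{x}$ (for $x \neq 0$, the case $x = 0$ being trivial) gives $\norm{S(x/\norm{x})} \leq \norm{S}$, and multiplying through by $\norm{x}$ yields the claimed inequality. Instantiating this with $S = T_\lambda - T$ produces $\norm{(T_\lambda - T)x} \leq \norm{T_\lambda - T}\,\norm{x}$.

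From here the conclusion is immediate: with $\norm{x}$ a fixed constant, the right-hand side tends to $0$ along the net because $\norm{T_\lambda - T} \longrightarrow 0$, so the left-hand side does too. Hence $(T_\lambda)_{\lambda \in \Lambda}$ converges strongly to $T$. Since every norm-convergent net is thus strongly convergent to the same limit, the identity map on $\B(H)$ is continuous from the norm topology to the strong operator topology, which is exactly the assertion that the SOT is weaker than the norm topology.

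I do not expect any genuine obstacle here, as the result is a routine consequence of the submultiplicative behaviour of the operator norm on vectors. The only point requiring a little care is deriving the bound $\norm{Sx} \leq \norm{S}\,\norm{x}$ cleanly from the stated supremum definition of $\norm{S}$, rather than assuming it; once that inequality is in place, the rest is a direct transcription of the argument already used to compare the strong and weak topologies.
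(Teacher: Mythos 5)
Your argument is correct and follows essentially the same route as the paper: take a net converging in the operator norm and deduce pointwise norm convergence via the bound $\norm{(T_\lambda - T)x} \leq \norm{T_\lambda - T}\,\norm{x}$, which the paper leaves implicit in passing from the supremum to individual vectors. Your explicit derivation of that inequality from the supremum definition is a welcome but minor elaboration, not a different proof.
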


\begin{proof}
Let $(T_\lambda)_{\lambda \in \Lambda}$ be a net of bounded operators in $\B(H)$. Suppose that $(T_\lambda)_{\lambda \in \Lambda}$ converges in the norm topology to a bounded operator $T \in \B(H)$. 

Then, $\norm{T_\lambda - T} = \sup\set{\norm{(T_\lambda -T)(x)} \mid x \in H, \norm{x} \leq 1} \longrightarrow 0$ and therefore, for every $x \in H$, $\norm{(T_\lambda - T)x} \longrightarrow 0$. Thus, $(T_\lambda)_{\lambda \in \Lambda}$ converges strongly to $T$.
\end{proof}

It is known that for every finite-dimensional Hilbert space $H$, the weak topology, the strong topology and the norm topology coincide. Moreover, for the strong and the weak operator topologies, the use of nets instead of sequences should not be considered trivial: it is known that, for an arbitrary Hilbert space $H$, the norm topology is first-countable whereas the other topologies are not necessarily first-countable, see \cite[Chapter II.2]{takesaki1} and \cite[I.3.1]{blackadar}. 

\begin{mydef}
Let $H$ be a Hilbert space and $A \subset \B(H)$.

The commutant of $A$ is the set $A'$ of all bounded operators that commutes with those of $A$:
\[ 
  A' = \set{T \in \B(H) \mid \forall S \in A, TS=ST}
\]

The bicommutant of $A$ is the commutant of $A'$. We denote it by $A''$.
\end{mydef}

\begin{mytheo}[von Neumann bicommutant theorem]\label{vn-bicommutant}
 Let $A$ be a unital *-subalgebra of $\B(H)$ for some Hilbert space $H$. The following conditions are equivalent:
\begin{enumerate}
 \item $A=A''$.
 \item $A$ is closed in the weak topology of $\B(H)$.
 \item $A$ is closed in the strong topology of $\B(H)$.
\end{enumerate}
\end{mytheo}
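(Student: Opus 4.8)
The plan is to establish the cycle of implications $(1) \Rightarrow (2) \Rightarrow (3) \Rightarrow (1)$, which suffices since these are the only three conditions. Two of the three arrows are essentially immediate from the topological facts already recorded above, and the entire difficulty is concentrated in the final arrow $(3) \Rightarrow (1)$.

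For $(1) \Rightarrow (2)$, the key observation is that a commutant is always closed in the weak operator topology. Indeed, for any fixed $S \in \B(H)$ the maps $T \mapsto TS$ and $T \mapsto ST$ are separately WOT-continuous (unfold $\prodscal{STx}{y} = \prodscal{Tx}{S^*y}$ and $\prodscal{TSx}{y}$ against the defining seminorms), so $T \mapsto TS - ST$ is WOT-continuous, and $A' = \bigcap_{S \in A} \set{T \mid TS - ST = 0}$ is an intersection of preimages of the closed set $\set{0}$, hence WOT-closed. Applying this to $A'$ in place of $A$ shows $A'' = (A')'$ is WOT-closed, so if $A = A''$ then $A$ is WOT-closed. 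For $(2) \Rightarrow (3)$ I would simply invoke that the WOT is weaker than the SOT (proven above): every WOT-closed set is therefore SOT-closed.

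The substantial content is $(3) \Rightarrow (1)$. Since every element of $A$ commutes with every element of $A'$ by definition, the inclusion $A \subseteq A''$ always holds, so it suffices to prove $A'' \subseteq A$. Because $A$ is SOT-closed, this will follow from the density lemma $A'' \subseteq \overline{A}^{\,\mathrm{SOT}}$, which I would prove first for a single vector. Fix $T \in A''$ and $x \in H$, and let $M = \overline{\set{Sx \mid S \in A}}$ be the closed $A$-invariant subspace generated by $x$; since $A$ is unital, $x = 1 \cdot x \in M$. Let $P$ be the orthogonal projection onto $M$. The crucial step is to check $P \in A'$: invariance gives $SP = PSP$ for all $S \in A$, and applying this to $S^* \in A$ (here I use that $A$ is a $*$-subalgebra) and taking adjoints yields $PS = PSP$, so $SP = PS$. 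Then $TP = PT$ because $T \in A''$, and since $Px = x$ we get $Tx = TPx = PTx \in M$; that is, $Tx$ lies in the closure of $\set{Sx \mid S \in A}$, giving for each $\epsilon > 0$ some $S \in A$ with $\norm{(T - S)x} < \epsilon$.

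The main obstacle is upgrading this single-vector approximation to a simultaneous approximation on a finite tuple $x_1, \dots, x_n$, which is exactly what SOT-closure requires; I expect this amplification argument to be the heart of the proof. The idea is to work in the direct sum $H^{\oplus n} = H \oplus \cdots \oplus H$ and embed $A$ diagonally as $\tilde{A} = \set{\mathrm{diag}(S, \dots, S) \mid S \in A}$, again a unital $*$-subalgebra. A direct computation identifies the commutant $\tilde{A}'$ inside $\B(H^{\oplus n})$, viewed as $n \times n$ matrices over $\B(H)$, with the matrices all of whose entries lie in $A'$; taking commutants once more shows that $\tilde{A}''$ consists precisely of the diagonal operators $\mathrm{diag}(T, \dots, T)$ with $T \in A''$. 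Hence $\tilde{T} = \mathrm{diag}(T, \dots, T) \in \tilde{A}''$, and applying the single-vector result in $H^{\oplus n}$ to the vector $\xi = (x_1, \dots, x_n)$ produces $\tilde{S} = \mathrm{diag}(S, \dots, S) \in \tilde{A}$ with $\norm{(\tilde{T} - \tilde{S})\xi} < \epsilon$. Since $\norm{(\tilde{T} - \tilde{S})\xi}^2 = \sum_{i=1}^n \norm{(T - S)x_i}^2$, this forces $\norm{(T - S)x_i} < \epsilon$ for every $i$, establishing $T \in \overline{A}^{\,\mathrm{SOT}} = A$ and completing the cycle.
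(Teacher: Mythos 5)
The paper does not actually prove Theorem \ref{vn-bicommutant}; it is stated as a standard imported result from operator theory (cf.\ \cite{takesaki1,blackadar,conway}), so there is no in-paper argument to compare against. Your proof is correct and is the classical textbook argument: the cycle $(1)\Rightarrow(2)\Rightarrow(3)\Rightarrow(1)$, with the easy arrows coming from WOT-closedness of any commutant and from the WOT being weaker than the SOT, and the substance in the density step $A''\subseteq\overline{A}^{\,\mathrm{SOT}}$ via the projection onto the cyclic subspace $\overline{Ax}$ (where unitality gives $x\in M$ and the $*$-closedness gives $P\in A'$), followed by the $n$-fold amplification $H^{\oplus n}$ with $\tilde{A}'=M_n(A')$ to handle the finitely many vectors defining a basic SOT-neighbourhood. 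All the steps you flag as requiring care (the adjoint trick $SP=PSP\Rightarrow PS=PSP$, and the identification of $\tilde{A}''$ with the constant diagonals over $A''$ --- of which only the easy inclusion is actually needed) are handled correctly, so the argument is complete.
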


This theorem is a fundamental result in operator theory as it remarkably relates a topological property (being closed in two operator topologies) to an algebraic property (being its own bicommutant).

\begin{mydef}
A W*-algebra (or von Neumann algebra) is a C*-algebra which satifies one (hence all) of the conditions of the von Neumann bicommutant theorem.
\end{mydef}

It follows that the collections of bounded operators on Hilbert spaces are the most trivial examples of W*-algebras.  



\begin{mydef}
Let $A$ and $B$ be two C*-algebras. 

A positive map $\phi : A \rightarrow B$ is normal if every increasing net $(x_\lambda)_{\lambda \in \Lambda}$ in $A^+$ with least upper bound $\lub x_\lambda \in A^+$ is such that the net $(\phi(x_\lambda))_{\lambda \in \Lambda}$ is an increasing net in $B^+$ with least upper bound 

\[
  \lub \phi (x_\lambda) = \phi(\lub x_\lambda).
\]
\end{mydef}

It is important to note that the notion of normal map (defined in \cite[III.2.2.1]{blackadar} or \cite[Theorem 1, pp.57]{dixmier-vn}) relates to the notion of positive Scott-continuous map, although it is in general not the case that $A^+$ and $B^+$ are dcpos when $\phi : A \to B$ is a positive map between C*-algebras (see Example \ref{cont-ex-cstar-dcpo}). This observation paves the way to interesting connections between operator theory and domain theory, that we will study later.

Moreover, by \cite[III.2.2.2]{blackadar}, we know that normal maps and positive weak-continuous maps coincide. Thus, the W*-algebras and the normal sub-unital maps (or NsU-maps) between them give rise to a category $\wPSU$, which turns out to be a subcategory of the category of C*-algebras $\PSU$.

\section{Effect modules and the subdistribution monad}
In this section, we introduce effect algebras, which are structures that have been introduced in mathematical physics to study quantum probability and quantum logic in the same setting \cite{new-trends}. The relation between effect algebras and the distribution monad as already been studied in \cite{manifesto}. We will now investigate the relationship between effect algebras and the subdistribution monad, in order to study non-terminating probabilistic programs.

\subsection{Effect algebras and effect modules}
\begin{mydef}
A partial commutative monoid (PCM) is a set $M$ equipped with a zero element $0 \in M$ and a partial binary operation $\ovee : M \times M \rightarrow M$ satisfying the following properties (where $x \perp y$ is a notation for "$x \ovee y$ is defined")
\begin{description}
 \item[Commutativity] $x \perp y$ implies $y \perp x$ and $x \ovee y = y \ovee x$.
 \item[Associativity] $y \perp z$ and $x \perp (y \ovee z)$ implies $x \perp y$ and $(x \ovee y) \perp z$ and also $x \ovee (y \ovee z) = (x \ovee y) \ovee z$.
 \item[Zero] $0 \perp x$ and $0 \ovee x = x$.
\end{description}
\end{mydef}

When writing $x \ovee y$, we shall now implicitly assume that $x \perp y$.

\begin{mydef}
An effect algebra $(E,0,\ovee, (-)^\perp)$ is a PCM $(E,0,\ovee)$ together with an unary operation $(-)^\perp : E \rightarrow E$ satisfying
\begin{enumerate}
 \item $x^\perp \in E$ is the unique element in $E$ such that $x \perp x^\perp$ and $x \ovee x^\perp = 1$, where $1 = 0^\perp$;
 \item $x \perp 1 \implies x=0$.
\end{enumerate}

A homomorphism of effect algebras is a function $f : E \rightarrow F$ between the underlying sets satisfying $f(1)=1$, and if $x \perp x'$ in $E$, then $f(x) \perp f(x')$ in $F$ and $f(x \ovee x') = f(x) \ovee f(x')$.

We write $\EA$ for the category of effect algebras together with such homomorphisms.
\end{mydef}

\begin{mydef}
A generalized effect algebra is a PCM $(E,0,\ovee)$ satisfying the following properties:
\begin{description}
 \item[Cancellation law] If $x \ovee y = x \ovee z$ then $y=z$.
 \item[Positivity law] If $x \ovee y = 0$ then $x = y = 0$. 
\end{description}

A homomorphism of generalized effect algebras is a function $f : E \to F$ between the underlying sets satisfying $f(0)=0$, and if $x \perp x'$ in $E$, then $f(x) \perp f(x')$ in $F$ and $f(x \ovee x') = f(x) \ovee f(x')$.

We write $\GEA$ for the category of generalized effect algebras together with such homomorphisms.
\end{mydef}

\begin{mydef}\label{def-operation}
Let $(E,0,\ovee, (-)^\perp)$ be an effect algebra.

The dual operation $\owedge$ of the partial sum $\ovee$ is defined by $x \owedge y = (x^\perp \ovee y^\perp)^\perp$ ($x,y \in E$).

The difference operation $\ominus$ is defined by $y \ominus x = z \Leftrightarrow y = x \ovee z$ ($x,y,z \in E$). 

Furthermore, for every effect algebra $E$, one can define a partial order with $1$ as top and $0$ as bottom: $x \leq y$ if and only if $\exists z. x \ovee z = y$ \cite[Lemma 5]{manifesto}. 
\end{mydef}

It was shown in \cite[Section 1.2]{new-trends} that, for every generalized effect algebra $E$, one can define the same partial order $\leq$ with $0$ as bottom, and a top $1$ if and only if $E$ is an effect algebra. In other words, an effect algebra is a generalized effect algebra with a top.


\begin{mylem}\label{homo-ea}
If $f : E \to F$ is a homomorphism of effect algebras, then $f(x^\perp)=f(x)^\perp$ and thus $f(0)=0$.

If $f : E \to F$ is a homomorphism of generalized effect algebras, then $x \leq_E x' \implies f(x) \leq_F f(x')$
\end{mylem}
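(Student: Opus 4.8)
The plan is to treat the two statements separately, and in each case to reduce directly to the defining axioms of (generalized) effect algebras and their homomorphisms, exploiting the uniqueness built into the orthosupplement.

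For the first statement, I would start from the fact that, by the effect-algebra axioms, $x^\perp$ is characterized as the \emph{unique} element satisfying $x \perp x^\perp$ and $x \ovee x^\perp = 1$. Applying the homomorphism $f$ and using that it preserves $\perp$ and $\ovee$, together with $f(1)=1$, I would obtain $f(x) \perp f(x^\perp)$ and $f(x) \ovee f(x^\perp) = f(x \ovee x^\perp) = f(1) = 1$. Thus $f(x^\perp)$ meets the defining condition of the orthosupplement of $f(x)$ in $F$, and uniqueness forces $f(x^\perp) = f(x)^\perp$. To then deduce $f(0)=0$, I would first record the auxiliary identity $(x^\perp)^\perp = x$, which follows because commutativity gives $x^\perp \ovee x = 1$, so $x$ is the unique orthosupplement of $x^\perp$. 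In particular $1^\perp = (0^\perp)^\perp = 0$. Instantiating the already-proved preservation of $(-)^\perp$ at $x=1$ then yields $f(0) = f(1^\perp) = f(1)^\perp = 1^\perp = 0$.

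For the second statement, I would unfold the partial order from Definition \ref{def-operation}: $x \leq_E x'$ means there exists $z \in E$ with $x \ovee z = x'$, so in particular $x \perp z$. Applying the generalized-effect-algebra homomorphism $f$, which preserves $\perp$ and $\ovee$, gives $f(x) \perp f(z)$ and $f(x) \ovee f(z) = f(x \ovee z) = f(x')$. Taking $f(z)$ as the required witness, this is exactly $f(x) \leq_F f(x')$, establishing monotonicity.

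I do not expect a serious obstacle here; both parts are immediate consequences of the definitions once the right characterization is isolated. The only subtle point worth stating explicitly is the involutivity identity $(x^\perp)^\perp = x$, needed to run the $x=1$ instantiation for $f(0)=0$; I would prove it inline rather than assume it. One should also note that the homomorphism of generalized effect algebras is assumed to satisfy $f(0)=0$ by definition, so the monotonicity argument uses only preservation of $\ovee$ and $\perp$ and requires no appeal to a top element.
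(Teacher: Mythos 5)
Your proposal is correct and follows essentially the same route as the paper's proof: uniqueness of the orthosupplement of $f(x)$ gives $f(x^\perp)=f(x)^\perp$, instantiation at $x=1$ (via $1^\perp=0$) gives $f(0)=0$, and unfolding the witness $z$ in $x\ovee z=x'$ gives monotonicity. Your explicit inline justification of $(x^\perp)^\perp=x$ is a small additional care the paper leaves implicit, but it does not change the argument.
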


\begin{proof}
Suppose that $f \in \EA(E,F)$. Let $x$ be an element of $E$. Then, $x \perp x^\perp$ and $x \ovee x^\perp = 1$. Since $f$ preserves the sum and the unit, we obtain that $f(x) \perp f(x^\perp)$ and $f(x) \ovee f(x^\perp) = f(x \ovee x^\perp)=f(1)=1$. It follows that $f(x^\perp)=f(x)^\perp$  by uniqueness of the orthocomplement of $f(x)$. In particular, when $x=0$, $f(0)=f(1^\perp)=f(1)^\perp=1^\perp=0$.

Suppose that $f \in \GEA(E,F)$. Let $x$ and $y$ be two elements of $E$. If $x \leq_E y$, then there is a $z \in E$ such that $x \ovee z = y$. We obtain that $f(x) \ovee f(z) = f(x \ovee z) = f(y)$ where $f(z) \in F$. That is to say $f(x) \leq_F f(y)$.
\end{proof}

It should be noted that homomorphisms of generalized effect algebras do not necessarily preserve the orthocomplement. For example, for the map $f : [0,1]_\R \to [0,1]_\R$ defined by $f(x)=\frac{1}{2}x$, it turns out that $f(x \ovee y)=\frac{1}{2}(x \ovee y)=\frac{1}{2}x \ovee \frac{1}{2}y = f(x) \ovee f(y)$ and $f(0)=0$ but $f(1)^\perp = 1 \ominus f(1) = \frac{1}{2} \neq 0 = f(0) = f(1^\perp)$.

\begin{mydef}
For every effect algebra $E$ and every $t \in E$, we define the downset 
\[
 {\downarrow} t = \set{ x \in E \mid 0 \leq x \leq t}
\]
\end{mydef}

\begin{myprop}
Let $(E,\ovee,0,(-)^\perp)$ be an effect algebra and $t \in E$.

The downset ${\downarrow} t$ is an effect algebra with the sum $\ovee$ restricted to ${\downarrow} t$, the element $t$ as top, the orthocomplement defined by $x^\perp = t \ominus x$ for every element $x \in {\downarrow} t$, and finally $x \perp y$ if and only if $x \perp_E y$ and $x \ovee y \leq t$ ($x,y \in {\downarrow} t$).
\end{myprop}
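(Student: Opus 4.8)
The plan is to check directly that $({\downarrow}t,\,0,\,\ovee,\,(-)^\perp)$, equipped with the restricted data described in the statement, satisfies the axioms of an effect algebra: first that it is a partial commutative monoid, and then the two defining conditions on the orthocomplement. Throughout I would lean on two elementary facts about the ambient effect algebra $E$, both immediate from Definition \ref{def-operation}: the monotonicity law $a \leq a \ovee b$ whenever $a \perp_E b$ (since $a \ovee b$ itself witnesses $a \leq a \ovee b$ in the defining order), and the cancellation property that makes $\ominus$ well defined, so that $x \ovee w = x \ovee w'$ forces $w = w'$.

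For the PCM structure, I would first note that $0 \in {\downarrow}t$ because $0 \leq t$, and that the restricted sum genuinely lands in ${\downarrow}t$: by the chosen notion of orthogonality, $x \perp y$ in ${\downarrow}t$ already builds in the condition $x \ovee y \leq t$, so $x \ovee y \in {\downarrow}t$. Commutativity and the zero law transfer verbatim from $E$, once one observes that the side-condition $x \ovee y \leq t$ is symmetric and that $0 \ovee x = x \leq t$ holds for every $x \in {\downarrow}t$. The one delicate point is associativity: assuming $y \perp z$ and $x \perp (y \ovee z)$ in ${\downarrow}t$, associativity in $E$ yields $x \perp_E y$, $(x \ovee y)\perp_E z$, and the equation $x \ovee (y \ovee z) = (x \ovee y) \ovee z$; what remains is to recover the two $\leq t$ side-conditions. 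Here I would invoke monotonicity to write $x \ovee y \leq (x \ovee y)\ovee z = x \ovee(y \ovee z) \leq t$, which delivers both $x \perp y$ and $(x \ovee y)\perp z$ in ${\downarrow}t$, the sums then agreeing by the $E$-level equation. This propagation of the bound $\leq t$ through the associativity clause is the main obstacle, in the sense that it is the only place where the modified orthogonality relation must be manipulated rather than simply inherited.

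It then remains to verify the orthocomplement axioms relative to the top $t$, which is itself the orthocomplement $t \ominus 0$ of $0$ computed in ${\downarrow}t$. For $x \in {\downarrow}t$, the element $x^\perp := t \ominus x$ exists because $x \leq t$, lies in ${\downarrow}t$ since $t \ominus x \leq (t \ominus x) \ovee x = t$, and satisfies $x \perp x^\perp$ in ${\downarrow}t$ (as $x \ovee (t \ominus x) = t \leq t$) with $x \ovee x^\perp = t$; uniqueness of such an element follows at once from cancellation in $E$. Finally, for the second axiom I would show $x \perp t \Rightarrow x = 0$: if $x \perp t$ in ${\downarrow}t$ then $x \ovee t \leq t$, while monotonicity gives $t \leq x \ovee t$, whence $x \ovee t = t = 0 \ovee t$ and cancellation yields $x = 0$. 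The involutivity $(x^\perp)^\perp = x$ need not be treated separately, as it is a formal consequence of the uniqueness clause in the first axiom.
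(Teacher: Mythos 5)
Your proof is correct and follows essentially the same route as the paper's: a direct verification of the PCM axioms in which the only nontrivial point is propagating the bound $x \ovee y \leq (x \ovee y) \ovee z = x \ovee (y \ovee z) \leq t$ through the associativity clause, followed by the two orthocomplement axioms for $x^\perp = t \ominus x$. The only difference is cosmetic: where the paper cites external lemmas for $x \ovee (t\ominus x)=t$ and for $x \ovee t \leq t \Rightarrow x=0$, you derive these directly from monotonicity and cancellation, which makes the argument a bit more self-contained.
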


\begin{proof}
Let $x, y, z \in {\downarrow} t$.

\emph{Commutativity}: If $x \perp y$, then $x \perp_E y$ and $x \ovee y \leq t$. It follows that $y \perp_E x$ and $y \ovee x = x \ovee y \leq t$. That is to say $y \perp x$.

\emph{Associativity}: Suppose that $y \perp z$ and $x \perp (y \ovee z)$. Then $y \perp_E z$ and $y \ovee z \leq t$ and $x \perp_E (y \ovee z)$ and also $x \ovee (y \ovee z) \leq t$. Thus, $x \perp_E y$, $(x \ovee y) \perp_E z$ and $x \ovee y \leq (x \ovee y) \ovee z = x \ovee (y \ovee z) \leq t$. It follows that $x \perp y$ and $(x \ovee y) \perp z$.

\emph{Zero}: From $x \in {\downarrow} t \subseteq E$, we obtain that $0 \perp_E x$ and $0 \ovee x = x \leq t$. That is to say $0 \perp x$.

Hence, ${\downarrow} t$ is a PCM. We now consider an orthocomplement for ${\downarrow} t$: \\
By \cite[Lemma 6(vii)]{manifesto}, $x \leq t$ implies $t \ominus (t \ominus x) = x$ and thus, by Definition \ref{def-operation}, $x \ovee (t \ominus x) = t$. We obtain that $x^\perp = t \ominus x$ by unicity of the orthocomplement. Moreover, $x \perp t$ implies that $x \perp_E t$ and $x \ovee t \leq t$, and thus $x \leq t \ominus t = 0$ by \cite[Lemma 6(iv)]{manifesto}. Then, $x \leq 0$, which implies that $x = 0$.
\end{proof}

\begin{myprop}
Let $f : E \to F$ be a function between two effect algebras $E$ and $F$.

Let $\tilde{f} : E \to f(E)= {\downarrow} f(1) := \set{ x \in F \mid 0 \leq x \leq f(1)}$ be the function defined pointwise by $\tilde{f}(x)=f(x)$.

Then $f$ is a homomorphism of $\GEA$ (i.e. $f \in \GEA(E,F)$) if and only if $\tilde{f}$ is a homomorphism of $\EA$ (i.e. $\tilde{f} \in \EA(E,{\downarrow} f(1))$)
\end{myprop}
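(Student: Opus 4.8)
The plan is to lean entirely on the preceding proposition, which exhibits ${\downarrow} f(1)$ as an effect algebra in its own right: its top is $f(1)$, its zero is $0$, its partial sum is the restriction of $\ovee$ from $F$, and its orthogonality relation is given by $x \perp y$ iff $x \perp_F y$ and $x \ovee y \leq f(1)$. Once this structure is available, both implications reduce to unfolding the definitions of a $\GEA$-homomorphism and an $\EA$-homomorphism and matching them across the inclusion ${\downarrow} f(1) \hookrightarrow F$, under which $\tilde{f}$ and $f$ agree pointwise and the sum in ${\downarrow} f(1)$ agrees with that of $F$.

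For the forward direction I would assume $f \in \GEA(E,F)$ and first verify that $\tilde{f}$ is well-defined, i.e. that $f(x) \in {\downarrow} f(1)$ for every $x \in E$. Since every element of the effect algebra $E$ satisfies $0 \leq x \leq 1$, and since $f(0)=0$ together with the monotonicity of generalized effect algebra homomorphisms (Lemma \ref{homo-ea}) gives $0 = f(0) \leq f(x) \leq f(1)$, this is immediate. The unit axiom for $\tilde{f}$ is then automatic, because $\tilde{f}(1)=f(1)$ is by construction the top of ${\downarrow} f(1)$. Finally, whenever $x \perp x'$ in $E$, the homomorphism $f$ yields $f(x) \perp_F f(x')$ and $f(x \ovee x')=f(x) \ovee f(x')$, while monotonicity applied to $x \ovee x' \leq 1$ gives $f(x) \ovee f(x') = f(x \ovee x') \leq f(1)$; this is exactly the extra bound needed for $\tilde{f}(x) \perp \tilde{f}(x')$ to hold in ${\downarrow} f(1)$, and preservation of the sum transfers verbatim because the sum of ${\downarrow} f(1)$ is the restriction of that of $F$.

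For the backward direction I would assume $\tilde{f} \in \EA(E,{\downarrow} f(1))$ and run the correspondence in reverse. Applying Lemma \ref{homo-ea} to the effect-algebra homomorphism $\tilde{f}$ gives $\tilde{f}(0)=0$, and since the zero of ${\downarrow} f(1)$ is the zero of $F$, this yields $f(0)=0$. Moreover, for $x \perp x'$ in $E$, the relation $\tilde{f}(x) \perp \tilde{f}(x')$ in ${\downarrow} f(1)$ unpacks, by the description of orthogonality in the downset, to $f(x) \perp_F f(x')$ (the accompanying bound $f(x) \ovee f(x') \leq f(1)$ being now redundant information), and $\tilde{f}(x \ovee x') = \tilde{f}(x) \ovee \tilde{f}(x')$ transfers directly to $f(x \ovee x')=f(x) \ovee f(x')$ in $F$. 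Hence $f \in \GEA(E,F)$.

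I expect the only genuinely delicate point to be the well-definedness of $\tilde{f}$ in the forward direction, namely verifying $f(x) \leq f(1)$, which is precisely where the monotonicity half of Lemma \ref{homo-ea} is indispensable; once ${\downarrow} f(1)$ is recognized as an effect algebra with the orthogonality relation supplied by the previous proposition, every remaining verification is a mechanical unfolding, and the apparent asymmetry between the $\GEA$ and $\EA$ homomorphism conditions dissolves — the $\EA$ side merely repackages the monotonicity-derived bound $f(x) \ovee f(x') \leq f(1)$ into the orthogonality relation of the downset.
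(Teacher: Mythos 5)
Your proposal is correct and follows essentially the same route as the paper: both directions reduce to unfolding the orthogonality relation of ${\downarrow} f(1)$ supplied by the preceding proposition and invoking Lemma \ref{homo-ea} (monotonicity for the forward bound $f(x\ovee x')\leq f(1)$, zero-preservation for the converse). Your explicit check that $\tilde{f}$ is well-defined, i.e. that $f(x)\in{\downarrow} f(1)$, is a small point the paper leaves implicit, but it does not change the argument.
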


\begin{proof}
Let $x, y \in E$.
 
Suppose that $f \in \GEA(E,F)$ and that $x \perp_E y$. Then $f$ preserves the sum and thus $\tilde{f}(x) = f(x) \perp_F f(y) = \tilde{f}(y)$. Since $x \ovee y \leq 1$ implies that $\tilde{f}(x) \ovee \tilde{f}(y) = f(x) \ovee f(y) = f(x \ovee y) \leq f(1)$, we have that $\tilde{f}(x) \perp_{{\downarrow} f(1)} \tilde{f}(y)$ and $\tilde{f}(x \ovee y) = f(x \ovee y) = f(x) \ovee f(y) = \tilde{f}(x) \ovee \tilde{f}(y)$. Hence, $\tilde{f}$ preserves the sum.


Moreover, the unit of ${\downarrow} f(1)$ is $f(1)=\tilde{f}(1)$. That is to say, $\tilde{f}$ also preserves the unit. It follows that $\tilde{f}$ is a homomorphism of $\EA$.

Conversely, suppose that $\tilde{f} \in \EA(E,{\downarrow} f(1))$ and that $x \perp_E y$. $\tilde{f}$ preserves the sum and thus $f(x) = \tilde{f}(x) \perp_{{\downarrow} f(1)} \tilde{f}(y) = f(y)$, which is equivalent to say that $f(x) \perp_F f(y)$ and $f(x) \ovee f(y) \leq f(1)$. Since $f(x) \ovee f(y) = \tilde{f}(x) \ovee \tilde{f}(y) = \tilde{f}(x \ovee y) = f (x \ovee y)$, we conclude that $f$ preserves the sum.

Moreover, $\tilde{f}$ preserves zero (by Lemma \ref{homo-ea}) and therefore, $f$ preserves zero since $f(0)=\tilde{f}(0)=0$. 
Hence, $f$ is a homomorphism of $\GEA$.
\end{proof}


We will now introduce effect modules, which are the effect-theoretic counterpart of vector spaces.

\begin{mydef}
A (generalized) effect module is a (generalized) effect algebra $E$ together with a scalar multiplication $r \bullet x \in E$, where $x \in E$ and $r \in [0,1]$, satisfying :
\begin{align*}
 1 \bullet x  = x && (r+s) \bullet x = r \bullet x + s \bullet x &\quad \text{ if } r+s \leq 1 \\
 (rs) \bullet x = r \bullet (s \bullet x) && r \bullet (x \ovee y) = (r \bullet x) \ovee (r \bullet y) &\quad \text{ if } x \perp y
\end{align*}

A map of (generalized) effect modules is a map of (generalized) effect algebras $f : E \to F$ which preserves scalar multiplication, i.e. $f(r \bullet x) = r \bullet f(x)$ with $x \in E$ and $r \in [0,1]$.

We write $\EMod$ for the category of effect modules with homomorphisms of effect modules. Similarly, we write $\GEMod$  for the category of generalized effect modules with homomorphisms of generalized effect modules.
\end{mydef}

It was observed in \cite{furber-jacobs} that for every C*-algebra $A$, the subset of effects $[0,1]_A$ is an effect algebra $([0,1]_A,0,+)$ with $x \perp y$ if and only if $x + y \leq 1$ and the orthocomplement $x^\perp = 1 - x$. It is therefore an effect module with a $[0,1] \subseteq \R$ scalar multiplication where $r \bullet x \in [0,1]_A$ for $r \in [0,1]$ and $x \in [0,1]_A$.

\subsection{Discrete probability monads}
We now introduce the distribution monad and the subdistribution monad, which are heavily used for studying discrete probability systems such as Markov chains, see \cite{jacobs-coalgebra}.

\begin{mydef}
The distribution monad $\DM : \mathbf{Sets} \to \mathbf{Sets}$ is the monad defined by 
\[
  \DM(X) = \{\phi: X \rightarrow [0,1] \mid \sum_{x \in X} \phi(x) = 1\}
\]
\end{mydef}

From now, we will use "$\lambda x. \cdots$" as a notation for "$x \mapsto \cdots$".

\begin{mydef}[\cite{hasuo-jacobs-sokolova}]
The subdistribution monad $\SDM : \mathbf{Sets} \rightarrow \mathbf{Sets}$ is the monad defined by \[\SDM(X) = \{ \phi : X \rightarrow [0,1] \mid \sum_{x \in X} \phi(x) \leq 1\}.\]
Its unit $\eta : X \rightarrow \SDM(X)$ and multiplication $\mu : \SDM^2(X) \rightarrow \SDM(X)$ are given by

\begin{alignat*}{1}
   \eta(x)=\lambda x'. \left\{\begin{matrix} 0 & \mbox{if } x \ne x' \\1 & \mbox{if } x=x'   \end{matrix}\right. &\qquad \mu(\Phi)(x)=\sum_\phi \Phi(\phi) \cdot \phi(x).
\end{alignat*}
\end{mydef}

The possibility of a missing probability in the subdistribution monad can be seen as a probability of deadlock. Any morphism $\phi : X \rightarrow [0,1]$ such that $\sum_{x \in X} \phi(x) \leq 1$ can be seen as a "deadlock-sensitive" morphism $\varphi : 1+X \rightarrow [0,1]$ defined by 
\[
  \varphi(x)=\left\{\begin{matrix*}[l] \phi(x) & \mbox{if } x \in X \\ 1-\sum_{x \in X} \phi(x) & \mbox{if } x \notin X \end{matrix*}\right.
\]

Hence, one can write for any set $X$ that $\SDM(X)$ is isomorphic to $\DM(1+X)$. In the same manner, Prakash Panangaden used in \cite{panangaden} a similar "subprobability measure" and stated that a probability measure on $1+X$ is a subprobability measure on $X$.

Moreover, for every set $X$, we now identify every element $\phi \in \SDM(X)$ with a subconvex sum $\sum_i r_i x_i$ with $x_i \in X$ and $r_i \in [0,1]$ such that $\sum_i r_i \leq 1$. A subconvex set is an Eilenberg-Moore algebra of the subdistribution monad $\SDM$. Namely, a subconvex set consists of a set $X$ in which the subconvex sums $\sum_i r_i x_i \in X$ exists for all subconvex combinations. We now define the category $\SConv = \EiMo(\SDM)$ with subconvex sets as objects and affine maps preserving convex sums as morphisms.

The interested reader will find in Appendix \ref{appendix-sub-adj} a proof of the following adjunction between $\SConv$ and $\GEMod$, inspired by \cite{jacobs-conv-effects,jorik-jacobs}:

$$\xymatrix@R+.5pc{
\opp{\GEMod}\ar@/^1.5ex/[rr]^-{\GEMod(-,[0,1])} 
   & \top & \SConv\ar@/^1.5ex/[ll]^-{\SConv(-,[0,1])} 
}$$

\section{Quantum domain theory: Definitions}
Domain theory was successfully applied by Jones and Plotkin \cite{jones-plotkin} in the context of probabilistic computation. In this section, we investigate the relevance of W*-algebras in the extension of domain theory to the quantum setting, following the work of Selinger \cite{selinger}. 

\subsection{W*-algebras as directed-complete partial orders}
Since positive elements are self-adjoint, one can define the following order on positive maps of C*-algebras.

\begin{mydef}[Löwner partial order]
For positive maps $f,g : A \rightarrow B$ between C*-algebras $A$ and $B$, we define pointwise the following partial order $\lwnr$, which turns out to be an infinite-dimensional generalization  of the Löwner partial order \cite{lowner} for positive maps: $f \lwnr g$ if and only if $\forall x \in A^+, f(x) \leq g(x)$ if and only if $\forall x \in A^+, (g-f)(x) \in B^+$ (i.e. $g-f$ is positive).
\end{mydef}

One might ask if, for arbitrary C*-algebras $A$ and $B$, the poset $(\PSU(A,B),\lwnr)$ is directed-complete. The answer turns out to be no, as shown by our following counter-example.

\begin{mysam}\label{cont-ex-cstar-dcpo}
Let us consider the C*-algebra $C([0,1]):=\set{f: [0,1] \rightarrow \C \mid f \text{ continuous}}$.

The hom-set $\PSU(\C,C([0,1]))$ is isomorphic to $C([0,1])$ if one considers the functions $F : \PSU(\C,C([0,1])) \to C([0,1])$ and $G : C([0,1]) \to \PSU(\C,C([0,1]))$ respectively defined by $F(f)=f(1)$ and $G(g)=\lambda \alpha \in \C. \alpha \cdot g$.

We define an increasing chain $(f_n)_{n \geq 0}$ of $C([0,1])$ define for every $n \in \N$ by 

\[ f_n(x) = \left\{
  \begin{array}{l l}
    0 & \quad \text{if } 0 \leq x < \frac{1}{2}\\
    (x-\frac{1}{2})2^{n+1} & \quad \text{if } \frac{1}{2} \leq x \leq \frac{1}{2} + 2^{-(n+1)}\\
    1 & \quad \text{if } \frac{1}{2} + 2^{-(n+1)} < x \leq 1
  \end{array} \right.\] 
  
Suppose that there is a least upper bound $\phi$ in $C([0,1])$ for this chain. Then, $\phi(x)=0$ if $x < \frac{1}{2}$. Moreover, $\lim_{n \to \infty} \left( \frac{1}{2} + 2^{-(n+1)} \right) = \frac{1}{2}$ implies that $\phi(x)=1$ if $x > \frac{1}{2}$. It follows that $\phi(x) \in \set{0,1}$ if $x \neq \frac{1}{2}$.

By the Intermediate Value Theorem, the continuity of the function $\phi$ on the interval $[0,1]$ implies that there is a $c \in [0,1]$ such that $\phi(c)=\frac{1}{2}$. From $\phi(c) \notin \set{0,1}$, we obtain that $c = \frac{1}{2}$. That is to say $\phi(\frac{1}{2})=\frac{1}{2}$, which is absurd since $f_n(\frac{1}{2})=0$ for every $n\in \N$.



It follows that there is no least upper bound for this chain in $C([0,1])$ and therefore $C([0,1])$ is not chain-complete.
\end{mysam}

\begin{mytheo}\label{wstar-dcpo}
For W*-algebras $A$ and $B$, the poset $(\wPSU(A,B),\lwnr)$ is directed-complete.
\end{mytheo}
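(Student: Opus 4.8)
The plan is to build the least upper bound of a directed family \emph{pointwise}, using the one genuinely non-elementary ingredient about W*-algebras: a W*-algebra is \emph{monotone complete}, i.e. every norm-bounded increasing net of self-adjoint elements has a least upper bound in the algebra. (This is one of the standard order-theoretic characterisations of W*-algebras; see \cite{blackadar,takesaki1} and the correspondence collected in Appendix \ref{appendix-correspondence}.) Once this is available, everything else is bookkeeping. So let $\mathcal{F} \dirsubset \wPSU(A,B)$ be a directed family. For each $x \in A^+$ the set $\set{g(x) \mid g \in \mathcal{F}}$ is directed in $B^+$: given $g_1,g_2$, directedness of $\mathcal{F}$ supplies $g_3 \in \mathcal{F}$ with $g_1,g_2 \lwnr g_3$, whence $g_1(x),g_2(x) \leq g_3(x)$. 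Moreover this set is uniformly bounded, since $x \leq \norm{x}\cdot 1$ gives $g(x) \leq \norm{x}\,g(1) \leq \norm{x}\cdot 1$ by sub-unitality and Lemma \ref{presrv-order}. Monotone completeness then lets me define $f(x) := \lub_{g \in \mathcal{F}} g(x) \in B^+$.

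Next I would check that $f$ is a morphism of $\wPSU(A,B)$. Positivity holds by construction, and sub-unitality follows from $f(1) = \lub_g g(1) \leq 1$. Homogeneity $f(\lambda x) = \lambda f(x)$ for $\lambda \geq 0$ is immediate from translation/scaling behaviour of suprema. Additivity $f(x \ovee y)$—rather, $f(x+y) = f(x)+f(y)$ for $x,y \in A^+$—is the first place where the argument is not purely formal: the inequality $f(x+y) \leq f(x)+f(y)$ is clear because $g(x+y)=g(x)+g(y) \leq f(x)+f(y)$ for every $g$, while the reverse uses translation-invariance of $\leq$ together with $\lub_g\bigl(c + g(y)\bigr) = c + \lub_g g(y)$, applied first to fix one summand and then to take the remaining supremum. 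I would then extend $f$ to $\sa{A}$ via $f(x)=f(x_+)-f(x_-)$ and to all of $A$ through the four-positive-parts decomposition recalled earlier, obtaining a genuine linear positive sub-unital map.

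The substantive structural step is \textbf{normality} of $f$. For an increasing net $(x_\lambda)$ in $A^+$ with $\lub_\lambda x_\lambda = x$, each $g$ is normal, so I can compute $f(x) = \lub_g g\!\left(\lub_\lambda x_\lambda\right) = \lub_g \lub_\lambda g(x_\lambda) = \lub_\lambda \lub_g g(x_\lambda) = \lub_\lambda f(x_\lambda)$, where the middle equality is the \emph{interchange of two directed suprema}, valid whenever both iterated suprema exist in the monotone-complete poset $\sa{B}$. This identifies $f$ as normal, hence (using that normal maps coincide with positive weak-continuous maps) as an NsU-map, so $f \in \wPSU(A,B)$. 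Finally $f$ is the least upper bound: it is an upper bound since $g(x) \leq f(x)$ for all $x \in A^+$ gives $g \lwnr f$; and if $h \in \wPSU(A,B)$ satisfies $g \lwnr h$ for all $g$, then $g(x) \leq h(x)$ forces $f(x)=\lub_g g(x) \leq h(x)$, i.e. $f \lwnr h$.

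I expect the main obstacles to be exactly the two places flagged above. The first is invoking monotone completeness of $B$ correctly, with the right uniform bound $\norm{x}\cdot 1$, so that the pointwise suprema exist at all; this is where the W*-hypothesis (as opposed to mere C*-structure, cf. Example \ref{cont-ex-cstar-dcpo}) is indispensable. The second is verifying that the pointwise supremum is again \emph{linear and normal}—concretely, the nontrivial direction of additivity and the double-supremum interchange in the normality computation—which are the only steps that genuinely use the order-continuity of addition and the existence of iterated suprema rather than formal manipulation.
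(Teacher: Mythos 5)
Your proposal is correct and follows essentially the same route as the paper: pointwise suprema of the directed family exist in $B^+$ by monotone completeness (the paper gets this from Corollary \ref{effects-ccpo} via Vigier's theorem), the resulting map is positive and sub-unital, normality follows by interchanging the two directed suprema, and leastness is immediate. If anything, you are more careful than the paper at one point: where the paper asserts that $f$ is ``a positive sub-unital map by construction,'' you explicitly verify additivity of the pointwise supremum and the extension from $A^+$ to all of $A$ via the four-positive-parts decomposition, which is a genuine (if routine) gap in the paper's write-up that your argument closes.
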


The proof of this theorem will be postponed until after the following lemmas.

\begin{mylem}\label{pointwise-lub}
Let $f : A \to B$ be a PsU-map between unital C*-algebras $A$ and $B$ and $x \in A^+$. 

Then, $f(x) \leq \norm{x} \scprod 1$.
Therefore, $\norm{f(x)} \leq \norm{x}$. 
\end{mylem}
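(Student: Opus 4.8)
The plan is to reduce the statement to a single spectral inequality, namely that $x \leq \norm{x} \scprod 1$ holds in any unital C*-algebra for every positive $x$, and then to transport this through $f$ using only positivity and sub-unitality. First I would record this order inequality in $A$. It is a standard consequence of the continuous functional calculus: since $x \in A^+$ is self-adjoint with spectrum contained in $[0,\norm{x}]$, the element $\norm{x} \scprod 1 - x$ has spectrum in $[0,\norm{x}]$ and is therefore positive, i.e. $x \leq \norm{x} \scprod 1$. In keeping with the paper's policy of admitting spectral facts, I would cite this rather than reprove it.

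Next I would apply $f$ to this inequality. Because $f$ is a positive map, Lemma \ref{presrv-order} yields $f(x) \leq f(\norm{x} \scprod 1)$, and by linearity of $f$ we have $f(\norm{x} \scprod 1) = \norm{x} \scprod f(1)$. Since $f$ is sub-unital, $f(1) \leq 1$, that is $1 - f(1) \in B^+$; as $B^+$ is a convex cone it is closed under multiplication by the nonnegative scalar $\norm{x}$, so $\norm{x} \scprod f(1) \leq \norm{x} \scprod 1$. Chaining the two inequalities gives $f(x) \leq \norm{x} \scprod 1$, which is the first assertion.

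For the concluding norm bound I would first note that $f(x) \in B^+$, since $f$ is positive and $x \in A^+$, so that $0 \leq f(x) \leq \norm{x} \scprod 1$. The inequality $\norm{f(x)} \leq \norm{x}$ then follows from the complementary spectral fact that any positive element $a$ dominated by $\lambda \scprod 1$ satisfies $\norm{a} \leq \lambda$: the relation $a \leq \lambda \scprod 1$ confines the spectrum of $a$ to $[0,\lambda]$, and for a positive (hence self-adjoint) element the norm equals the spectral radius $\sup$ of the spectrum.

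The only genuine obstacle is the twofold appeal to spectral theory, in the inequalities $x \leq \norm{x} \scprod 1$ and $\norm{a} \leq \lambda$ for $0 \leq a \leq \lambda \scprod 1$. Both are entirely routine in the C*-algebra literature and fall squarely under the paper's stated convention of admitting spectral results; everything else is a direct combination of Lemma \ref{presrv-order}, linearity of $f$, sub-unitality, and the cone structure of the positive elements.
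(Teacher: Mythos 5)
Your proposal is correct and follows essentially the same route as the paper: establish $x \leq \norm{x} \scprod 1$ (the paper cites \cite[Proposition 4.2.3(ii)]{kadison-ringrose1} where you invoke the functional calculus), push it through $f$ via positivity and linearity, use sub-unitality to get $\norm{x} \scprod f(1) \leq \norm{x} \scprod 1$, and conclude the norm bound from $0 \leq f(x) \leq \norm{x} \scprod 1$. You merely spell out the spectral justifications that the paper leaves implicit.
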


\begin{proof}
Let $f : A \to B$ be a PsU-map between unital C*-algebras $A$ and $B$ and $x \in A^+$. 

Then, $x \leq \norm{x} \scprod 1$ by \cite[Proposition 4.2.3(ii)]{kadison-ringrose1}. Thus, $f(x) \leq \norm{x} \scprod f(1) \leq \norm{x} \scprod 1$ since $f(1) \leq 1$. Hence, $\norm{f(x)} \leq \norm{x}$. 
\end{proof}

The following result is known in physics as Vigier's theorem \cite{vigier}. A weaker version of this theorem can be found in \cite{selinger}. It is important in this context because it establishes the link between limits in topology and joins in order theory.

\begin{mylem}\label{strong-lub}
Let $H$ be a Hilbert space. Let $(T_\lambda)_{\lambda \in \Lambda}$ be an increasing net of $\Ef(H)$. \\
Then the least upper bound $\bigvee T_\lambda$ exists in $\Ef(H)$ and is the limit of the net $(T_\lambda)_{\lambda \in \Lambda}$ in the strong topology.
\end{mylem}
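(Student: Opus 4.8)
The plan is to first produce a candidate operator $T$ as the weak limit of the net, then to check that it lands in $\Ef(H)$ and is the least upper bound for the operator order, and finally to upgrade weak convergence to strong convergence, which is the real content of the statement.

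For the construction, I would observe that since the net is increasing and bounded above by $1$, for each fixed $x \in H$ the net of reals $\prodscal{T_\lambda x}{x}$ is increasing (these are real by Theorem~\ref{self-adjoint-BH}, as each $T_\lambda$ is self-adjoint) and bounded above by $\norm{x}^2$, hence converges to its supremum. Using the polarization identity I would then write $\prodscal{T_\lambda x}{y}$ as a fixed complex combination of four such diagonal terms $\prodscal{T_\lambda(x + i^k y)}{x + i^k y}$, so that $\prodscal{T_\lambda x}{y}$ converges for all $x, y$ to a limit $B(x,y)$. The form $B$ is sesquilinear and bounded, since $\abs{\prodscal{T_\lambda x}{y}} \leq \norm{T_\lambda}\,\norm{x}\,\norm{y} \leq \norm{x}\,\norm{y}$ by Cauchy--Schwarz and $\norm{T_\lambda} \leq 1$ (as $T_\lambda \in \Ef(H)$); by the standard Riesz representation of bounded sesquilinear forms there is a unique $T \in \B(H)$ with $\prodscal{Tx}{y} = B(x,y)$ and $\norm{T} \leq 1$. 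By construction $T$ is the weak limit of $(T_\lambda)_{\lambda \in \Lambda}$.

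Next I would verify $T \in \Ef(H)$ and that $T = \lub T_\lambda$. Self-adjointness follows from $\prodscal{Tx}{x} = \sup_\lambda \prodscal{T_\lambda x}{x} \in \R$ via Theorem~\ref{self-adjoint-BH}; positivity and $T \leq 1$ follow from $0 \leq \prodscal{T_\lambda x}{x} \leq \norm{x}^2$ by passing to the supremum, via Theorem~\ref{positive-BH}. That $T$ is an upper bound is immediate: for fixed $\mu$ and all $\lambda \geq \mu$ we have $\prodscal{T_\lambda x}{x} \geq \prodscal{T_\mu x}{x}$, so the supremum satisfies $\prodscal{Tx}{x} \geq \prodscal{T_\mu x}{x}$, i.e. $T \geq T_\mu$. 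Minimality is equally direct: any upper bound $S$ satisfies $\prodscal{Sx}{x} \geq \prodscal{T_\lambda x}{x}$ for all $\lambda$, hence $\prodscal{Sx}{x} \geq \sup_\lambda \prodscal{T_\lambda x}{x} = \prodscal{Tx}{x}$, so $S \geq T$. Thus $T = \lub T_\lambda$ in the operator order.

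The crux is the final step: showing that the net converges to $T$ strongly, not merely weakly. Here I would exploit that $S_\lambda := T - T_\lambda$ is a positive contraction, $0 \leq S_\lambda \leq 1$, since $T_\lambda \leq T \leq 1$. For such an operator $S_\lambda^2 \leq S_\lambda$, because $S_\lambda(1 - S_\lambda)$ is a product of commuting positive operators and hence positive, so using self-adjointness of $S_\lambda$,
\[
 \norm{S_\lambda x}^2 = \prodscal{S_\lambda^2 x}{x} \leq \prodscal{S_\lambda x}{x} = \prodscal{Tx}{x} - \prodscal{T_\lambda x}{x}.
\]
The right-hand side tends to $0$ by the weak convergence already established, whence $\norm{(T - T_\lambda)x} \to 0$ for every $x \in H$, which is exactly strong convergence. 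I expect this last step to be the main obstacle, since it is the only genuinely analytic part: the inequality $S_\lambda^2 \leq S_\lambda$ for positive contractions is what turns convergence of the quadratic forms into convergence in norm on vectors, while everything preceding it is essentially bookkeeping with Theorems~\ref{self-adjoint-BH} and~\ref{positive-BH}.
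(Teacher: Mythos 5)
Your proof is correct and follows the same overall route as the paper's: the increasing bounded real nets $\prodscal{T_\lambda x}{x}$ converge, polarization upgrades this to weak convergence, $T$ is defined as the weak limit, and the least-upper-bound property is checked via Theorem~\ref{positive-BH}. The one place you genuinely diverge is the passage from weak to strong convergence: the paper outsources this to a citation (\cite[I.3.2.8]{blackadar}, that a bounded net of positive operators converging weakly converges strongly), whereas you prove it, using $0 \leq S_\lambda := T - T_\lambda \leq 1$ and the inequality $S_\lambda^2 \leq S_\lambda$ to get
\[
\norm{(T - T_\lambda)x}^2 = \prodscal{S_\lambda^2 x}{x} \leq \prodscal{S_\lambda x}{x} = \prodscal{Tx}{x} - \prodscal{T_\lambda x}{x} \longrightarrow 0.
\]
This is precisely the standard argument behind the cited fact, and including it makes the proof self-contained at the cost of one implicit sub-step, namely that a product of commuting positive operators is positive (e.g. $S_\lambda(1-S_\lambda) = S_\lambda^{1/2}(1-S_\lambda)S_\lambda^{1/2} \geq 0$), which is standard. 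You are also slightly more careful than the paper in justifying that the limiting sesquilinear form actually comes from a bounded operator (via the Riesz representation of bounded sesquilinear forms), a point the paper's ``define pointwise an operator $T$'' glosses over. No gaps.
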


\begin{proof}
For any operator $U \in \B(H)$, the inner product $\prodscal{U x}{x}$ is real if and only if $U$ is self-adjoint (by Theorem \ref{self-adjoint-BH}). Thus, for each $x \in H$, the net $(\prodscal{T_\lambda x}{x})_{\lambda \in \Lambda}$ of real numbers is increasing, bounded by $\norm{x}^2$ and thus convergent to a limit $\lim_\lambda \prodscal{T_\lambda x}{x}$ since $\R$ is bounded-complete. 

By polarization on norms, $\prodscal{T_\lambda x}{y} = \frac{1}{2} (\prodscal{T_\lambda (x+y)}{(x+y)} - \prodscal{T_\lambda x}{x} - \prodscal{T_\lambda y}{y})$ for any $\lambda \in \Lambda$. 

Then, for all $x, y \in H$, the limit $\lim_\lambda \prodscal{T_\lambda x}{y}$ exists and thus we can define pointwise an operator $T \in \Ef(H)$ by $\prodscal{T x}{y} = \lim_\lambda \prodscal{T_\lambda x}{y}$ for $x,y \in H$. 

Indeed, $T$ is the limit of the net $(T_\lambda)_{\lambda \in \Lambda}$ in the weak topology, and therefore in the strong topology since a bounded net of positive operators converges strongly whenever it converges weakly (see \cite[I.3.2.8]{blackadar}). 

Moreover, $T$ is an upper bound for $(T_\lambda)_{\lambda \in \Lambda}$ since $T_\lambda \leq T$ for every $\lambda \in \Lambda$. By Theorem \ref{positive-BH}, if there is a self-adjoint operator $S \in B(H)$ such that $T_\lambda \leq S$ for every $\lambda \in \Lambda$, then  $\prodscal{T_\lambda x}{x} \leq \prodscal{S x}{x}$ for every $\lambda \in \Lambda$. Thus, $\prodscal{T x}{x} = \lim_\lambda \prodscal{T_\lambda x}{x} \leq \prodscal{S x}{x}$. Then, $\prodscal{(S-T)x}{x} \geq 0$ for every $x \in H$. By Theorem \ref{positive-BH}, $S-T$ positive and thus $T \leq S$. It follows that $T$ is the least upper bound of $(T_\lambda)_{\lambda \in \Lambda}$.
\end{proof}

\begin{mycoro}\label{effects-ccpo}
For every W*-algebra $A$, the poset $[0,1]_A$ is directed-complete.
\end{mycoro}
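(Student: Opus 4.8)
The plan is to reduce the statement to Vigier's theorem (Lemma~\ref{strong-lub}) by using the concrete description of a W*-algebra as a strongly closed algebra of operators. By definition, $A$ is presented as a unital $*$-subalgebra of $\B(H)$ for some Hilbert space $H$ with $A = A''$; by the von Neumann bicommutant theorem (Theorem~\ref{vn-bicommutant}) this is equivalent to $A$ being closed in the strong operator topology of $\B(H)$. Under this identification the effects of $A$ are exactly the effects of $\B(H)$ that happen to lie in $A$, that is $[0,1]_A = \Ef(H) \cap A$, and the partial order on $\sa{A}$ is the restriction of the order on $\sa{\B(H)}$, since positivity is intrinsic to a C*-subalgebra (an element of $A$ is positive in $A$ iff it is positive in $\B(H)$).

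Next I would take an arbitrary directed subset $\Delta \dirsubset [0,1]_A$ and regard it as an increasing net in $\Ef(H)$, indexed by $\Delta$ itself ordered by $\leq$. Lemma~\ref{strong-lub} then provides a least upper bound $T = \lub \Delta$ computed in $\Ef(H)$, and furthermore guarantees that $T$ is the limit of this net in the strong topology.

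The key step, and the one I expect to be the main obstacle, is to verify that $T$ actually belongs to $A$: \emph{a priori} the supremum furnished by Vigier's theorem lives only in the ambient algebra $\B(H)$, and there is no reason for a join computed in $\Ef(H)$ to remain inside an arbitrary C*-subalgebra (indeed Example~\ref{cont-ex-cstar-dcpo} shows such suprema can fail to exist in general). This is precisely where the W*-algebra hypothesis is essential: the net lies in $A$ and converges to $T$ in the strong operator topology, so the strong-closedness of $A$ forces $T \in A$. Hence $T \in \Ef(H) \cap A = [0,1]_A$.

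Finally I would check that $T$ is the least upper bound of $\Delta$ computed \emph{inside} $[0,1]_A$, and not merely inside $\Ef(H)$. Since the order on $[0,1]_A$ is the restriction of that on $\Ef(H)$, the element $T$ is certainly an upper bound of $\Delta$ in $[0,1]_A$; and any other upper bound $S \in [0,1]_A \subseteq \Ef(H)$ already dominates $T$ in $\Ef(H)$ by the least-upper-bound property established in Lemma~\ref{strong-lub}, hence in $[0,1]_A$. This shows that every directed subset of $[0,1]_A$ has a least upper bound, i.e.\ $[0,1]_A$ is directed-complete.
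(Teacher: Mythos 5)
Your proposal is correct and follows essentially the same route as the paper: identify $A$ with a strongly closed subalgebra of $\B(H)$, apply Lemma~\ref{strong-lub} to obtain the least upper bound as a strong limit in $\Ef(H)$, and use strong closedness to conclude that this limit lies in $[0,1]_A$. The extra details you supply (viewing a directed set as a net indexed by itself, and checking that the join in $\Ef(H)$ remains a join in $[0,1]_A$) are left implicit in the paper but add nothing substantively different.
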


\begin{proof}
Let $A$ be a W*-algebra. By definition, $A$ is a strongly closed subalgebra of $\B(H)$, for some Hilbert space $H$.

Let $(T_\lambda)_{\lambda \in \Lambda}$ be an increasing net in $[0,1]_A \subseteq \Ef(H)$. By Lemma \ref{strong-lub}, $(T_\lambda)_{\lambda \in \Lambda}$ converges strongly to $\lub T_\lambda \in \Ef(H)$. It follows that $\lub T_\lambda \in [0,1]_A$ because $[0,1]_A$ is strongly closed. Thus, $[0,1]_A$ is directed-complete.
\end{proof}

This corollary constitute a crucial step in the proof of Theorem \ref{wstar-dcpo}, as it unveils a link between the topological properties and the order-theoretic properties of W*-algebras.

\begin{mylem}\label{psu-determined}
Any positive map $f : A \to B$ between C*-algebras is completely determined and defined by its action on $[0,1]_A$. Hence, the functor $[0,1]_{(-)} : \PSU \to \GEMod$ is full and faithful.
\end{mylem}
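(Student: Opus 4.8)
The plan is to establish the two halves of the statement separately: the determinacy claim immediately yields faithfulness, while the construction it suggests, run in reverse, yields fullness.

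First I would prove that a positive map $f : A \to B$ is determined by its restriction $f|_{[0,1]_A}$. Since $f$ is by definition linear, it suffices to know $f$ on $A^+$, because every element decomposes as $x = x_1 - x_2 + i x_3 - i x_4$ with each $x_j \in A^+$, so that $f(x)$ is the corresponding $\C$-linear combination of the values $f(x_j)$. In turn, $f$ on $A^+$ is fixed by $f$ on $[0,1]_A$: for $x \in A^+$ with $x \neq 0$ one has $x \leq \norm{x} \cdot 1$ (the inequality used in the proof of Lemma~\ref{pointwise-lub}), hence $x/\norm{x} \in [0,1]_A$ and $f(x) = \norm{x} \cdot f(x/\norm{x})$ by homogeneity, while $f(0)=0$. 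Thus $f$ is completely determined by its values on the effects, so any two PsU-maps agreeing on $[0,1]_A$ coincide, which is exactly faithfulness of $[0,1]_{(-)}$.

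For fullness, given $h \in \GEMod([0,1]_A,[0,1]_B)$, I would invert the recipe above and define $\hat h : A \to B$ by $\hat h(x) = \norm{x} \cdot h(x/\norm{x})$ for $x \in A^+ \setminus \{0\}$ and $\hat h(0) = 0$, then extend. The steps, in order, are: (i) positive homogeneity of $\hat h$ on $A^+$, which is read off the defining formula; (i$'$) the identity $\hat h|_{[0,1]_A} = h$, obtained by writing $a = \norm{a} \bullet (a/\norm{a})$ for $a \in [0,1]_A$ and using that $h$ preserves scalar multiplication; (ii) additivity of $\hat h$ on $A^+$; (iii) extension to $\sa{A}$ via $\hat h(x_+ - x_-) = \hat h(x_+) - \hat h(x_-)$, which additivity on $A^+$ makes well defined and $\R$-linear; and (iv) extension to all of $A$ through real and imaginary parts, yielding a $\C$-linear map. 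Finally I would check that $\hat h$ is positive, since its values on $A^+$ lie in $B^+$, and sub-unital, since $\hat h(1) = h(1) \in [0,1]_B$; together with (i$'$) this exhibits $\hat h$ as a PsU-map restricting to $h$, so $[0,1]_{(-)}$ is full.

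The main obstacle is step (ii). The morphism $h$ only respects addition of orthogonal effects lying inside $[0,1]_A$, so for $x, y \in A^+$ (both nonzero) one must first rescale by the common factor $t = \norm{x+y}$: since $x \leq x + y \leq t \cdot 1$ and likewise for $y$, the elements $x/t$ and $y/t$ lie in $[0,1]_A$ and satisfy $x/t \perp y/t$ because $x/t + y/t = (x+y)/t \leq 1$. Preservation of $\ovee$ then gives $h((x+y)/t) = h(x/t) + h(y/t)$, and multiplying through by $t$ while using (i)--(i$'$) to recognise $t \cdot h(x/t) = \hat h(x)$ and $t \cdot h(y/t) = \hat h(y)$ yields $\hat h(x+y) = \hat h(x) + \hat h(y)$; the degenerate case $t=0$ forces $x=y=0$ and is trivial. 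Once additivity on $A^+$ is secured, well-definedness of the difference extension, $\C$-linearity, positivity, and sub-unitality are all routine.
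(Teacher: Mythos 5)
Your proposal is correct, and it is in fact more complete than the paper's own proof. For the determinacy/faithfulness half you follow exactly the paper's route: rescale a nonzero $x \in A^+$ into $[0,1]_A$ via $x \leq \norm{x}\cdot 1$, recover $f(x) = \norm{x}\, f(x/\norm{x})$, and pass to arbitrary elements through the decomposition $x = x_1 - x_2 + i x_3 - i x_4$ into positives. Where you genuinely diverge is fullness: the paper's proof stops at "determined and defined" and never verifies that an arbitrary $h \in \GEMod([0,1]_A,[0,1]_B)$ actually extends to a positive linear map on $A$ — which is precisely the nontrivial content of fullness, since $h$ a priori only respects $\ovee$ on orthogonal pairs of effects and $[0,1]$-scalar multiplication. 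Your step (ii) supplies the missing argument correctly: the common rescaling by $t=\norm{x+y}$ puts $x/t$ and $y/t$ into $[0,1]_A$ with $x/t \perp y/t$, and the identity $t\cdot h(x/t) = \norm{x}\, h(x/\norm{x})$ needed to close the computation does follow from $h$ preserving scalar multiplication together with $\norm{x}\leq t$ (so that $\norm{x}/t \in [0,1]$). The subsequent well-definedness of the extension to $\sa{A}$ via $x_+ - x_-$ and the passage to real and imaginary parts are, as you say, routine once additivity on $A^+$ is in hand. In short, your write-up proves the full statement; the paper's proves only the faithful half and implicitly asserts the rest.
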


\begin{proof}
A positive map of C*-algebras $f : A \to B$ restrict by definition to a map $f : A^+ \to B^+$. By Lemma \ref{presrv-order}, $f$ preserves the order $\leq$ on positive elements and thus restricts to $[0,1]_A \to [0,1]_B$:

Let $x \in A^+ \setminus \set{0}$. From $x \leq \norm{x} 1$, we can see that $\frac{1}{\norm{x}}x \in [0,1]_A$ and thus $f(\frac{1}{\norm{x}}x) \in [0,1]_B$. Moreover, $f(x)=\norm{x}f(\frac{1}{\norm{x}}x)$. This statement can be extended to every element in $A$ since each $y \in A$ is a linear combination of four positive elements (see \cite[II.3.1.2]{blackadar}), determining $f(y) \in B$.
\end{proof}

\begin{proof}[Proof of Theorem \ref{wstar-dcpo}]
Let $A$ and $B$ be two W*-algebras. By Corollary \ref{effects-ccpo}, $[0,1]_A$ and $[0,1]_B$ are directed-complete. 


We now consider an increasing net $(f_\lambda)_{\lambda \in \Lambda}$ of NsU-maps from $A$ to $B$, increasing in the Löwner order.
Then, for every $x \in A^+$, there is an increasing net $(f_\lambda(x))_{\lambda \in \Lambda}$ bounded by $\norm{x} \scprod 1$ (by Lemma \ref{pointwise-lub}). 

Moreover, for every non-zero element $x \in A^+$, from the fact that $[0,1]_B$ is directed-complete, we obtain that the increasing net $(f_\lambda(\frac{x}{\norm{x}}))_{\lambda \in \Lambda}$ has a least upper bound $\bigvee f_\lambda(\frac{x}{\norm{x}})$ in $[0,1]_B$ and thus we can define pointwise the following upper bound $f : [0,1]_A \to [0,1]_B$ for the increasing net $(f'_\lambda)_{\lambda \in \Lambda}$ of NsU-maps from $[0,1]_A$ to $[0,1]_B$ such that, for every $\lambda \in \Lambda$, $f'_\lambda(x) = f_\lambda(\frac{x}{\norm{x}}) (x \neq 0)$:

\[
  f(\frac{x}{\norm{x}}) = \bigvee f_\lambda(\frac{x}{\norm{x}}) \quad (x \in A^+ \setminus \set{0})
\]

This upper bound $f$ is a positive sub-unital map by construction and can be extended to an upper bound $f : A \to B$ for the increasing net $(f_\lambda)_{\lambda \in \Lambda}$: for every nonzero $x \in A^+$, the increasing sequence $(f_\lambda(x))_{\lambda \in \Lambda}=(\norm{x} f_\lambda(\frac{x}{\norm{x}}))_{\lambda \in \Lambda}$ has a least upper bound $\bigvee f_\lambda (x) = \norm{x} \bigvee f_\lambda(\frac{x}{\norm{x}})$ in $B^+$ and thus one can define pointwise an upper bound $f : A \to B$ for $(f_\lambda)_{\lambda \in \Lambda}$ by $f(x)=\bigvee f_\lambda(x)$ for every $x \in A^+$.

We now need to prove that the map $f$ is normal, by exchange of joins. \\
Let $(x_\gamma)_{\gamma \in \Gamma}$ be an increasing bounded net in $A^+$ with least upper bound $\lubgamma x_\gamma$. 
For every $\gamma' \in \Gamma$, we observe that $x_{\gamma'} \leq \lubgamma x_\gamma$ and thus, by Lemma \ref{presrv-order}, $f(x_{\gamma‘}) \leq f(\lubgamma x_\gamma)$. As seen earlier, since $[0,1]_B$ is directed-complete, the increasing net $(f(x_\gamma))_{\gamma \in \Gamma}$, which is equal by definition to the increasing net $(\lublambda (f_\lambda (x_\gamma)))_{\gamma \in \Gamma}$, has a least upper bound in $B^+$ defined by $\lubgamma f(x_\gamma)=\lub_{\gamma \in \Gamma, x_\gamma \neq 0} \norm{x_\gamma} f(\frac{1}{\norm{x_\gamma}}x_\gamma)$ if there is a $\gamma'' \in \Gamma$ such that $x_{\gamma''} \neq 0$ and by $\lubgamma f(x_\gamma) = 0$ otherwise. It follows that $\lubgamma f(x_\gamma) \leq f(\lubgamma x_\gamma)$.

We have to prove now that $f(\lubgamma x_\gamma) \leq \lubgamma f(x_\gamma)$. Since each map $f_\lambda$ ($\lambda \in \Lambda$) is normal, we obtain that $f(\lubgamma x_\gamma)=\lublambda(f_\lambda(\lubgamma x_\gamma))=\lublambda(\lubgamma(f_\lambda(x_\gamma))$. Moreover, for $\gamma' \in \Gamma$ and $\lambda' \in \Lambda$, $f_{\lambda'}(x_{\gamma'}) \leq \lublambda f_\lambda(x_{\gamma'}) \leq \lubgamma (\lublambda f_\lambda(x_{\gamma}))$. Then, $\lubgamma f_{\lambda'}(x_{\gamma}) \leq \lubgamma (\lublambda f_\lambda(x_{\gamma}))$ and thus $\lublambda (\lubgamma f_{\lambda}(x_{\gamma})) \leq \lubgamma (\lublambda f_\lambda(x_{\gamma}))$. It follows that $f(\lubgamma x_\gamma) = \lublambda (\lubgamma f_{\lambda}(x_{\gamma})) \leq \lubgamma (\lublambda f_\lambda(x_{\gamma})) = \lubgamma f(x_\gamma)$.

Let $g \in \wPSU(A,B)$ be an upper bound for the increasing net $(f_\lambda)_{\lambda \in \Lambda}$. For $\lambda' \in \Lambda$ and $x \in A^+$, $f_{\lambda'}(x) \leq g(x)$. Then, $\forall x \in A^+, f(x) = \bigvee f_\lambda(x) \leq g(x)$, i.e. $f \lwnr g$. It follows that $f$ is the least upper bound of $(f_\lambda)_{\lambda \in \Lambda}$.
\end{proof}

\begin{mytheo}
The category $\wPSU$ is a $\Dcpo_\perp$-enriched category.
\end{mytheo}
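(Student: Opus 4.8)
The plan is to verify the two defining requirements of a $\Dcpo_\perp$-enrichment: that every hom-poset is an object of $\Dcpo_\perp$ (a dcpo with a bottom element), and that the composition maps are morphisms of $\Dcpo_\perp$ (Scott-continuous) with respect to these structures. The associativity and unit coherence diagrams then hold automatically: they already commute at the level of underlying sets because $\wPSU$ is an ordinary category, and the forgetful functor $\Dcpo_\perp \to \mathbf{Sets}$ is faithful, so equality of the underlying functions is equality of morphisms in $\Dcpo_\perp$.

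First I would record that, for W*-algebras $A$ and $B$, the hom-poset $(\wPSU(A,B),\lwnr)$ is a dcpo by Theorem \ref{wstar-dcpo}, and that it has a bottom element given by the constant zero map $0 : A \to B$. Indeed $0$ is positive, satisfies $0 \leq 0(1) = 0 \leq 1$ (so it is sub-unital), is trivially normal (it sends every increasing net to the constant net $0$), and obeys $0(x) = 0 \leq f(x)$ for every $f \in \wPSU(A,B)$ and every $x \in A^+$, whence $0 \lwnr f$. Thus each hom-object lies in $\Dcpo_\perp$, and the identity-selecting map out of the one-point dcpo is a constant, hence Scott-continuous, map.

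The core of the argument is that composition $\circ : \wPSU(B,C) \times \wPSU(A,B) \to \wPSU(A,C)$, $(g,f) \mapsto g \circ f$, is Scott-continuous. I would reduce joint continuity to separate continuity in each argument, using the standard fact that a monotone map out of a product of dcpos is Scott-continuous if and only if it is Scott-continuous in each argument separately. Monotonicity is direct: if $f \lwnr f'$ and $g \lwnr g'$, then for $x \in A^+$ we have $g(f(x)) \leq g(f'(x))$ (since $g$ is monotone by Lemma \ref{presrv-order} and $f(x) \leq f'(x)$) and $g(f'(x)) \leq g'(f'(x))$ (since $g \lwnr g'$ and $f'(x) \in B^+$), giving $g \circ f \lwnr g' \circ f'$. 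For continuity I would exploit the fact, read off from the construction in the proof of Theorem \ref{wstar-dcpo}, that directed joins in these hom-dcpos are computed pointwise on $A^+$. Continuity in $g$ is then immediate: for a directed net $(g_\lambda)_{\lambda \in \Lambda}$ with join $\bigvee g_\lambda$ and any $x \in A^+$, one has $((\bigvee g_\lambda) \circ f)(x) = (\bigvee g_\lambda)(f(x)) = \bigvee g_\lambda(f(x)) = \bigvee (g_\lambda \circ f)(x)$, so $(\bigvee g_\lambda)\circ f = \bigvee (g_\lambda \circ f)$.

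Continuity in the variable $f$ is where normality enters, and I expect it to be the only delicate point. For a directed net $(f_\lambda)_{\lambda \in \Lambda}$ with join $\bigvee f_\lambda$ and fixed $g$, I compute pointwise on $x \in A^+$: $(g \circ (\bigvee f_\lambda))(x) = g\big(\bigvee f_\lambda(x)\big)$, and here the normality of $g$ (equivalently, its weak continuity, as recorded after the definition of normal maps) is exactly what permits exchanging $g$ with the join, yielding $g\big(\bigvee f_\lambda(x)\big) = \bigvee g(f_\lambda(x)) = \bigvee (g \circ f_\lambda)(x)$, hence $g \circ (\bigvee f_\lambda) = \bigvee (g \circ f_\lambda)$. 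Combining monotonicity with separate continuity in both arguments gives joint Scott-continuity of composition, which completes the verification that $\wPSU$ is $\Dcpo_\perp$-enriched.
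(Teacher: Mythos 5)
Your proposal is correct and follows essentially the same route as the paper: hom-posets are dcpos with the zero map as bottom by Theorem \ref{wstar-dcpo}, and joint Scott-continuity of composition is reduced to separate continuity in each argument, with precomposition handled by the pointwise computation of joins. The one place you go beyond the paper is the postcomposition case, which the paper dismisses with ``similarly'' but which, as you correctly observe, is not symmetric: it genuinely requires the normality of the outer map $g$ to exchange it with the pointwise join, and making that dependence explicit is a worthwhile improvement.
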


\begin{proof}
The category $\Dcpo_\perp$ is cartesian closed and therefore monoidal.

For every pair $(A,B)$ of W*-algebras, $\wPSU(A,B)$ together with the Löwner order is a dcpo with  zero map as bottom, and therefore $\wPSU(A,B) \in \Dcpo_\perp$.

In particular, for every W*-algebra $A$, $\wPSU(A,A) \in \Dcpo_\perp$. The element $1:=\set{\bot}$ is the terminal object of the cartesian closed category $\Dcpo_\perp$. We consider now for every W*-algebra $A$ a map $I_A : 1 \to \wPSU(A,A)$ such that $I_A(\bot) \in \wPSU(A,A)$ is the identity map on $A$. The map $I_A$ is clearly Scott-continuous for every W*-algebra $A$.

Then, what need to be proved is that, given three W*-algebras $A, B, C$, the composition $\circ_{A,B,C} : \wPSU(B,C) \times \wPSU(A,B) \to \wPSU(A,C)$ is Scott-continuous. By \cite[Lemma 3.2.6]{abramsky-jung}, it is equivalent to show that:
\begin{itemize}
 \item for every NsU-map $f : A \to B$, the precomposition $(-) \circ f : \wPSU(B,C) \to \wPSU(A,C)$ given by $g \mapsto g \circ f$ is Scott-continuous.
 \item for every NsU-map $h : B \to C$, the postcomposition $h \circ (-) : \wPSU(A,B) \to \wPSU(A,C)$ given by $g \mapsto h \circ g$ is Scott-continuous.
\end{itemize} 

We now consider a NsU-map $f : A \to B$ and the increasing net $(g_\lambda)_{\lambda \in \Lambda}$ in $\wPSU(B,C)$, with least upper bound $\lublambda g_\lambda \in \wPSU(B,C)$. One can define an upper bound pointwise by $u(x) = ((\lublambda g_\lambda) \circ f)(x)$ for the increasing net $(g_\lambda \circ f)_{\lambda \in \Lambda}$ in $\wPSU(A,C)$. It is easy to check that $u$ is a least upper bound for the increasing net $(g_\lambda \circ f)_{\lambda \in \Lambda}$: for every upper bound $v \in \wPSU(A,C)$ of the increasing net $(g_\lambda \circ f)_{\lambda \in \Lambda}$, we have that $\forall \lambda \in \Lambda, g_\lambda \circ f \lwnr v$, i.e. $\forall \lambda \in \Lambda, \forall x \in A^+, g_\lambda (f(x)) \leq v(x)$ and thus $\forall x \in A^+, u(x)=((\lublambda g_\lambda) \circ f)(x) = (\lublambda g_\lambda)(f(x)) \leq v(x)$, which implies that $u \lwnr v$. It follows that the precomposition is Scott-continuous and, similarly, the postcomposition is Scott-continuous.
\end{proof}

\subsection{Monotone-complete C*-algebras}
A posteriori, we found out that it is known that the subset of effects of an arbitrary W*-algebra is directed-complete \cite[III.3.13-16]{takesaki1} but it is probably the first time that one formulates, proves and strengthens this fact from a domain-theoretic point of view. Moreover, it turns out that Theorem \ref{wstar-dcpo} can be slightly generalize to the following theorem.

\begin{mytheo}\label{cstar-dcpo}
Let $A$ and $B$ be two C*-algebras. \\
If $[0,1]_B$ is directed-complete, then the poset $(\PSU(A,B),\lwnr)$ is directed-complete.
\end{mytheo}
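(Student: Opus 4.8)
The plan is to follow the construction used in the proof of Theorem~\ref{wstar-dcpo}, noting that in the present, more general setting two of its ingredients become either a hypothesis or superfluous. First, the directed-completeness of $[0,1]_B$, which there was extracted from Corollary~\ref{effects-ccpo}, is now assumed outright. Second, and crucially, since we work in $\PSU$ rather than $\wPSU$, the maps involved are \emph{not} required to be normal, so the entire \emph{exchange of joins} argument that established normality of the pointwise supremum can simply be dropped. What remains is to build a candidate least upper bound and to verify that it is a PsU-map and is indeed least.

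Concretely, I would start from an increasing net $(f_\lambda)_{\lambda \in \Lambda}$ in $(\PSU(A,B),\lwnr)$. By Lemma~\ref{pointwise-lub}, for each $x \in [0,1]_A$ the net $(f_\lambda(x))_{\lambda \in \Lambda}$ is an increasing, hence directed, subset of $[0,1]_B$; since $[0,1]_B$ is directed-complete it has a least upper bound, and I set $f(x) = \lub_\lambda f_\lambda(x)$, defining a function $f : [0,1]_A \to [0,1]_B$. Rather than extend $f$ to all of $A$ by hand through the decomposition of an element into four positive parts, I would exhibit $f$ as a homomorphism of generalized effect modules and then invoke the fullness of the functor $[0,1]_{(-)} : \PSU \to \GEMod$ from Lemma~\ref{psu-determined} to obtain a genuine PsU-map $f : A \to B$ restricting to it; sub-unitality comes for free since $f(1) = \lub_\lambda f_\lambda(1) \leq 1$.

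The heart of the argument, and the step I expect to be the main obstacle, is checking that this pointwise supremum is a $\GEMod$-morphism, i.e. that it preserves $0$, the partial sum $\ovee$, and the scalar action. Preservation of $0$ is immediate, and preservation of $r \bullet (-)$ follows because multiplication by a fixed $r \in [0,1]$ is an order-isomorphism of $[0,1]_B$ onto its image and order-isomorphisms preserve existing suprema. The delicate point is additivity: for $x \perp y$ in $[0,1]_A$ one must show $\lub_\lambda\bigl(f_\lambda(x) \ovee f_\lambda(y)\bigr) = \bigl(\lub_\lambda f_\lambda(x)\bigr) \ovee \bigl(\lub_\lambda f_\lambda(y)\bigr)$. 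In the W*-setting of Theorem~\ref{wstar-dcpo} this was automatic, since the joins were strong limits (Lemma~\ref{strong-lub}) and addition is strongly continuous; here no topology on $B$ is available and the identity must be derived purely order-theoretically. The plan is to use that, for fixed $a \in [0,1]_B$, the translation $z \mapsto a \ovee z$ is an order-isomorphism from $[0,a^\perp]$ onto $[a,1]$, with inverse $w \mapsto w \ominus a$, so that it too preserves suprema; applying this with $a$ ranging first over the values $f_\mu(x)$ and then over the resulting sums, together with the directedness of $\Lambda$, yields both inequalities and incidentally shows that $f(x) \ovee f(y)$ is defined (bounded by $1$).

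Finally I would verify that $f$ is the least upper bound. That $f$ dominates each $f_\mu$ is clear on $[0,1]_A$, hence on all of $A^+$ after rescaling by positive homogeneity, since $f_\mu(x) = \norm{x} f_\mu\bigl(\tfrac{x}{\norm{x}}\bigr) \leq \norm{x} \lub_\lambda f_\lambda\bigl(\tfrac{x}{\norm{x}}\bigr) = f(x)$ for nonzero $x \in A^+$. If $g \in \PSU(A,B)$ satisfies $f_\lambda \lwnr g$ for all $\lambda$, then $f_\lambda(x) \leq g(x)$ for every $x \in A^+$, whence $f(x) = \lub_\lambda f_\lambda(x) \leq g(x)$ and $f \lwnr g$. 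Thus $(\PSU(A,B),\lwnr)$ is directed-complete.
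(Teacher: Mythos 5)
Your proposal is correct and follows the same overall strategy as the paper's proof: build the candidate supremum pointwise using Lemma~\ref{pointwise-lub} for boundedness and the directed-completeness of $[0,1]_B$ for existence, then check leastness against an arbitrary upper bound. Where you genuinely diverge is in how the pointwise supremum is promoted to an actual PsU-map $A \to B$. The paper rescales ($\bigvee_\lambda f_\lambda(x) = \norm{x}\bigvee_\lambda f_\lambda(\tfrac{x}{\norm{x}})$ for nonzero $x \in A^+$) and declares the result ``positive and sub-unital by construction,'' leaving the additivity of $x \mapsto \bigvee_\lambda f_\lambda(x)$ unexamined. You correctly isolate this as the delicate step and supply a purely order-theoretic proof (translation by a fixed self-adjoint element is an order-automorphism of $\sa{B}$ and so preserves existing suprema; combined with the directedness of $\Lambda$ this yields $\lub_\lambda\bigl(f_\lambda(x) \ovee f_\lambda(y)\bigr) = \lub_\lambda f_\lambda(x) \ovee \lub_\lambda f_\lambda(y)$, and the same device handles the scalar action). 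You then extend from $[0,1]_A$ to $A$ via the fullness of $[0,1]_{(-)} : \PSU \to \GEMod$ rather than by hand. This is sound and in fact more complete than the paper's argument at its weakest point; the one caveat is that fullness is only asserted, not really established, in Lemma~\ref{psu-determined} (its proof shows faithfulness), so for a self-contained argument you should either prove fullness or extend $f$ directly by positive homogeneity and the decomposition of an arbitrary element into positive parts, as the paper implicitly does.
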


\begin{proof}
Let $(f_\lambda)_{\lambda \in \Lambda}$ be an increasing net of PsU-maps from a C*-algebra $A$ to a C*-algebra $B$.

It follows that for every $x \in A^+$, there is an increasing net $(f_\lambda(x))_{\lambda \in \Lambda}$ in $B^+$ bounded by $\norm{x} \scprod 1$ (by Lemma \ref{pointwise-lub}).

We now assume that $[0,1]_B$ is directed-complete. Then, as in the proof of Theorem \ref{wstar-dcpo}, for every nonzero $x \in A^+$, the increasing net $(f_\lambda(x))_{\lambda \in \Lambda}$, which is equal to $(\norm{x} f_\lambda(\frac{x}{\norm{x}}))_{\lambda \in \Lambda}$, has a least upper bound $\bigvee f_\lambda (x) = \norm{x} \bigvee f_\lambda(\frac{x}{\norm{x}})$ in $B^+$.

Thus, we can define for $(f_\lambda)_{\lambda \in \Lambda}$ an upper bound $f : x \mapsto \bigvee f_\lambda (x)$ , which is positive and sub-unital by construction.

Let $g \in \PSU(A,B)$ be an upper bound for the increasing net $(f_\lambda)_{\lambda \in \Lambda}$. For $\lambda' \in \Lambda$ and $x \in A^+$, $f_{\lambda'}(x) \leq g(x)$, which implies that $\forall x \in A^+, f(x) = \bigvee f_\lambda (x) \leq g(x)$, i.e. $f \lwnr g$. Thus, $f$ is the least upper bound of $(f_\lambda)_{\lambda \in \Lambda}$.
\end{proof} 

In operator theory, a C*-algebra is monotone-complete (or monotone-closed) if it is directed-complete for bounded increasing nets of positive elements. The notion of mo\-no\-to\-ne-com\-ple\-te\-ness goes back at least to Dixmier \cite{dixmier-stone} and Kadison \cite{kadison-wdcpo} but, to our knowledge, it is the first time that the notion of mo\-no\-to\-ne-com\-ple\-te\-ness is explicitly related to the notion of directed-completeness.  The interested reader will find in Appendix \ref{appendix-correspondence} a more detailed correspondence between operator theory and order theory.

It is natural to ask if all monotone-complete C*-algebras are W*-algebras. Dixmier proved that every W*-algebra is a monotone-complete C*-algebra and that the converse is not true \cite{dixmier-stone}. For an example of a subclass of monotone-complete C*-algebras which are not W*-algebras, we refer the reader to a recent work by Saitô and Wright \cite{saito-wright}. 


\section{A quantum weakest pre-condition calculus}\label{WP}
Dijkstra invented in \cite{dijkstra} the weakest pre-condition calculus, a systematic method to analyse the properties of programs. In this calculus, every program is associated to a statement $s$ and an operation $\wkp(s)$ transforms a proposition $Q$ in a proposition $P := \wkp(s)(Q)$, with the guarantee that $Q$ holds after the execution of the program denoted by $s$ if $P$ holds before the execution. In this context, $P$ is called weakest pre-condition and $Q$ is called post-condition.

More formally, a program is interpreted as a function $s$ from a state space $X$ to a state space $Y$ and is in one-to-one correspondance with a map $\wkp(s)$ which computes the weakest precondition $\wkp(s)(Q)$ from a given post-condition $Q$.

In \cite{manifesto}, Jacobs provided a categorical interpretation of the weakest pre-condition calculus, establishing the following commutative diagram for discrete probabilistic computations, denoted via the distribution monad $\DM$:

$$\xymatrix@R+.5pc{
\opp{\EMod}\ar@/^1.5ex/[rr]^-{\EMod(-,[0,1])} 
   & \top & \Conv\ar@/^1.5ex/[ll]^-{\Conv(-,[0,1])} \\
& \kl{\DM} \ar[ul]^{\mbox{[predicates/effects]}\quad\quad}\ar[ur]_{\quad\mbox{[states]}}
}$$

A similar "state-and-effect triangle" can be provided for discrete (non-terminating) probabilistic computations via the subdistribution monad $\SDM$, see Appendix \ref{appendix-sub-adj}:

$$\xymatrix@R+.5pc{
\opp{\GEMod}\ar@/^1.5ex/[rr]^-{\GEMod(-,[0,1])} 
   & \top & \SConv\ar@/^1.5ex/[ll]^-{\SConv(-,[0,1])} \\
& \kl{\SDM} \ar[ul]^{\mbox{[predicates/effects]} \qquad \PredS\quad\quad}\ar[ur]_{\quad\mbox{[states]}}
}$$

Thus, we can formulate a weakest precondition calculus for discrete subprobabilistic computations, in terms of the following bijective correspondences:

$$
\begin{prooftree}
{\xymatrix{ 
\text{Kleisli maps } X\ar[r]^-{f} & \SDM(Y)}}
\Justifies
\begin{prooftree}
{\xymatrix{ \text{algebra maps }\SDM(X)\ar[r] & \SDM(Y)}}
\Justifies
{\xymatrix{ \text{generalized effect module maps } \PredS(Y)\ar[r]_-{\wkp(f)} & \PredS(X)}}
\end{prooftree}
\end{prooftree}
$$

We found out that a similar result exists for quantum computations, denoted via W*-algebras:

\begin{itemize}
 \item A (generalized) effect module will be called directed-complete if it is directed-complete as a poset. We will now consider directed-complete generalized effect modules with a separating set\footnote{A set of functions $F$ from a set $X$ to a set $Y$ separates the points of $X$ if for every pair of distinct elements $(x,y) \in X \times X$, there exists a function $f \in F$ such that $f(x) \neq f(y)$.} of Scott-continuous states, i.e. Scott-continuous maps from a generalized effect module $X$ to the interval $\unit$. Together with Scott-continuous maps of generalized effect modules, it gives rise to a category $\sdcGEMod$.
 \item The full and faithful functor $[0,1]_{(-)}: \wPSU \to \sdcGEMod$ of Proposition \ref{pred-wstar}, used as $[0,1]_{(-)}: \opp{(\wPSU)} \to \opp{\sdcGEMod}$, will be our "predicate functor" and describes categorically a quantum "logic of effects". 
 \item We now consider a "normal state functor" $\NS : \opp{(\wPSU)} \to \SConv$ defined by:
 \begin{align*}
   \NS(A) & = \wPSU(A,\C) \simeq \sdcGEMod(\unit_A,\unit_\C) \\
   \NS(A \overset{f}{\to} B) & = (-) \circ f : \NS(B) \to \NS(A)
 \end{align*}
 \item There is an adjunction between $\sdcGEMod$ and $\SConv$, by homming into $\unit$.
\end{itemize}

Thus, one obtain the following theorem, which provides a categorical representation of the duality between states and effects via W*-algebras. The detailed proof can be found in Appendix \ref{appendix-quantum-tri}.

\begin{mytheo}\label{states-effects-wpsu}
The following state-and-effect triangle is a commutative diagram: 
$$\xymatrix@R+.5pc{
\opp{\sdcGEMod}\ar@/^1.5ex/[rr]^-{\sdcGEMod(-,[0,1])} 
   & \top & \SConv\ar@/^1.5ex/[ll]^-{\SConv(-,[0,1])} \\
& \opp{(\wPSU)} \ar[ul]^-{[0,1]_{(-)}}\ar[ur]_{\quad\NS}
}$$
\end{mytheo}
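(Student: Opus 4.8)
The plan is to read off the two functors emanating from $\opp{(\wPSU)}$, confirm that they land in the stated categories, recall the top adjunction obtained by homming into $[0,1]$, and then prove that the triangle commutes by producing a natural isomorphism $\sdcGEMod(-,[0,1]) \circ [0,1]_{(-)} \cong \NS$. Since the individual ingredients have largely been prepared in the preceding sections, the work concentrates on assembling them coherently and on verifying naturality.

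First I would confirm that the predicate functor $[0,1]_{(-)}$ takes values in $\sdcGEMod$. For a W*-algebra $A$, the set $[0,1]_A$ is an effect module (by the remark following the definition of effect modules), hence a generalized effect module; it is directed-complete by Corollary \ref{effects-ccpo}; and its Scott-continuous states separate points. On morphisms, an NsU-map $f$ restricts to a map $[0,1]_f$ of generalized effect modules that is Scott-continuous precisely because $f$ is normal, normality coinciding with positive Scott-continuity on positive elements as noted after the definition of normal maps. That $[0,1]_{(-)}$ is moreover full and faithful is Proposition \ref{pred-wstar}. I would then check that $\NS$ lands in $\SConv$: for normal states $\phi_i \in \wPSU(A,\C)$ and scalars $r_i \in [0,1]$ with $\sum_i r_i \leq 1$, the pointwise combination $\sum_i r_i \phi_i$ is again positive, sub-unital and normal, so $\wPSU(A,\C)$ is closed under subconvex sums and is a subconvex set; functoriality is immediate since $\NS$ acts by precomposition.

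The heart of the argument is the commutativity of the triangle, which I would establish as a natural isomorphism $\sdcGEMod(-,[0,1]) \circ [0,1]_{(-)} \cong \NS$. On objects this is immediate: since the effects of $\C$ are exactly the unit interval, $[0,1]_\C = [0,1]$, and full-faithfulness of $[0,1]_{(-)}$ gives, for every W*-algebra $A$,
\[
 \big(\sdcGEMod(-,[0,1]) \circ [0,1]_{(-)}\big)(A) = \sdcGEMod([0,1]_A,[0,1]_\C) \cong \wPSU(A,\C) = \NS(A),
\]
which is exactly the isomorphism already recorded in the definition of $\NS$. The step I expect to be the main obstacle is naturality: I would have to show this family of bijections is natural in $A$, that is, that for $f \in \wPSU(A,B)$ precomposition by the effect-module map $[0,1]_f$ corresponds, under the bijection, to precomposition by $f$. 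This is a diagram chase in which one must track the two opposite-category directions carefully and invoke full-faithfulness once more, now to recognise that every Scott-continuous map $[0,1]_A \to [0,1]$ arises uniquely as $[0,1]_g$ for a normal state $g \in \wPSU(A,\C)$; unwinding these identifications shows the naturality square commutes, and hence that the triangle is commutative.
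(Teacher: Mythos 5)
There is a genuine gap: you verify only one of the two triangles in the diagram. A state-and-effect triangle of this kind commutes when \emph{both} composites around the adjunction agree with the corresponding leg, i.e.\ both $\sdcGEMod(-,[0,1]) \circ [0,1]_{(-)} \cong \NS$ and $\SConv(-,[0,1]) \circ \NS \cong [0,1]_{(-)}$. Your proposal establishes the first of these, which is essentially immediate from $[0,1]_\C = [0,1]$ and the full-faithfulness of $[0,1]_{(-)}$ (Proposition \ref{pred-wstar}); in the paper this identification is in effect built into the definition of $\NS$. But you never address the second triangle, which asserts that the effects of $A$ are recovered from its normal states, $[0,1]_A \cong \SConv(\NS(A),[0,1])$ naturally in $A$, and this is the mathematically substantive content of the theorem.

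That isomorphism cannot be obtained by the formal full-faithfulness argument you use for the other triangle; it is a genuine duality statement about W*-algebras. The paper proves it as Lemma \ref{zeta-jones}, using Sakai's characterization of a W*-algebra via its predual: $A = (A_*)'$, the canonical embedding $\zeta_{A_*}$ identifies $A_*$ with $\linspan(\NS(A))$, and one then checks (via order-theoretic facts such as \cite[Theorem 4.3.4(iii)]{kadison-ringrose1}) that under $\zeta_A : A \xrightarrow{\simeq} \linspan(\NS(A))'$ the unit interval $[0,1]_A$ corresponds exactly to the subconvex maps $\NS(A) \to [0,1]$. Without this step (or an equivalent argument), commutativity of the diagram has not been shown. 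Your verifications that the two legs land in the stated categories, and your naturality check for the triangle you do treat, are fine and match the paper's Lemmas \ref{functor-ksconv}, \ref{functor-sdcemods} and Proposition \ref{pred-wstar}.
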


The weakest precondition operator $\operatorname{wp}(f) : \unit_B \to \unit_A$ corresponding to a NsU-map $f : B \to A$ between W*-algebras is given by its restriction $f : [0,1]_B \to [0,1]_A$. It follows from Theorem \ref{states-effects-wpsu} that one can define a weakest pre-condition calculus, which involves the following bijective correspondences:

$$
\begin{prooftree}
{\xymatrix{ 
\text{maps } B\ar[r]^-{f} & A \text{ in } \opp{(\wPSU)}}}
\Justifies
\begin{prooftree}
{\xymatrix{ \text{affine maps }\NS(A)\ar[r] & \NS(B) \text{ in } \SConv}}
\Justifies
{\xymatrix{ \text{generalized effect module maps } [0,1]_B\ar[r]_-{\wkp(f)} & [0,1]_A \text{ in } \opp{\sdcGEMod}}}
\end{prooftree}
\end{prooftree}
$$

\section*{Acknowledgments}
The author would like to thank Bart Jacobs, Robert Furber, Bas Spitters, Bas Westerbaan, Jorik Mandemaker and Prakash Panangaden for helpful discussions. 

\newpage

\begin{appendices}
\section{Correspondence between operator theory and order theory}\label{appendix-correspondence}
In this section, we will provide the following correspondence table between operator theory and order theory, where $A$ and $B$ are C*-algebras.

\begin{center}
\begin{tabular}{|c|c|c|}
    \hline
    \textbf{Operator Theory} & \textbf{Order theory} & \textbf{Reference} \\
    \hline
    $A$ monotone-closed & $[0,1]_A$ directed-complete & \ref{mntn-clsd-alt} \\
    \hline
    $f : A \to B$ NsU-map & $f : [0,1]_A \to [0,1]_B$ Scott-continuous PsU-map & \ref{nrml-alt} \\
    \hline
    $A$ W*-algebra & $[0,1]_A$ dcpo with a separating set of normal states & \ref{thm-takesaki}\\
    \hline
\end{tabular}
\end{center}

In the litterature \cite{blackadar,dixmier-vn,takesaki1}, monotone-closed C*-algebras and normal maps are defined as follow.

\begin{mydef}
A C*-algebra $A$ is monotone-closed (or monotone-complete) if every bounded increasing net of positive elements of $A$ has a least upper bound in $A^+$.

A positive map $\phi : A \rightarrow B$ between C*-algebras is normal if every increasing net $(x_\lambda)_{\lambda \in \Lambda}$ in $A^+$ with a least upper bound $\lub x_\lambda \in A^+$ is such that the net $(\phi(x_\lambda))_{\lambda \in \Lambda}$ is an increasing net in $B^+$ with least upper bound $\lub \phi (x_\lambda) = \phi(\lub x_\lambda)$.
\end{mydef}

In the standard definition of the notion of monotone-closedness, the increasing nets are not required to be bounded by the unit, like in the definitions we used in this thesis. We will now show that we can assume that the upper bound is the unit, without loss of generality.

\begin{myprop}\label{mntn-clsd-alt}
A C*-algebra $A$ is monotone-closed if and only if the poset $([0,1]_A,\leq)$ is directed-complete.
\end{myprop}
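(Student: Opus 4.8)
The plan is to prove the two implications separately, bridging between arbitrary bounded increasing nets of positive elements and increasing nets landing inside the unit interval $[0,1]_A$ by a scaling argument, together with the standard observation that a poset is directed-complete precisely when every increasing net admits a least upper bound. First I would record this net/directed-subset correspondence: given a directed subset $D \subseteq [0,1]_A$, indexing $D$ by itself yields an increasing net (the inclusion $d \mapsto d$) whose upper bounds coincide with those of $D$; conversely the image of an increasing net is a directed subset with the same upper bounds. Hence a least upper bound of one exists exactly when a least upper bound of the other does, and they agree. This lets me pass freely between the ``directed subset'' formulation of directed-completeness and the ``increasing net'' formulation of monotone-closedness.

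For the implication that monotone-closedness implies directed-completeness of $[0,1]_A$, I would take a directed subset $D \subseteq [0,1]_A$, view it as a bounded increasing net of positive elements (bounded above by $1$), and invoke monotone-closedness to obtain a least upper bound $s \in A^+$. Since $1$ is an upper bound of $D$ and $s$ is the least one, we get $s \leq 1$, so $s \in [0,1]_A$; by the correspondence above it is then the supremum of $D$ inside $[0,1]_A$.

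For the converse, assume $[0,1]_A$ is directed-complete and let $(x_\lambda)_{\lambda \in \Lambda}$ be a bounded increasing net in $A^+$. Order-boundedness above provides a $z \in A^+$ with $x_\lambda \leq z \leq \norm{z} \scprod 1$ (using $x \leq \norm{x} \scprod 1$ for positive $x$, as in Lemma \ref{pointwise-lub}). Writing $r = \norm{z}$ and discarding the trivial case $r = 0$, the rescaled net $(\tfrac{1}{r} x_\lambda)_{\lambda \in \Lambda}$ is increasing and lies in $[0,1]_A$. Its image is a directed subset, so by hypothesis it has a least upper bound $s \in [0,1]_A$. Since multiplication by the positive scalar $r$ is an order-isomorphism of $\sa{A}$, it preserves least upper bounds, whence $r \scprod s$ is the least upper bound of $(x_\lambda)_{\lambda \in \Lambda}$ in $A^+$, proving $A$ monotone-closed.

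The main obstacle I expect is the scaling step: one must check that rescaling by $1/r$ sends a general bounded increasing net into $[0,1]_A$ and, crucially, that multiplication by a positive real is an order-isomorphism of the self-adjoint part, so that least upper bounds transport correctly between $A^+$ and $[0,1]_A$. Everything else is bookkeeping around the net/directed-subset correspondence and the leastness of the supremum forcing it back below the unit.
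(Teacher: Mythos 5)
Your proposal is correct and takes essentially the same route as the paper's proof: the forward direction is immediate once one notes the least upper bound of a net bounded by $1$ must itself lie below $1$, and the converse rescales a bounded increasing net of positives by the norm of its bound so that it lands in $[0,1]_A$. Your framing of the leastness step via scalar multiplication being an order-isomorphism of $\sa{A}$ is, if anything, slightly cleaner than the paper's, which only verifies leastness against upper bounds $c \leq b$.
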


\begin{proof}
Let $A$ be a C*-algebra.

If $A$ is monotone-closed, then, by definition every increasing net of positive elements bounded by $1$ has a least upper bound in $[0,1]_A$ and therefore, the poset $([0,1]_A,\leq)$ is directed-complete.

Conversely, suppose that $[0,1]_A$ is directed-complete. We now consider an increasing net of positive elements $(a_\lambda)_{\lambda \in \Lambda}$ in $A^+$, bounded by a nonzero positive element $b \in A^+$. Then, it restricts to an increasing net $(\frac{a_\lambda}{\norm{b}})_{\lambda \in \Lambda}$ in $[0,1]_A$ since $b \leq \norm{b} \scprod 1$. By assumption, the increasing net $(\frac{a_\lambda}{\norm{b}})_{\lambda \in \Lambda}$ has a least upper bound $\lublambda \frac{a_\lambda}{\norm{b}} \in \unit_A$ and thus $\norm{b} \lublambda \frac{a_\lambda}{\norm{b}}$ is an upper bound for $(a_\lambda)_{\lambda \in \Lambda}$.

Let $c \in A^+$ be an upper bound for the increasing net $(a_\lambda)_{\lambda \in \Lambda}$ such that $c \leq b$. For every $\lambda' \in \Lambda$, $a_{\lambda'} \leq c \leq b \leq \norm{b} \scprod 1$ and thus $\frac{c}{\norm{b}}$ is an upper bound for the increasing net $(\frac{a_\lambda}{\norm{b}})_{\lambda \in \Lambda}$. It follows that $\lublambda \frac{a_\lambda}{\norm{b}} \leq \frac{c}{\norm{b}}$ and therefore, $\norm{b} \lublambda \frac{a_\lambda}{\norm{b}} \leq c$. Thus, $\norm{b} \lublambda \frac{a_\lambda}{\norm{b}}$ is the least upper bound of the increasing net $(a_\lambda)_{\lambda \in \Lambda}$ bounded by $b$ and we can conclude that $A$ is monotone-closed.
\end{proof}

In this thesis, we have chosen to use the standard definition of normal maps. However, one can say that a PsU-map is normal if its restriction $f : [0,1]_A \to [0,1]_B$ is Scott-continuous.


\begin{myprop}\label{nrml-alt}  
A PsU-map $f : A \to B$ between C*-algebras is normal if and only if its restriction $f : [0,1]_A \to [0,1]_B$ is Scott-continuous.
\end{myprop}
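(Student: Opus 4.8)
The plan is to prove the two implications separately, in each case translating between normality and Scott-continuity of the restriction by means of two elementary devices. First, a directed subset $\Delta \subseteq [0,1]_A$ and the increasing net obtained by indexing $\Delta$ by itself have exactly the same upper bounds and the same least upper bound; hence Scott-continuity of $f : [0,1]_A \to [0,1]_B$ is equivalent to the matching-of-suprema condition of the normality definition, restricted to increasing nets that happen to live inside the unit interval. Second, since $f$ is a positive and therefore \emph{linear} map, it commutes with multiplication by non-negative scalars, and Lemma \ref{pointwise-lub} guarantees that $f$ maps $[0,1]_A$ into $[0,1]_B$ with $\norm{f(x)} \leq \norm{x}$. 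Together these let me rescale any bounded increasing net of positive elements into a net inside $[0,1]_A$ and back again.

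For the direction ``Scott-continuous $\Rightarrow$ normal'', I take an increasing net $(x_\lambda)_{\lambda \in \Lambda}$ in $A^+$ with least upper bound $s = \lub x_\lambda \in A^+$. If $s = 0$ then $0 \leq x_\lambda \leq s = 0$ forces every $x_\lambda = 0$ and the conclusion is immediate, so I may assume $s \neq 0$. Because $x_\lambda \leq s \leq \norm{s} \scprod 1$, the rescaled net $(\norm{s}^{-1} x_\lambda)_{\lambda \in \Lambda}$ lies in $[0,1]_A$; multiplication by $\norm{s}^{-1}$ is an order-isomorphism of $A^+$, so the least upper bound of this net is $\norm{s}^{-1} s$, which, being dominated by $1$, lies in $[0,1]_A$ and is therefore also its least upper bound \emph{computed in} $[0,1]_A$. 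Scott-continuity then gives $\lub f(\norm{s}^{-1} x_\lambda) = f(\norm{s}^{-1} s)$ in $[0,1]_B$, and undoing the scaling by linearity of $f$ yields $\lub f(x_\lambda) = f(s)$; monotonicity (Lemma \ref{presrv-order}) makes $(f(x_\lambda))_{\lambda \in \Lambda}$ increasing, so $f$ is normal. For the converse I run the same reduction in reverse: given a directed $\Delta \subseteq [0,1]_A$ with $s = \lub \Delta \in [0,1]_A$, I regard it as an increasing net, feed it to the normality hypothesis, and read off that $f(\Delta)$ is increasing with supremum $f(s)$, whence Scott-continuity since $f(s) \in [0,1]_B$.

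The step deserving the most care, in both directions, is reconciling the least upper bound as computed in the unit interval (the object Scott-continuity refers to) with the least upper bound computed in the full positive cone (the object normality refers to). One inclusion is free: a least upper bound taken in $A^+$ that happens to be dominated by $1$ automatically lies in $[0,1]_A$ and is the least upper bound there, since it sits below every upper bound, in particular below $1$. The reverse passage is the delicate one, because the order on a C*-algebra is not a lattice and one cannot simply ``cap an arbitrary upper bound at $1$''. I expect this to be the main obstacle, and I plan to handle it exactly as in the proof of Proposition \ref{mntn-clsd-alt}: the nets involved are bounded, so it suffices to test leastness against upper bounds dominated by the relevant multiple of the unit, after which the scaling argument together with Lemma \ref{pointwise-lub} closes the gap.
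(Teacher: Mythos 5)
Your proposal is correct and takes essentially the same route as the paper: the implication from normality to Scott-continuity of the restriction is read off from the definitions, and the converse is obtained by rescaling a bounded increasing net of positive elements into $[0,1]_A$ via $x \mapsto x/\norm{y}$ (where $y$ is the least upper bound), applying Scott-continuity there, and undoing the scaling by linearity, using Lemma \ref{pointwise-lub} and Lemma \ref{presrv-order} exactly as the paper does. The one delicate point you rightly isolate --- reconciling a least upper bound computed in the unit interval with one computed in the full positive cone --- is handled no more carefully in the paper's own proof, which silently identifies the two, so your explicit appeal to the bounded-upper-bound reduction of Proposition \ref{mntn-clsd-alt} matches the intended level of rigor.
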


\begin{proof}
Let $f : A \to B$ be a positive map between two C*-algebras $A$ and $B$.

If $f$ is normal, then by definition every increasing net $(x_\lambda)_{\lambda \in \Lambda}$ in $[0,1]_A \subseteq A^+$ with least upper bound $\lub x_\lambda \in \unit_A$ is such that the net $(f(x_\lambda))_{\lambda \in \Lambda}$ is an increasing net in $\unit_B \subseteq \B^+$ with least upper bound $\lub f(x_\lambda) = f(\lub x_\lambda) \in \unit_B$. That is to say, the restriction $f : [0,1]_A \to [0,1]_B$ is Scott-continuous.

Conversely, suppose that the restriction $f : [0,1]_A \to [0,1]_B$ is Scott-continuous. Let $(x_\lambda)_{\lambda \in \Lambda}$ be an increasing net in $A^+$ with a nonzero least upper bound $y \in A^+$. Since $y \leq \norm{y} \scprod 1$, it restricts to an increasing net $(\frac{x_\lambda}{\norm{y}})_{\lambda \in \Lambda}$ in $[0,1]_A$ with a least upper bound $\frac{y}{\norm{y}}$. From the Scott-continuity of $f : [0,1]_A \to [0,1]_B$, we deduce that the net $(f(\frac{x_\lambda}{\norm{y}}))_{\lambda \in \Lambda}$ is an increasing net in $\unit_B$ with least upper bound $\lub f (\frac{x_\lambda}{\norm{y}}) = f(\frac{y}{\norm{y}}) \in \unit_B$. It follows that the net $(f(x_\lambda))_{\lambda \in \Lambda}$, which is equal to $(\norm{y} f(\frac{x_\lambda}{\norm{y}}))_{\lambda \in \Lambda}$ by linearity, is an increasing net in $B^+$ with an upper bound $\norm{y} \lub f (\frac{x_\lambda}{\norm{y}}) = f(\norm{y}\frac{y}{\norm{y}}) = f(y) \in B^+$.

Suppose that $z \in B^+$ is an upper bound for the increasing net $(f(x_\lambda))_{\lambda \in \Lambda}$. From the fact that $f(x_{\lambda'}) \leq z$ and therefore $f(\frac{x_{\lambda'}}{\norm{y}})=\frac{f(x_{\lambda'})}{\norm{y}} \leq \frac{z}{\norm{y}}$ for every $\lambda' \in \Lambda$, we obtain that $f(\frac{y}{\norm{y}}) \leq \frac{z}{\norm{y}}$ and thus $f(y) \leq z$. It follows that $f(y)$ is the least upper bound of the increasing net $(f(x_\lambda))_{\lambda \in \Lambda}$. Hence, we can conclude that the map $f$ is normal.
\end{proof}

It is known that a C*-algebra $A$ is a W*-algebra if and only if it is monotone-complete and admits  sufficiently many normal states, i.e. the set of normal states of $A$ separates the points of $A$, see \cite[Theorem 3.16]{takesaki1}. By combining this fact and Proposition \ref{mntn-clsd-alt}, one can provide an order-theoretic characterization of W*-algebras, as in the following theorem.

\begin{mytheo}\label{thm-takesaki}
Let $A$ be a C*-algebra. 

Then $A$ is a W*-algebra if and only if its set of effects $[0,1]_A$ is directed-complete with a separating set of normal states (i.e. $\forall x \in A, \exists f \in \wPSU(A,\unit_\C), f(x) \neq 0$).
\end{mytheo}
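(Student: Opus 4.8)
The plan is to read off the statement by assembling the two facts collected immediately before it. The first is the classical characterization cited as \cite[Theorem 3.16]{takesaki1}: a C*-algebra $A$ is a W*-algebra if and only if $A$ is monotone-complete \emph{and} its normal states separate the points of $A$. The second is Proposition \ref{mntn-clsd-alt}, which identifies ``$A$ is monotone-complete'' with ``$[0,1]_A$ is directed-complete''. So my first move is to substitute the order-theoretic condition for the operator-theoretic one inside the classical characterization, leaving only the two separation conditions to be reconciled.

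The key step is to show that ``the normal states of $A$ separate the points of $A$'' and the displayed condition ``for every nonzero $x \in A$ there is $f \in \wPSU(A,\C)$ with $f(x) \neq 0$'' express the same property; here I read $\wPSU(A,\C)$ as the normal states on $A$, i.e. the normal PsU-maps $A \to \C$, a notion that makes sense for an arbitrary C*-algebra $A$ via the definition of normality (preservation of suprema of bounded increasing nets of positive elements). The equivalence rests entirely on linearity of states. If the normal states separate points, then for a nonzero $x$ the pair $(x,0)$ yields a normal state $f$ with $f(x) \neq f(0) = 0$. Conversely, if every nonzero element is detected by some normal state, then for distinct $x, y$ the element $x - y$ is nonzero, so some normal state $f$ satisfies $f(x - y) \neq 0$, and by linearity $f(x) \neq f(y)$. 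Combining this equivalence with the rewriting supplied by Proposition \ref{mntn-clsd-alt} gives the theorem.

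I do not anticipate a genuine obstacle, since both nontrivial ingredients are already in hand; the remaining work is essentially bookkeeping. The two points needing care are, first, the implicit restriction to nonzero $x$ in the displayed condition (at $x = 0$ the requirement $f(x) \neq 0$ can never hold, so the quantifier must range over $A \setminus \set{0}$), and second, checking that ``normal state'' in the sense of $\wPSU(A,\C)$ coincides with the notion of normal state used in \cite[Theorem 3.16]{takesaki1}, so that the two separation hypotheses genuinely match.
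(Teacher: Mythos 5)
Your proposal is correct, and it is in fact the derivation the paper itself announces in the main text just before the theorem (``by combining this fact and Proposition \ref{mntn-clsd-alt}\dots''). However, the paper's official proof, given in the final unnumbered section, takes a more self-contained route: rather than citing \cite[Theorem 3.16]{takesaki1} as a black box, it re-derives that characterization. For the forward direction it uses Sakai's predual theorem together with the canonical embedding $\zeta_A : A \to \linspan(\NS(A))'$ of Lemma \ref{zeta-jones} to produce the separating set of normal states, and Corollary \ref{effects-ccpo} plus Proposition \ref{mntn-clsd-alt} for monotone-completeness; for the converse it runs the GNS construction on the normal states, shows the resulting representation $\pi : A \to \B(H)$ is faithful because the states separate points, and uses monotone-completeness (via the fact that bounded increasing nets of positive operators converge strongly when they converge weakly) to conclude that $\pi(A)$ is strongly closed, hence a W*-algebra. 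Your version is shorter and buys economy at the price of having to check that the cited theorem's hypotheses match exactly --- a point you correctly identify, both in reconciling the two phrasings of ``separating'' via linearity and in flagging that the displayed condition must implicitly quantify over nonzero $x$ (and that $\wPSU(A,\C)$ must be read as the normal PsU-maps on a not-yet-known-to-be-W*-algebra $A$). The paper's version buys transparency: it exhibits exactly where each of the two order-theoretic hypotheses enters the operator-algebraic conclusion. Both arguments are sound.
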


It is natural to ask which role is played by normal states in this theorem. The existence of a separating set of normal states for every W*-algebra will be seen later in the proof of Lemma \ref{zeta-jones}. For every C*-algebra $A$, it is known that normal states induce a representation $\pi$ of $A$, i.e. a *-homomorphism from $A$ to $\B(H)$, for some Hilbert space $H$, see \cite[I.9]{takesaki1}. It can be shown that, when the C*-algebra $A$ admits a separating set of normal states, the representation $\pi$ of $A$ induced by the normal states of $A$ is faithful (i.e. injective) and that, when $A$ is monotone-closed, the image $\pi(A)$ of $A$ by the faithful representation $\pi : A \to \B(H)$ is a strongly-closed *-subalgebra of $\B(H)$, which is an alternative definition of W*-algebras, see \cite[III.3]{takesaki1} for a more detailed proof.

It is important to note that, in one of the very first articles about W*-algebras \cite{kadison-wdcpo}, Kadison defined W*-algebras as monotone-closed C*-algebras which separates the points. However, to our knowledge, this definition never became standard.

\newpage
\section{A state-and-effect triangle for discrete subprobabilistic computation}\label{appendix-sub-adj}
In this appendix, we will provide an adjunction between the category of generalized effect modules and the category of subconvex sets. Then, we will use this adjunction to express a weakest precondition calculus in terms of bijective correspondences, as seen in Section \ref{WP}.

\begin{mylem}\label{functor-sconv}
For every subconvex set $X$, the homset $\SConv(X,[0,1])$ is a generalized effect module. 

Therefore, there is a functor $\SConv(-,[0,1]) : \opp{\SConv} \to \GEMod$.
\end{mylem}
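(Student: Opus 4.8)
The plan is to equip the homset $\SConv(X,\unit)$ with the pointwise generalized-effect-module structure inherited from the target $\unit$, which is itself an effect module under truncated addition. Concretely, I would take the zero of $\SConv(X,\unit)$ to be the constant map $x \mapsto 0$, declare two morphisms $f,g$ to be summable ($f \perp g$) exactly when $f(x) + g(x) \leq 1$ for all $x \in X$, and in that case set $(f \ovee g)(x) = f(x) + g(x)$; the scalar action of $r \in \unit$ would be $(r \bullet f)(x) = r\, f(x)$. The first thing to check is that these pointwise operations do not leave the homset: since each $f \in \SConv(X,\unit)$ preserves subconvex sums, so do $f \ovee g$ and $r \bullet f$ (preservation of subconvex sums is a linear condition, hence stable under pointwise sums and scalings), while the constraints $f(x)+g(x) \leq 1$ and $r f(x) \in \unit$ keep the values inside $\unit$.

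Next I would verify the generalized-effect-algebra axioms. Commutativity, associativity and the zero law for $\ovee$, together with the cancellation and positivity laws, all hold pointwise because $\unit$ with truncated addition satisfies them; the point is merely that the summability relation $\perp$ is defined so that the relevant identity holds at every $x$ simultaneously. The module axioms ($1 \bullet f = f$, $(r+s)\bullet f = r\bullet f \ovee s\bullet f$ when $r+s \leq 1$, $(rs)\bullet f = r\bullet(s\bullet f)$, and $r\bullet(f \ovee g) = (r\bullet f)\ovee(r\bullet g)$ when $f \perp g$) likewise reduce to evident identities in $\unit$, using that $r+s \leq 1$ forces $(r+s)f(x) \leq f(x) \leq 1$ and that $r(f(x)+g(x)) \leq f(x)+g(x) \leq 1$. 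I would emphasise here why we land in $\GEMod$ rather than $\EMod$: the would-be top, the constant map $x \mapsto 1$, fails to preserve subconvex sums (it sends $\sum_i r_i x_i$ to $1$ instead of to $\sum_i r_i$ when $\sum_i r_i < 1$), so $\SConv(X,\unit)$ has in general no top and is only a generalized effect module.

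Finally, for the functor, a morphism $h : X \to Y$ in $\SConv$ is sent to the precomposition map $\SConv(h,\unit) : g \mapsto g \circ h$, from $\SConv(Y,\unit)$ to $\SConv(X,\unit)$, which is well defined since the composite of $\SConv$-morphisms is again one. I would then check that precomposition is a homomorphism of generalized effect modules: it preserves $0$, it preserves $\perp$ and $\ovee$ (if $g_1(y)+g_2(y)\leq 1$ for all $y$ then a fortiori $g_1(h(x))+g_2(h(x))\leq 1$ for all $x$), and it commutes with the scalar action, all by direct pointwise computation; contravariant functoriality ($\SConv(\mathrm{id},\unit) = \mathrm{id}$ and $\SConv(k\circ h,\unit) = \SConv(h,\unit)\circ\SConv(k,\unit)$) is then immediate. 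The only genuinely delicate point, and the one I would treat most carefully, is the closure claim of the first step together with the correct formulation of $\perp$: one must confirm that summing or scaling two subconvex-sum-preserving maps again preserves subconvex sums and stays inside $\unit$, since this is exactly what makes the pointwise structure land inside the homset and makes $\SConv(-,\unit)$ take values in $\GEMod$.
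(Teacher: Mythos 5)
Your proposal is correct and follows essentially the same route as the paper: the pointwise structure inherited from $\unit$ (zero, truncated sum, scalar action), verification that these operations preserve affinity and satisfy the cancellation and positivity laws, and precomposition for contravariant functoriality. Your added observation that the constant map $x \mapsto 1$ fails to be affine on subconvex sums --- explaining why one lands in $\GEMod$ rather than $\EMod$ --- is a worthwhile clarification that the paper leaves implicit.
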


\begin{proof}
Let $X$ be a subconvex set. We define pointwise a generalized effect module structure on the homset $\SConv(X,[0,1])$.

We take the map $x \mapsto 0$ as zero element.

The sum is defined pointwise for every $x \in X$ by $(f \ovee g)(x) = f(x) + g(x)$ when $f(x)+g(x)\leq 1$. Clearly, $f \ovee g$ is again an affine map of subconvex sets: \begin{align*}
(f \ovee g)(\sum_i r_i x_i) &= f(\sum_i r_i x_i) + g(\sum_i r_i x_i) \\ &= \sum_i r_i f(x_i) + \sum_i r_i g(x_i) \\ &= \sum_i r_i (f(x_i) + g(x_i)) \\ &= \sum_i r_i (f \ovee g)(x_i)\end{align*} where $\sum_i r_i x_i \in X$.

We now need to check that the homset $\SConv(X,[0,1])$ satisfies the cancellative law and the positivity law of generalized effect algebras. Let $f,g,h \in \SConv(X,\unit)$.

\begin{description}
 \item[Cancellative law:] Suppose that $f \ovee g = f \ovee h$. From the fact that $f(x) + g(x) = (f \ovee g)(x) = (f \ovee h)(x) = f(x) + h(x)$ for every $x \in X$, we deduce that $g(x)=h(x)$ for every $x \in X$ and thus $g=h$.
 \item[Positivity law:] Suppose that $f \ovee g = 0$. It follows that $f(x)+g(x)=(f\ovee g)(x)=0$ for every $x \in X$. Since the effect algebra $\unit$ is therefore a generalized effect algebra, it must satisfy the positive law and thus $f(x)=g(x)=0$ for every $x \in X$. Hence, $f=g=0$.
\end{description}

\noindent The scalar product is also defined pointwise by $r \bullet f = \lambda x \in X. r \cdot f(x)$, which is again an affine map of subconvex sets: \begin{align*}
(r \bullet f)(\sum_i r_i x_i) &= r \cdot f(\sum_i r_i x_i) \\ &= r \cdot (\sum_i r_i f(x_i)) \\ &= \sum_i r_i \cdot r \cdot f(x_i) \\ &= \sum_i r_i (r \bullet f)(x_i)\end{align*} where $\sum_i r_i x_i \in X$ and $r \in \unit$.

Thus, the mapping $X \mapsto \SConv(X,\unit)$ gives a contravariant functor: by precomposition,  we obtain a map of generalized effect modules $(-) \circ f : \SConv(Y,[0,1]) \to \SConv(X,[0,1])$ for every affine map $f : X \to Y$ of subconvex sets.

\begin{itemize}
 \item For every affine map $f : X \to Y$ of subconvex sets, \[(\lambda y \in Y. 0) \circ f = \lambda x \in X. (\lambda y \in Y.0)(f(x)) = \lambda x \in X. 0.\]
 \item Let $f : X \to Y$ be an affine map of subconvex sets and $g_1, g_2 \in \SConv(Y,[0,1])$. Suppose that $g_1 \perp g_2$ in $\SConv(Y,[0,1])$. Then, $g_1(y)+g_2(y) \leq 1$ for every $y \in Y$. Therefore, since $f(x) \in Y$ for every $x \in X$, we obtain that $g_1(f(x))+g_2(f(x))=(g_1 \circ f)(x)+(g_2 \circ f)(x) \leq 1$ for every $x \in X$. It follows that $(g_1 \circ f) \perp (g_2 \circ f)$ in $\SConv(X,[0,1])$. Moreover, \begin{align*}(g_1 \ovee g_2) \circ f &= \lambda x. (g_1 \ovee g_2)(f(x)) \\ &= \lambda x. g_1(f(x)) + g_2(f(x)) \\ &= \lambda x. (g_1 \circ f)(x) + (g_2 \circ f)(x) \\ &= \lambda x. ((g_1 \circ f) \ovee (g_2 \circ f))(x) \\ &= (g_1 \circ f) \ovee (g_2 \circ f).\end{align*}
 
 \item Let $f : X \to Y$ be an affine map of subconvex sets, $r \in [0,1]$ and $g \in \SConv(Y,[0,1])$. Then, \begin{align*}(r \bullet g) \circ f &= \lambda x. (r \bullet g)(f(x)) \\ &= \lambda x. r \cdot g(f(x)) \\ &= \lambda x. r \cdot (g \circ f)(x) \\ &= r \bullet (g \circ f)\end{align*}.
\end{itemize}
\end{proof}

\begin{mylem}\label{functor-eas}
For every generalized effect module $E$, the homset $\GEMod(E,[0,1])$ is a subconvex set.

Therefore, there is a functor $\GEMod(-,[0,1]) : \GEMod \to \opp{\SConv}$.
\end{mylem}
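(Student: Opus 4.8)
The plan is to mirror the proof of Lemma \ref{functor-sconv}, exploiting the duality between generalized effect modules and subconvex sets obtained by homming into $\unit$. Recall that a subconvex set is an Eilenberg-Moore algebra of $\SDM$, so it suffices to equip $\GEMod(E,\unit)$ with a pointwise subconvex structure and to check that it is well-defined and closed. Given homomorphisms $f_i \in \GEMod(E,\unit)$ and scalars $r_i \in \unit$ with $\sum_i r_i \leq 1$, I would define the subconvex combination pointwise by $(\sum_i r_i f_i)(x) = \sum_i r_i f_i(x)$ for $x \in E$. Since each $f_i(x) \in \unit$ and $\sum_i r_i \leq 1$, the value $\sum_i r_i f_i(x)$ lies in $\unit$, so this is well-defined as a function $E \to \unit$.

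The key step is to verify \emph{closure}: that $\sum_i r_i f_i$ is again a homomorphism of generalized effect modules. Preservation of zero and of scalar multiplication are immediate pointwise calculations, using $f_i(0)=0$ and $f_i(s \bullet x) = s \cdot f_i(x)$ respectively. The only verification requiring care is preservation of the partial sum: for $x \perp y$ in $E$, each $f_i$ gives $f_i(x) + f_i(y) = f_i(x \ovee y) \leq 1$, hence $(\sum_i r_i f_i)(x) + (\sum_i r_i f_i)(y) = \sum_i r_i (f_i(x)+f_i(y)) \leq \sum_i r_i \leq 1$, so the two values are orthogonal in $\unit$; the same computation shows that their sum equals $(\sum_i r_i f_i)(x \ovee y)$. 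The Eilenberg-Moore laws (the unit and multiplication equations for $\SDM$) then hold automatically, since they are checked pointwise and $\unit$ is itself a subconvex set.

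For the functorial part, I would send a morphism $g : E \to F$ of $\GEMod$ to the precomposition $(-) \circ g : \GEMod(F,\unit) \to \GEMod(E,\unit)$, noting first that $h \circ g$ is again a generalized effect module homomorphism whenever $h$ is. This map is affine: a direct pointwise computation gives $(\sum_i r_i h_i) \circ g = \sum_i r_i (h_i \circ g)$, so it is a morphism of $\SConv$. Functoriality (preservation of identities and of composition, with the order reversed) is then routine, yielding the contravariant functor $\GEMod(-,\unit) : \GEMod \to \opp{\SConv}$.

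I expect no serious obstacle, the argument being entirely dual to Lemma \ref{functor-sconv}; the single point demanding attention is confirming that subconvex combinations respect the orthogonality relation $a \perp b$ in $\unit$, which holds exactly because $\sum_i r_i \leq 1$ forces $\sum_i r_i(f_i(x)+f_i(y)) \leq 1$.
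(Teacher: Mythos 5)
Your proposal is correct and follows essentially the same route as the paper: define subconvex combinations of homomorphisms pointwise, verify that the result is again a homomorphism of generalized effect modules, and obtain the functor by precomposition. You are in fact slightly more careful than the paper's own proof, which omits the explicit check that $f(x) + f(y) \leq 1$ (i.e.\ that the partial sum is defined in $\unit$) before asserting preservation of $\ovee$.
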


\begin{proof}
Let $(E,0,\ovee)$ be a generalized effect module. Let $f : E \to [0,1]$ be the subconvex sum defined pointwise by $f(x) = \sum r_i f_i(x)$ where $f_i \in \GEMod(E,[0,1])$ and $r_i \in [0,1]$ with $\sum_i r_i \leq 1$. We will now show that $\GEMod(E,[0,1])$ is a subconvex set by proving that $f \in \GEMod(E,[0,1])$.

The preservation of zero is easy: $f(0) = \sum_i r_i f_i(0) = 0$. Let $x, y \in E$ be two elements such that $x \perp_E y$. Then, \[f(x \ovee y) = \sum_i r_i f_i(x \ovee y) = \sum_i r_i (f_i(x) + f_i(y)) = \sum_i r_i f_i(x) + \sum_i r_i f_i(y) = f(x) + f(y).\] Moreover, for $r \in \unit$ and $x \in E$,  \[f(r \bullet x) = \sum r_i f_i(r \bullet x) = \sum r_i (r \bullet f_i(x)) = r \bullet (\sum r_i f_i(x)) = r \bullet f(x).\] It follows that the map $f$ preserves the sum and the scalar product, and thus $f \in \GEMod(E,[0,1])$.

Hence, the mapping $E \mapsto \GEMod(E,[0,1])$ gives a contravariant functor: for every map $g : E \to F$ of generalized effect modules, we obtain by precomposition an affine map $(-) \circ g : \GEMod(F,[0,1]) \to \GEMod(E,[0,1])$ of subconvex sets: 

$(\sum_i r_i f_i) \circ g = \lambda x. (\sum_i r_i f_i)(g(x)) = \lambda x. \sum_i r_i f_i(g(x)) = \lambda x. \sum_i r_i (f_i \circ g)(x) = \sum_i r_i (f_i \circ g)$ where $\sum_i r_i f_i \in  \GEMod(F,[0,1])$.
\end{proof}

The combination of the previous two lemmas yields an adjunction described in the following theorem.

\begin{mytheo}\label{adjunction-sconv-emods}
There is an adjunction between $\SConv$ and $\GEMod$ by "homming into $[0,1]$":

$$\xymatrix@R+.5pc{
\opp{\GEMod}\ar@/^1.5ex/[rr]^-{\GEMod(-,[0,1])} 
   & \top & \SConv\ar@/^1.5ex/[ll]^-{\SConv(-,[0,1])} 
}$$

\end{mytheo}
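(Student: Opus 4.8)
The plan is to exhibit a bijection between the two hom-sets
$\GEMod(E,\SConv(X,[0,1]))$ and $\SConv(X,\GEMod(E,[0,1]))$, natural in $E \in \GEMod$ and $X \in \SConv$; such a natural isomorphism is precisely the data of the adjunction marked by $\top$ in the diagram. The crucial preliminary observation is that, by Lemmas \ref{functor-sconv} and \ref{functor-eas}, both $\SConv(X,[0,1])$ and $\GEMod(E,[0,1])$ carry all of their structure (the zero, the partial sum $\ovee$ and the scalar multiplication $\bullet$) \emph{pointwise}. I would therefore identify each of the two hom-sets with the same set of ``bimorphisms'' $\beta : E \times X \to [0,1]$, namely those $\beta$ that are a homomorphism of generalized effect modules in the first variable (for each fixed $x$, $\beta(-,x) \in \GEMod(E,[0,1])$) and an affine map of subconvex sets in the second (for each fixed $e$, $\beta(e,-) \in \SConv(X,[0,1])$).

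Concretely, I would send $\phi \in \GEMod(E,\SConv(X,[0,1]))$ to the bimorphism $\beta(e,x) = \phi(e)(x)$, and $\psi \in \SConv(X,\GEMod(E,[0,1]))$ to $\beta(e,x) = \psi(x)(e)$, and conversely curry a given bimorphism in each of the two variables. The substantive step is checking that these transpositions are well defined: that currying a $\GEMod$-map $\phi$ in the first variable produces for each $e$ an affine map $\phi(e) = \beta(e,-)$ and that $e \mapsto \beta(e,-)$ is again a $\GEMod$-map, and symmetrically for $\psi$. Since $0$, $\ovee$ and $\bullet$ on both hom-objects are computed pointwise, each such condition unfolds into the corresponding pointwise identity on $[0,1]$. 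The one place demanding care is the partiality of $\ovee$: one must observe that $\phi(e) \perp \phi(e')$ in $\SConv(X,[0,1])$ holds exactly because $\beta(e,x) + \beta(e',x) \le 1$ for every $x$, which is guaranteed by $\beta(-,x)$ being a $\GEMod$-map whenever $e \perp e'$ in $E$.

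With well-definedness established, the two assignments are visibly mutually inverse, since both merely reorganise the same two-argument function $\beta$; this yields the bijection for fixed $E$ and $X$. I would then verify naturality in both variables: for $g : E' \to E$ in $\GEMod$ and $f : X \to X'$ in $\SConv$ the induced maps on hom-sets act by precomposition (this is exactly the functoriality recorded in the two lemmas), so at the level of bimorphisms they both amount to the reindexing $\beta \mapsto \beta \circ (g \times f)$, whence the naturality squares commute on the nose.

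The main obstacle I anticipate is bookkeeping rather than conceptual: keeping track of the variances (which of the two contravariant functors is to be read into $\opp{\GEMod}$, and hence which one is displayed as the adjoint bearing $\top$) and handling the partial operation $\ovee$ so that orthogonality transfers faithfully between the pointwise structure on $\SConv(X,[0,1])$ and the defining property of a bimorphism. Beyond the two preceding lemmas and the pointwise description of the structures they supply, no further input is required.
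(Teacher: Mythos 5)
Your proposal is correct, but it takes a different route from the paper. You establish the adjunction as a natural bijection $\GEMod(E,\SConv(X,\unit)) \cong \SConv(X,\GEMod(E,\unit))$ by identifying both hom-sets with a common set of bimorphisms $\beta : E \times X \to \unit$ and checking that currying is well defined in each variable (including the orthogonality transfer for the partial sum) and natural via precomposition. The paper instead presents the dual adjunction in unit--counit style: it defines $\eta : E \to \SConv(\GEMod(E,\unit),\unit)$ and $\varepsilon : X \to \GEMod(\SConv(X,\unit),\unit)$, both as evaluation maps $x \mapsto \lambda f.\, f(x)$, and verifies that $\eta$ is a map of generalized effect modules and $\varepsilon$ an affine map of subconvex sets. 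The computations are essentially the same -- the paper's checks on $\eta$ and $\varepsilon$ are exactly your currying checks applied to the evaluation bimorphism, and both hinge on the pointwise nature of the structure supplied by Lemmas \ref{functor-sconv} and \ref{functor-eas} -- but your architecture is somewhat more self-contained: the hom-set bijection together with naturality already constitutes the adjunction, whereas the paper's unit--counit presentation would strictly also require the triangle identities (and naturality of $\eta$, $\varepsilon$), which it leaves implicit. Your flagged concern about variance is the right one to watch: the bijection you state corresponds to the two contravariant functors being adjoint on the right, which is indeed how the displayed $\top$ is to be read.
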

 
\begin{proof}
We need to establish a counit-unit adjunction between the categories $\SConv$ and $\GEMod$ with the functor $\SConv(-,[0,1])$ from Lemma \ref{functor-sconv} and the functor $\GEMod(-,[0,1])$ from Lemma \ref{functor-eas}.

Let $E$ be an arbitrary generalized effect algebra and $X$ be an arbitrary subconvex set.

We first need to check that the unit $\eta : E \to \SConv(\GEMod(E,[0,1]),[0,1])$ defined by $\eta(x) = \lambda f \in \GEMod(E,[0,1]). f(x)$ is a map of generalized effect modules.

The preservation of zero is easy: $\eta(0)= \lambda f. f(0) = \lambda f. 0 = 0$. Furthermore, if $x \perp_E y$, then: \[\eta(x \ovee y) = \lambda f. f(x \ovee y) = \lambda f. f(x) + f(y) = \lambda f. \eta(x)(f) + \eta(y)(f) = \eta(x) + \eta(y).\] Finally, we observe for every $r \in \unit$ that \[\eta (r \bullet x) = \lambda f. \eta(r \bullet x)(f) = \lambda f. f(r \bullet x) = \lambda f. r \bullet f(x) = \lambda f. r \bullet \eta(x)(f) = r \bullet \eta (x)\].

In much the same way, we need to prove that the counit $\varepsilon : X \to \GEMod(\SConv(X,[0,1]),[0,1])$ defined by $\varepsilon(x) = \lambda f \in \SConv(X,[0,1]). f(x)$ is an affine map of subconvex sets: \[\varepsilon(\sum_i r_i x_i) = \lambda f. f(\sum_i r_i x_i) = \lambda f. \sum_i r_i f(x_i) = \lambda f. \sum r_i \varepsilon(x_i)(f) = \sum_i r_i \varepsilon(x_i).\]
\end{proof}

We can now try to establish a state-and-effect triangle for discrete subprobabilistic computations, denoted via the subdistribution monad.

We define a predicate functor $\PredS : \kl{\SDM} \to \opp{\GEMod}$ by \[\PredS(X)=\SConv(\SDM(X),\unit)\] for every set $X$. Then, we can describe the situation by a state-and-effect triangle for discrete subprobabilistic computations: $$\xymatrix@R+.5pc{
\opp{\GEMod}\ar@/^1.5ex/[rr]^-{\GEMod(-,[0,1])} 
   & \top & \SConv\ar@/^1.5ex/[ll]^-{\SConv(-,[0,1])} \\
& \kl{\SDM} \ar[ul]^{\PredS\quad\quad}\ar[ur]_{\quad\mathcal{K}}
}$$ where $\mathcal{K}$ is the standard (full and faithful) "comparison" functor from the Kleisli category of a monad in its Eilenberg-Moore category.

\newpage
\section{A state-and-effect triangle for quantum computation}\label{appendix-quantum-tri}
In this section, we will provide a state-and-effect triangle for quantum computations, in order to give a categorical interpretation of a quantum weakest pre-condition calculus.

\begin{mylem}\label{functor-ksconv}
For every subconvex set $X$, the homset $\SConv(X,[0,1])$ is in $\sdcGEMod$.

Therefore, there is a functor $\SConv(-,[0,1]) : \opp{\SConv} \to \sdcGEMod$.
\end{mylem}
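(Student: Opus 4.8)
The plan is to build directly on Lemma \ref{functor-sconv}, which already shows that for every subconvex set $X$ the homset $\SConv(X,\unit)$ is a generalized effect module under the pointwise operations and that precomposition yields a functor into $\GEMod$. Since the partial order on this generalized effect module is exactly the pointwise order — $f \leq g$ holds iff the witness $g \ominus f$ is again an affine map into $\unit$, which for pointwise-defined maps reduces to $f(x) \leq g(x)$ for all $x$ — the three things left to verify in order to land in $\sdcGEMod$ are: (i) directed-completeness of $\SConv(X,\unit)$; (ii) existence of a separating set of Scott-continuous states; and (iii) Scott-continuity of the precomposition maps.

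For (i), given a directed family $\Delta \dirsubset \SConv(X,\unit)$ I would define the candidate least upper bound pointwise by $f(x) = \lub_{g \in \Delta} g(x)$, which exists in $\unit$ because $\R$ is bounded-complete and each $g(x) \leq 1$. The only nontrivial point is that $f$ is again affine, i.e. that the pointwise directed supremum commutes with finite subconvex combinations: one must show $\lub_g \sum_i r_i g(x_i) = \sum_i r_i \lub_g g(x_i)$. The inequality $\leq$ is immediate from monotonicity; the reverse is where directedness is genuinely used — given $\epsilon > 0$, choose for each of the finitely many indices $i$ some $g_i \in \Delta$ with $g_i(x_i) > f(x_i) - \epsilon$, then use directedness to find a common upper bound $g^\ast \in \Delta$ and bound $\sum_i r_i g^\ast(x_i)$ from below, using $\sum_i r_i \leq 1$, and let $\epsilon \to 0$. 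This is the step I expect to be the main obstacle, as it is the only place the argument is not purely formal.

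For (ii), I would take the evaluation maps $\operatorname{ev}_x : \SConv(X,\unit) \to \unit$ given by $\operatorname{ev}_x(f) = f(x)$, indexed by $x \in X$. Each $\operatorname{ev}_x$ preserves $0$, the partial sum $\ovee$, and the scalar multiplication $\bullet$ directly from the pointwise definitions, so it is a homomorphism of generalized effect modules; it is Scott-continuous precisely because suprema in $\SConv(X,\unit)$ were computed pointwise in step (i), so $\operatorname{ev}_x(\lub \Delta) = \lub_{g} \operatorname{ev}_x(g)$. This family separates points: if $f \neq g$ then $f(x) \neq g(x)$ for some $x$, that is $\operatorname{ev}_x(f) \neq \operatorname{ev}_x(g)$.

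For (iii), since $\SConv(X,\unit)$ is now a dcpo with suprema computed pointwise, I would check that for an affine map $h : X \to Y$ the precomposition $(-) \circ h : \SConv(Y,\unit) \to \SConv(X,\unit)$ sends a directed supremum to a directed supremum: for any $x \in X$, $\big((\lub \Delta) \circ h\big)(x) = (\lub \Delta)(h(x)) = \lub_{k \in \Delta} k(h(x)) = \lub_{k}\,(k \circ h)(x)$, so $(\lub \Delta)\circ h = \lub_k (k \circ h)$. Combined with Lemma \ref{functor-sconv}, which already makes precomposition a homomorphism of generalized effect modules, this upgrades the functor to $\SConv(-,\unit) : \opp{\SConv} \to \sdcGEMod$, completing the proof.
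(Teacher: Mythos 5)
Your proof follows the same route as the paper's: pointwise suprema for directed-completeness, evaluation maps $\operatorname{ev}_x$ as the separating set of Scott-continuous states, and a pointwise computation showing precomposition preserves directed joins. You are in fact more careful than the paper on the one delicate point --- verifying that the pointwise supremum of a directed family of affine maps is again affine --- which the paper's proof silently assumes; just note that since subconvex combinations $\sum_i r_i x_i$ may have countably infinite support, your $\epsilon$-argument should first truncate to a finite subsum with $\sum_{i \notin F} r_i < \epsilon$ before using directedness to find a common upper bound of the finitely many $g_i$.
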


\begin{proof}
Let $X$ be a convex set. The homset $\SConv(X,\unit)$ carries a generalized effect module structure defined by Lemma \ref{functor-sconv}. The order on $\SConv(X,\unit)$ is defined pointwise by $f \leq g$ if and only if $\forall x \in X, f(x) \leq g(x)$, with join of elements calculated pointwise.

Let $(f_\lambda)_{\lambda \in \Lambda}$ be a directed set in $\SConv(X,[0,1])$. It is bounded by $\lambda x \in X. 1$. Then, for every $x \in X$, since the unit interval $\unit$ is bounded-complete, the directed set $(f_\lambda(x))_{\lambda \in \Lambda}$ in $\unit$ has a least upper bound $\lublambda (f_\lambda(x)) \in \unit$. Thus, from the fact that $\forall x \in X, f_{\lambda'}(x) \leq \lublambda(f_\lambda(x))$ holds for every $\lambda' \in \Lambda$, we obtain an upper bound $f \in \SConv(X,\unit)$ for the directed set $(f_\lambda)_{\lambda \in \Lambda}$ defined pointwise by $f(x) = \lublambda (f_\lambda(x))$, where $x \in X$. It is easy to check that the map $f$ is a least upper bound for the directed set $(f_\lambda)_{\lambda \in \Lambda}$: if the map $g \in \SConv(X,\unit)$ is an upper bound for $(f_\lambda)_{\lambda \in \Lambda}$, i.e. $\forall \lambda' \in \Lambda, \forall x \in X, f_{\lambda'}(x) \leq g(x)$, then we can deduce that $\forall x \in X, f(x)=\lublambda (f_\lambda(x)) \leq g(x)$ and thus $f \leq g$. It follows that $\SConv(X,\unit)$ is a directed-complete generalized effect module.

Let $f,g \in \SConv(X,\unit)$ such that $f$ and $g$ are distincts. Then, there is (at least) one element $x \in X$ such that $f(x) \neq g(x)$. We now consider a Scott-continuous map of generalized effect modules $\phi : \SConv(X,[0,1]) \to [0,1]$ defined by $\phi(f)=f(x)$. Then, $\phi(f) \neq \phi(g)$. It follows that $\phi$ separates the elements $f$ and $g$ and therefore, $\SConv(X,\unit)$ is in $\sdcGEMod$.

We know that by precomposition, one obtains a map of generalized effect modules \[(-) \circ f : \SConv(Y,[0,1]) \to \SConv(X,[0,1])\] for every affine map $f : X \to Y$ of subconvex sets, see Lemma \ref{functor-sconv}. In order to show that the mapping $X \mapsto \SConv(X,[0,1])$ gives a contravariant functor, we only need to check that this map $(-) \circ f$ is also Scott-continuous: 

Let $f : X \to Y$ be an affine map of subconvex sets and $(g_\lambda)_{\lambda \in \Lambda}$ be a directed set in $\SConv(Y,[0,1])$. Since $\SConv(Y,[0,1])$ is directed-complete for its pointwise order, the directed set $(g_\lambda)_{\lambda \in \Lambda}$ has a least upper bound defined pointwise by $(\lublambda g_\lambda)(y)=\lublambda (g_\lambda(y))$, where $y \in Y$. In particular, $(\lublambda g_\lambda)(f(x))=\lublambda (g_\lambda(f(x)))$ for every $x \in X$. Similarly, the set $(g_\lambda \circ f)_{\lambda \in \lambda}$ in $\SConv(X,[0,1])$, bounded by $\lambda x. 1$, is directed and thus has a least upper bound since $\SConv(X,[0,1])$ is directed-complete. Then, we observe that:
\begin{align*}
(\lublambda g_\lambda) \circ f &= \lambda x \in X. (\lublambda g_\lambda)(f(x)) = \lambda x \in X. \lublambda (g_\lambda(f(x))) \\ &= \lambda x \in X. \lublambda (g_\lambda \circ f)(x) = \lublambda (g_\lambda \circ f).
\end{align*}
\end{proof}

\begin{mylem}\label{functor-sdcemods}
For every $E \in \sdcGEMod$, the homset $\sdcGEMod(E,[0,1])$ is a subconvex set.

Therefore, there is a functor $\sdcGEMod(-,[0,1]) : \sdcGEMod \to \opp{\SConv}$.
\end{mylem}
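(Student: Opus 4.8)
The plan is to follow the proof of Lemma~\ref{functor-eas} almost verbatim and to add the single new ingredient that subconvex combinations of \emph{Scott-continuous} maps remain Scott-continuous. Since $\sdcGEMod(E,\unit)$ is a subset of the pointwise subconvex set $\unit^E$ of all functions $E \to \unit$, it suffices to show that it is closed under pointwise subconvex sums. So let $f_i \in \sdcGEMod(E,\unit)$ and $r_i \in \unit$ with $\sum_i r_i \leq 1$, and define $f$ pointwise by $f(x) = \sum_i r_i f_i(x)$. Because $f_i(x) \in \unit$ and $\sum_i r_i \leq 1$, we have $f(x) \in \unit$, and Lemma~\ref{functor-eas} already shows that such an $f$ preserves $0$, the partial sum $\ovee$ and the scalar product $\bullet$, so $f \in \GEMod(E,\unit)$. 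It then remains only to check that $f$ is Scott-continuous.

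To that end, I would take a directed set $(x_\gamma)_{\gamma \in \Gamma} \dirsubset E$ with least upper bound $\lubgamma x_\gamma$. Since each $f_i$, and hence $f$, is monotone (Lemma~\ref{homo-ea}), the net $(f(x_\gamma))_{\gamma}$ is increasing and bounded by $1$, so it has a least upper bound $\lubgamma f(x_\gamma)$ because $\unit$ is bounded-complete, and $\lubgamma f(x_\gamma) \leq f(\lubgamma x_\gamma)$ is immediate. Using the Scott-continuity of each $f_i$, we have $f(\lubgamma x_\gamma) = \sum_i r_i f_i(\lubgamma x_\gamma) = \sum_i r_i \lubgamma f_i(x_\gamma)$, so the required equality $f(\lubgamma x_\gamma) = \lubgamma f(x_\gamma)$ reduces to the exchange law
\[
  \sum_i r_i \lubgamma f_i(x_\gamma) = \lubgamma \sum_i r_i f_i(x_\gamma)
\]
of a subconvex sum with a directed supremum, taken entirely inside $\unit$.

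I expect this exchange law to be the main obstacle, since the index $i$ may range over a countably infinite set. I would prove it in two steps. For a \emph{finite} sum the nontrivial inequality $\leq$ uses directedness of $(x_\gamma)_\gamma$: given $\epsilon > 0$, choose for each of the finitely many $i$ some $\gamma_i$ with $f_i(x_{\gamma_i})$ within $\epsilon$ of $\lubgamma f_i(x_\gamma)$, pick a common upper bound $\gamma^\ast$ of the $\gamma_i$, and use monotonicity of the $f_i$; the reverse inequality is trivial. For the \emph{countable} case I would control the tail: since $\sum_i r_i \lubgamma f_i(x_\gamma) \leq \sum_i r_i \leq 1$ converges, I truncate to a finite initial segment up to $\epsilon$, apply the finite case to that segment, and let $\epsilon \to 0$. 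This establishes Scott-continuity of $f$, so $\sdcGEMod(E,\unit)$ is closed under pointwise subconvex sums and therefore inherits the structure of a subconvex set from $\unit^E$.

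Finally, for functoriality into $\opp{\SConv}$, I would verify that precomposition with a morphism $g \in \sdcGEMod(E,F)$ restricts to an affine map $\sdcGEMod(F,\unit) \to \sdcGEMod(E,\unit)$. Closure holds because a composite of Scott-continuous maps of generalized effect modules is again such a map, so $h \circ g \in \sdcGEMod(E,\unit)$ for every $h \in \sdcGEMod(F,\unit)$; and the identity $(\sum_i r_i h_i) \circ g = \sum_i r_i (h_i \circ g)$ is exactly the pointwise computation of Lemma~\ref{functor-eas}, showing that $(-) \circ g$ preserves subconvex sums. Hence $X \mapsto \sdcGEMod(X,\unit)$ together with $g \mapsto (-)\circ g$ defines a contravariant functor $\sdcGEMod \to \opp{\SConv}$.
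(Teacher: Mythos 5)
Your proposal is correct and follows essentially the same route as the paper: reduce to Lemma~\ref{functor-eas} for the generalized-effect-module structure, check Scott-continuity of a pointwise subconvex sum via the interchange of $\sum_i r_i$ with the directed supremum in $\unit$, and obtain functoriality by precomposition. The only difference is that the paper simply asserts the equality $\sum_i r_i \lubgamma f_i(x_\gamma) = \lubgamma \sum_i r_i f_i(x_\gamma)$ in its displayed computation, whereas you correctly identify it as the one nontrivial step and supply a valid proof (finite case via directedness and monotonicity, then tail truncation for the countable case).
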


\begin{proof}
Let $E \in \sdcGEMod$. We now consider a map $f : E \to \unit$ defined by $f(x) = \sum_i r_i f_i(x)$, where $f_i \in \sdcGEMod(E,[0,1]) \subseteq \GEMod(E,\unit)$. We know by Lemma \ref{functor-eas} that the map $f$ is a map of generalized effect modules. Thus, in order to show that $\sdcGEMod(E,\unit)$ is a subconvex set, we only need to check that the map $f$ is Scott-continuous:  since all the maps $(f_i)_i$ are Scott-continuous maps of the homset $\sdcGEMod(E,\unit)$, we conclude that \[f(\lublambda x_\lambda) = \sum r_i f_i (\lublambda x_\lambda) = \sum r_i (\lublambda f_i (x_\lambda)) = \lublambda (\sum r_i f_i (x_\lambda))\] for every directed set $(x_\lambda)_{\lambda \in \Lambda}$ in $E$ with least upper bound $\lublambda x_\lambda \in E$.

Hence, the mapping $E \mapsto \sdcGEMod(E,\unit)$ gives a contravariant functor: as in Lemma \ref{functor-eas}, for a map $g \in \sdcGEMod(E,F) \subseteq \GEMod(E,F)$ where $E, F \in \sdcGEMod \subseteq \GEMod$, we obtain by precomposition an affine map of subconvex sets $(-) \circ g : \sdcGEMod(F,[0,1]) \to \sdcGEMod(E,[0,1])$.
\end{proof}

The combination of the previous two lemmas yields an adjunction described in the following theorem.

\begin{mytheo}\label{adjunction-wstar}
There is an adjunction between $\SConv$ and $\sdcGEMod$ by "homming into $[0,1]$":

$$\xymatrix@R+.5pc{
\opp{\sdcGEMod}\ar@/^1.5ex/[rr]^-{\sdcGEMod(-,[0,1])} 
   & \top & \SConv\ar@/^1.5ex/[ll]^-{\SConv(-,[0,1])} 
}$$
\end{mytheo}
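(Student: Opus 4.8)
The plan is to mirror the proof of Theorem \ref{adjunction-sconv-emods} on the underlying generalized-effect-module and subconvex-set structure, and then to supply the two additional Scott-continuity verifications that the enriched setting of $\sdcGEMod$ demands. First I would fix an object $E \in \sdcGEMod$ and a subconvex set $X$, and define the unit and counit by the same evaluation formulas as before:
\[
\eta_E : E \to \SConv(\sdcGEMod(E,[0,1]),[0,1]), \quad \eta_E(x) = \lambda f. f(x),
\]
\[
\varepsilon_X : X \to \sdcGEMod(\SConv(X,[0,1]),[0,1]), \quad \varepsilon_X(x) = \lambda f. f(x).
\]
That $\eta_E$ preserves zero, the partial sum, and the scalar product, and that $\varepsilon_X$ is affine, is exactly the computation carried out in the proof of Theorem \ref{adjunction-sconv-emods}, which I would reuse verbatim.

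The genuinely new content is checking that $\eta_E$ lands in $\sdcGEMod$ and that $\varepsilon_X$ lands in $\sdcGEMod(\SConv(X,[0,1]),[0,1])$; both reduce to Scott-continuity, and both are clean because joins in the relevant hom-sets are computed pointwise (Lemma \ref{functor-ksconv}). For $\eta_E$, given a directed net $(x_\lambda)_\lambda$ in $E$ with join $\lublambda x_\lambda$, Scott-continuity of each state $f \in \sdcGEMod(E,[0,1])$ gives $\eta_E(\lublambda x_\lambda)(f) = f(\lublambda x_\lambda) = \lublambda f(x_\lambda) = (\lublambda \eta_E(x_\lambda))(f)$, so $\eta_E$ preserves directed joins. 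For $\varepsilon_X$, the pointwise join in $\SConv(X,[0,1])$ yields $\varepsilon_X(x)(\lublambda f_\lambda) = (\lublambda f_\lambda)(x) = \lublambda f_\lambda(x) = \lublambda \varepsilon_X(x)(f_\lambda)$, so $\varepsilon_X(x)$ is Scott-continuous. The definition of $\sdcGEMod$ via Scott-continuous states is precisely what makes the first of these computations go through.

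Finally I would verify naturality of $\eta$ and $\varepsilon$ together with the two triangle identities. Naturality follows by unwinding the precomposition action of the two contravariant hom-functors on a morphism, and the triangle identities hold because both composites amount to evaluating the double-dual embedding against a point and reading it back off, which is the identity. I expect the main obstacle to be purely bookkeeping: keeping straight which hom-set carries which pointwise order, and confirming at each stage that the evaluation maps remain Scott-continuous rather than merely monotone. No deep new idea beyond Lemma \ref{functor-ksconv} and Lemma \ref{functor-sdcemods} is required, the whole argument being a Scott-continuous refinement of the adjunction of Theorem \ref{adjunction-sconv-emods}.
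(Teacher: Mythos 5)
Your proposal follows essentially the same route as the paper: reuse the algebraic verifications from Theorem \ref{adjunction-sconv-emods} and add the Scott-continuity check for the unit, exploiting that joins are computed pointwise. You are in fact slightly more careful than the paper, which only verifies Scott-continuity of the unit and silently omits the check that each $\varepsilon_X(x)$ is Scott-continuous (needed for the counit to land in $\sdcGEMod(\SConv(X,[0,1]),[0,1])$); your pointwise-join argument for that step is exactly right.
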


\begin{proof}
We will establish a counit-unit adjunction between the categories $\SConv$ and $\sdcGEMod$ with the functors $\SConv(-,\unit)$ and $\sdcGEMod(-,\unit)$ from Lemma \ref{functor-ksconv} and Lemma \ref{functor-sdcemods}. 

Let $E \in \sdcGEMod$ and $X \in \SConv$. Since the functor $\sdcGEMod(-,\unit)$ is a restriction of the functor $\GEMod(-,\unit)$, we already know by Theorem \ref{adjunction-sconv-emods} that:

\begin{itemize}
 \item The unit $\eta : E \to \SConv(\sdcGEMod(E,[0,1]),[0,1])$ defined by \[\eta(x) = \lambda f \in \sdcGEMod(E,[0,1]). f(x)\] is a map of generalized effect modules.
 \item The counit $\varepsilon : X \to \sdcGEMod(\SConv(X,[0,1]),[0,1])$ defined by \[\varepsilon(x) = \lambda f \in \SConv(X,[0,1]). f(x)\] is an affine map of subconvex sets.
\end{itemize} 

Thus, we only need to check that the unit is Scott-continuous to establish the adjunction: for every map $f \in \sdcGEMod(E,\unit)$, for every directed set $(x_\gamma)_{\gamma \in \Gamma}$ in $E$ with least upper bound $\lubgamma x_\gamma \in E$, the directed set $(\eta(x_\gamma))_{\gamma \in \Gamma}$ in $\SConv(\sdcGEMod(E,[0,1]),[0,1]) \in \sdcGEMod$ has a least upper bound $\lubgamma \eta(x_\gamma)$, the directed set $(\eta(x_\gamma)(f))_{\gamma \in \Gamma}=(f(x_\gamma))_{\gamma \in \Gamma}$ in $\unit$ has a least upper bound $\lubgamma f(x_\gamma) \in \unit$ since $\unit$ is bounded-complete and \begin{align*}
\eta(\lubgamma x_\gamma) &= \lambda f. \eta(\lubgamma x_\gamma)(f) \\
 &= \lambda f. f(\lubgamma x_\gamma) \\
 &= \lambda f. \lubgamma f(x_\gamma) \\
 &= \lambda f. \lubgamma \eta(x_\gamma)(f) \\
 &= \lubgamma \eta(x_\gamma).
\end{align*}
\end{proof}

We now consider a "normal state functor" $\NS : \opp{(\wPSU)} \to \SConv$ defined by:
 \begin{align*}
   \NS(A) & = \wPSU(A,\C) \simeq \sdcGEMod(\unit_A,\unit_\C) \\
   \NS(A \overset{f}{\to} B) & = (-) \circ f : \NS(B) \to \NS(A) \\
 \end{align*}

\begin{mylem}\label{zeta-jones}
For each W*-algebra $A$, there is an isomorphism $\unit_A \simeq \SConv(\NS(A),\unit)$.
\end{mylem}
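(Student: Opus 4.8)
The plan is to exhibit the isomorphism concretely as the evaluation map and to recover its inverse from the duality between a W*-algebra and its predual. First I would define $\zeta : \unit_A \to \SConv(\NS(A),\unit)$ by $\zeta(a) = \lambda\phi \in \NS(A).\,\phi(a)$. Since each normal state $\phi \in \wPSU(A,\C)$ is a positive sub-unital linear map, $0 \leq \phi(a) \leq \phi(1) \leq 1$ whenever $a \in \unit_A$, so $\zeta(a)$ takes values in $\unit$; moreover $\zeta(a)$ preserves subconvex sums because these are computed pointwise in $\NS(A)$ and each $\phi \mapsto \phi(a)$ is linear, whence $\zeta(a) \in \SConv(\NS(A),\unit)$. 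The same pointwise computations show that $\zeta$ is a homomorphism of generalized effect modules: $\zeta(0)=0$, $\zeta(a \ovee b) = \zeta(a) \ovee \zeta(b)$ when $a \perp b$, and $\zeta(r \bullet a) = r \bullet \zeta(a)$; and $\zeta$ is monotone by Lemma \ref{presrv-order}. All of this is routine.

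The injectivity of $\zeta$, and the fact that it reflects the order, is where the W*-structure enters, through the separating set of normal states of Theorem \ref{thm-takesaki} together with the standard fact that for a self-adjoint element $x$ one has $x \geq 0$ if and only if $\phi(x) \geq 0$ for every normal state $\phi$. Hence $\zeta(a) \leq \zeta(b)$, i.e. $\phi(b-a) \geq 0$ for all $\phi \in \NS(A)$, is equivalent to $b - a \in A^+$, i.e. $a \leq b$; in particular $\zeta(a) = \zeta(b)$ forces $a = b$, so $\zeta$ is an injective order-embedding.

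The main obstacle is surjectivity, for which I would invoke the duality $A \simeq (A_*)^*$ between the W*-algebra $A$ and its predual $A_*$ of normal functionals. Given $g \in \SConv(\NS(A),\unit)$, I would first extend $g$ to the cone of normal positive functionals by positive homogeneity, setting $\hat g(0)=0$ and $\hat g(\psi) = \psi(1)\, g\!\left(\psi/\psi(1)\right)$ for $\psi \neq 0$, where $\psi/\psi(1) \in \NS(A)$ is a normal state. Additivity of $\hat g$ on the cone follows from $g$ preserving convex combinations: writing $(\psi_1+\psi_2)/(\psi_1(1)+\psi_2(1))$ as the evident convex combination of $\psi_1/\psi_1(1)$ and $\psi_2/\psi_2(1)$ and applying $g$ gives $\hat g(\psi_1+\psi_2) = \hat g(\psi_1) + \hat g(\psi_2)$. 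Extending $\hat g$ to the self-adjoint part of $A_*$ via the Jordan decomposition $\psi = \psi_+ - \psi_-$ and then complexifying yields a bounded linear functional on $A_*$, the bound coming from $0 \leq \hat g(\psi) \leq \psi(1) = \norm{\psi}$ on positive $\psi$. By the predual duality there is a unique $a \in A$ with $\hat g(\psi) = \psi(a)$ for all $\psi \in A_*$; in particular $g(\phi) = \phi(a)$ for every normal state $\phi$, so $g = \zeta(a)$. It remains to check $a \in \unit_A$: for every normal state $\phi$ with $\phi(1)=1$ we have $\phi(a) = g(\phi) \in \unit$, hence $\phi(a) \geq 0$ and $\phi(1-a) \geq 0$; the fact that normal states detect positivity then gives $a \geq 0$ and $a \leq 1$.

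Finally, $\zeta$ is a bijective homomorphism of generalized effect modules that both preserves and reflects the order between two directed-complete objects of $\sdcGEMod$, recalling that $\unit_A$ is directed-complete by Corollary \ref{effects-ccpo} and $\SConv(\NS(A),\unit)$ by Lemma \ref{functor-ksconv}. Since an order isomorphism between dcpos preserves all existing joins, both $\zeta$ and $\zeta^{-1}$ are Scott-continuous, and $\zeta$ is the desired isomorphism in $\sdcGEMod$. The one genuinely non-elementary ingredient is the predual duality $A \simeq (A_*)^*$ used for surjectivity; everything else reduces to pointwise checks and the positivity-detecting property of normal states.
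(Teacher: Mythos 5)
Your proof is correct and follows essentially the same route as the paper: both hinge on the evaluation map $a \mapsto \lambda\phi.\,\phi(a)$ together with Sakai's predual duality $A=(A_*)'$ and the identification of $\linspan(\NS(A))$ with $A_*$. Your explicit extension of $g\in\SConv(\NS(A),\unit)$ to a bounded linear functional on $A_*$ (positive homogeneity, additivity on the cone, Jordan decomposition) spells out the surjectivity step that the paper compresses into the containment $\linspan(\NS(A))'\supseteq\SConv(\NS(A),\unit)$, and your closing check that the order isomorphism is automatically Scott-continuous is a welcome addition.
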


\begin{proof}
For every C*-algebra $A$, we denote by $A'$ the dual space of $A$, i.e. the set of all linear maps $\phi : A \to \C$. It is known that a C*-algebra $A$ is a W*-algebra if and only if there is a Banach space $A_*$, called pre-dual of $A$, such that $(A_*)'=A$, see \cite[Definition 1.1.2]{sakai}.

We now consider the map $\zeta_X : X \to X''$ defined by $\zeta_X(x)(\phi)=\phi(x)$ for $x \in X$ and $\phi \in X'$. Let $A$ be a W*-algebra. We observe that $\zeta_{A_*} : A_* \to A'$ is a "canonical embedding" of $A_*$ into $A'$ and it can be proved that $A_*$ is a linear subspace of $A'$ generated by the normal states of $A$, i.e. $\zeta_{A_*}(A_*)=\linspan(\NS(A))$, see \cite[Theorem 1.13.2]{sakai}. Then, we can now consider the induced surjection $\zeta_{A_*} : A_* \to \linspan(\NS(A))$, which turns out to be injective (and thus bijective): for every pair $(x,y) \in A_* \times A_*$ such that $x \neq y$, there is a $f \in \NS(A)$ such that $\zeta_{A_*}(x)(f) = f(x) \neq f(y) = \zeta_{A_*}(y)(f)$, which implies that $\zeta_{A_*}(x) \neq \zeta_{A_*}(y)$.

From $A_* \xrightarrow[\simeq]{\zeta_{A_*}} \linspan(\NS(A))$ for every W*-algebra $A$, we obtain that \[\unit_A \subseteq A = (A_*)' \xrightarrow[\simeq]{\zeta_{A}} \linspan(\NS(A))' \supseteq \SConv(\NS(A),\unit)\] for every W*-algebra $A$. We can now show that $\unit_A \simeq \SConv(\NS(A),\unit)$ for every W*-algebra $A$.

Let $a \in [0,1]_A$. Then, for every $\varphi \in \linspan(\NS(A))$, $\zeta_A(a)(\varphi)=\varphi(a) \leq \varphi(1) \leq 1$ by Lemma \ref{presrv-order}. Thus, we can conclude that $\zeta_A(a) \in \SConv(\NS(A),\unit)$ : for every $\sum_i r_i \varphi_i \in \NS(A)$, \[\zeta_A(a)(\sum_i r_i \varphi_i) = \sum_i r_i \varphi_i(a) = \sum r_i \zeta_A(a)(\varphi_i).\]

Conversely, suppose that $\zeta_A(a) \in \SConv(\NS(A),\unit)$ where $a \in A$. Then, by \cite[Theorem 4.3.4(iii)]{kadison-ringrose1}, from the fact that for every $\varphi \in \NS(A)$, $\zeta_A(a)(\varphi) = \varphi(a) \in \unit$, we can conclude that $a \in \unit_A$.
\end{proof}

\begin{myprop}\label{pred-wstar}
There is a full and faithful functor $[0,1]_{(-)} : \wPSU \to \sdcGEMod$.
\end{myprop}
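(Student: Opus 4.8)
The plan is to realize $[0,1]_{(-)} : \wPSU \to \sdcGEMod$ as a restriction-and-corestriction of the functor $[0,1]_{(-)} : \PSU \to \GEMod$, which is already known to be full and faithful by Lemma \ref{psu-determined}. Every NsU-map is in particular a PsU-map and every W*-algebra is in particular a C*-algebra, so the new functor is literally the old one cut down to these subcategories; the only work is (a) to check that it actually lands in $\sdcGEMod$ on objects and morphisms, and (b) to transport full-and-faithfulness across this restriction. The transport is powered by the three dictionaries established earlier: Corollary \ref{effects-ccpo} (directed-completeness of $[0,1]_A$), Proposition \ref{nrml-alt} (normality of a map equals Scott-continuity of its restriction), and Theorem \ref{thm-takesaki} (the separating-set-of-normal-states characterisation of W*-algebras).

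First I would verify well-definedness on objects. For a W*-algebra $A$, the effect module $[0,1]_A$ is in particular a generalized effect module (an effect module has a top, so is a generalized effect module); it is directed-complete by Corollary \ref{effects-ccpo}; and the normal states of $A$, which by Proposition \ref{nrml-alt} restrict to Scott-continuous maps $[0,1]_A \to [0,1]$, form a separating set by Theorem \ref{thm-takesaki}, since separation of the points of $A$ descends by linearity to separation of the points of $[0,1]_A$. Hence $[0,1]_A \in \sdcGEMod$. On morphisms, an NsU-map $f : A \to B$ restricts to a map of generalized effect modules $[0,1]_A \to [0,1]_B$ by functoriality of $[0,1]_{(-)} : \PSU \to \GEMod$, and this restriction is Scott-continuous precisely because $f$ is normal (Proposition \ref{nrml-alt}); so it is a morphism of $\sdcGEMod$.

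Faithfulness is then immediate: by Lemma \ref{psu-determined} every positive map, in particular every NsU-map, is completely determined by its action on $[0,1]_A$, so the action on hom-sets is injective. For fullness, I would take an arbitrary $\sdcGEMod$-morphism $g : [0,1]_A \to [0,1]_B$. Forgetting Scott-continuity, $g$ is a map of generalized effect modules, so by the fullness half of Lemma \ref{psu-determined} there is a PsU-map $f : A \to B$ whose restriction to $[0,1]_A$ equals $g$. The decisive point is that $f$ is automatically normal: its restriction is $g$, which is Scott-continuous, so Proposition \ref{nrml-alt} upgrades $f$ to an NsU-map, i.e. a morphism of $\wPSU$, and its image under $[0,1]_{(-)}$ is exactly $g$.

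The only real obstacle lies in this fullness step, and it is conceptual rather than computational: one must guarantee that the PsU-map produced by Lemma \ref{psu-determined} is genuinely normal and not merely sub-unital. This is where Proposition \ref{nrml-alt} carries the load, converting the order-theoretic hypothesis (Scott-continuity of $g$) into the operator-theoretic conclusion (normality of $f$). Everything else reduces to unwinding definitions and invoking Corollary \ref{effects-ccpo} and Theorem \ref{thm-takesaki} to certify that the codomain is indeed $\sdcGEMod$.
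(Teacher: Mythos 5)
Your proposal is correct and follows essentially the same route as the paper: both realize the functor as a restriction of the full and faithful $[0,1]_{(-)} : \PSU \to \GEMod$ from Lemma \ref{psu-determined}, and both use Proposition \ref{nrml-alt} together with Theorem \ref{thm-takesaki} to translate membership in $\wPSU$ (on objects and on morphisms) into membership in $\sdcGEMod$. Your write-up merely spells out more explicitly the fullness step (upgrading the PsU-map to a normal one via Scott-continuity of its restriction) that the paper leaves implicit.
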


\begin{proof}
By Lemma \ref{psu-determined}, a PsU-map $f : A \to B$ between C*-algebras is completely determined and defined by its action on $[0,1]_A$. Moreover, by combining Theorem \ref{thm-takesaki} with Proposition \ref{nrml-alt}, we observe that :
\begin{itemize}
 \item for every C*-algebra $A$, $\unit_A \in \sdcGEMod$ if and only if $A \in \wPSU$.
 \item for every PsU-map $f : A \to B$ between C*-algebras, $f : A \to B$ is in $\wPSU(A,B)$ if and only if its restriction $f : \unit_A \to \unit_B$ is in $\sdcGEMod(\unit_A,\unit_B)$.
\end{itemize}

That is to say, there is a full and faithful functor $[0,1]_{(-)} : \wPSU \to \sdcGEMod$.

\end{proof}

\begin{myprop}
The functor $\sdcGEMod(-,\unit) : \opp{\sdcGEMod} \to \SConv$ is faithful.
\end{myprop}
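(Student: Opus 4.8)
The plan is to unwind the definition of faithfulness and reduce it to the point-separating property that is built into the objects of $\sdcGEMod$. Recall that the functor $\sdcGEMod(-,\unit)$ is contravariant: a morphism $g : E \to F$ in $\sdcGEMod$ is sent to the affine map $(-) \circ g : \sdcGEMod(F,\unit) \to \sdcGEMod(E,\unit)$ of subconvex sets. Thus faithfulness amounts to the following implication: given two parallel morphisms $g, g' : E \to F$ in $\sdcGEMod$, if $(-) \circ g = (-) \circ g'$ then $g = g'$.

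First I would spell out what the hypothesis $(-) \circ g = (-) \circ g'$ says concretely. For every Scott-continuous state $\phi \in \sdcGEMod(F,\unit)$ and every $x \in E$ we have $\phi(g(x)) = ((-) \circ g)(\phi)(x) = ((-) \circ g')(\phi)(x) = \phi(g'(x))$. Fixing $x \in E$ and writing $y = g(x)$ and $y' = g'(x)$, this reads $\phi(y) = \phi(y')$ for every $\phi \in \sdcGEMod(F,\unit)$.

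The crucial step is then to invoke the defining property of an object of $\sdcGEMod$: the codomain $F$ carries a separating set of Scott-continuous states, i.e.\ a family of maps in $\sdcGEMod(F,\unit)$ separating the points of $F$. If we had $y \neq y'$, there would be a state $\phi$ in this separating family with $\phi(y) \neq \phi(y')$, contradicting the equality just established. Hence $y = y'$, that is $g(x) = g'(x)$. Since $x \in E$ was arbitrary, $g = g'$, and the functor is faithful.

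I expect no serious obstacle here; the only point requiring care is keeping track of the contravariance, so that the separating family is taken in the codomain $F$ (the source of the states $\phi$) rather than in $E$. The genuine content is simply that $\sdcGEMod(F,\unit)$ separates the points of $F$ by construction, which is exactly the condition imposed in the definition of $\sdcGEMod$; everything else is a mechanical unfolding of the hom-functor.
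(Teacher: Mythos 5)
Your proof is correct and follows essentially the same route as the paper: both reduce faithfulness of the contravariant hom-functor to the separating family of Scott-continuous states that the definition of $\sdcGEMod$ builds into the codomain object. Your write-up is, if anything, slightly more careful about the quantifiers and the contravariance than the paper's own version.
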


\begin{proof}
Let $X, Y \in \sdcGEMod$ and $f, g \in \sdcGEMod(X,Y)$. We now suppose that $\phi \circ f = \phi \circ g$ for every $\phi \in \sdcGEMod(Y,\unit)$, which means that $\phi(f(x))=\phi(g(x))$ for every $x \in X$. Since $\sdcGEMod(Y,[0,1])$ is a separating set for $f(X) \subseteq Y$, it follows that $f(x)=g(x)$ for every $x \in X$, i.e. $f=g$.
\end{proof}

Since the functor $\NS$ is the composition of the functor $\unit_{(-)}$ by the functor $\sdcGEMod(-,\unit)$, we obtain the following result.

\begin{mycoro}
The functor $\NS : \opp{(\wPSU)} \to \SConv$ is faithful.
\end{mycoro}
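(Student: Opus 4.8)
The plan is to exploit the factorization of $\NS$ announced just before the statement: $\NS$ is the composite of the predicate functor $\unit_{(-)} : \wPSU \to \sdcGEMod$ with the hom-functor $\sdcGEMod(-,\unit) : \opp{\sdcGEMod} \to \SConv$. Since a composite of faithful functors is faithful, it suffices to recall that each factor is faithful, which has already been established: $\unit_{(-)}$ is full and faithful by Proposition \ref{pred-wstar}, hence faithful, and so its opposite $\opp{(\unit_{(-)})} : \opp{(\wPSU)} \to \opp{\sdcGEMod}$ is faithful too; and $\sdcGEMod(-,\unit)$ is faithful by the preceding Proposition. The concrete content is then just that an injection-on-homsets composed with an injection-on-homsets is again an injection on homsets.

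First I would make the factorization precise, both on objects and on morphisms. On objects, for a W*-algebra $A$ one has $\sdcGEMod(\unit_A,\unit) = \sdcGEMod(\unit_A,\unit_\C)$, since the unit interval $\unit$ coincides with the effects $\unit_\C$ of $\C$; by the defining isomorphism $\NS(A) = \wPSU(A,\C) \simeq \sdcGEMod(\unit_A,\unit_\C)$ (which is exactly the full-and-faithfulness of $\unit_{(-)}$ applied to the codomain $\C$), this equals $\NS(A)$. On morphisms, a map $f : A \to B$ in $\opp{(\wPSU)}$, i.e. an NsU-map $f : B \to A$, is sent by $\unit_{(-)}$ to its restriction $f : \unit_B \to \unit_A$, and $\sdcGEMod(-,\unit)$ then sends that to the precomposition $(-) \circ f$, which is precisely $\NS(f)$. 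Hence $\NS = \sdcGEMod(-,\unit) \circ \opp{(\unit_{(-)})}$, modulo the identification $\NS(A) \simeq \sdcGEMod(\unit_A,\unit)$. I would conclude by remarking that if $\opp{(\unit_{(-)})}$ is injective on hom-sets and $\sdcGEMod(-,\unit)$ is injective on hom-sets, then so is their composite, whence $\NS$ is faithful.

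The main obstacle is not a hard argument but the bookkeeping of the factorization: I must verify that the identification $\NS(A) \simeq \sdcGEMod(\unit_A,\unit)$ is natural in $A$, so that the displayed equality of functors holds and not merely an objectwise isomorphism, and I must track the variance correctly through the two opposite-category operations (the functor $\unit_{(-)}$ is covariant, while $\sdcGEMod(-,\unit)$ is contravariant). Once this identification of $\NS$ with the composite is justified, faithfulness is immediate from the stability of faithful functors under composition, and no separate calculation with normal states is needed.
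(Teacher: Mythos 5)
Your proposal is correct and follows exactly the paper's own route: the paper derives the corollary from the observation that $\NS$ is the composite of the full and faithful functor $\unit_{(-)}$ (Proposition \ref{pred-wstar}) with the faithful functor $\sdcGEMod(-,\unit)$, and faithfulness is preserved under composition. Your additional care about the naturality of the identification $\NS(A) \simeq \sdcGEMod(\unit_A,\unit)$ and the variance bookkeeping is a welcome refinement of what the paper leaves implicit, but it is the same argument.
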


The previous results give rise to the following theorem.

\begin{mytheo}
The following state-and-effect triangle is a commutative diagram: 
$$\xymatrix@R+.5pc{
\opp{(\sdcGEMod)}\ar@/^1.5ex/[rr]^-{\sdcGEMod(-,[0,1])} 
   & \top & \SConv\ar@/^1.5ex/[ll]^-{\SConv(-,[0,1])} \\
& \opp{(\wPSU)} \ar[ul]^-{[0,1]_{(-)}}\ar[ur]_{\quad\NS}
}$$
\end{mytheo}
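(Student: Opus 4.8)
The plan is to show that the two functors $\opp{(\wPSU)} \to \SConv$ obtained by traversing the triangle are naturally isomorphic; that is, $\NS \simeq \sdcGEMod(-,\unit) \circ [0,1]_{(-)}$, where the right-hand side is the composite $\opp{(\wPSU)} \xrightarrow{[0,1]_{(-)}} \opp{\sdcGEMod} \xrightarrow{\sdcGEMod(-,\unit)} \SConv$. The horizontal adjunction is already supplied by Theorem \ref{adjunction-wstar}, and the two diagonal functors are those of Proposition \ref{pred-wstar} and of the definition of $\NS$, so the only remaining task is to exhibit this natural isomorphism.

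On objects, I would first observe that $\C$ is a W*-algebra with $\unit_\C = \unit$, the real unit interval. Then, for every W*-algebra $A$, the full-faithfulness of $[0,1]_{(-)}$ (Proposition \ref{pred-wstar}) gives
\begin{align*}
 \bigl(\sdcGEMod(-,\unit) \circ [0,1]_{(-)}\bigr)(A) &= \sdcGEMod(\unit_A,\unit) = \sdcGEMod(\unit_A,\unit_\C) \\
 &\simeq \wPSU(A,\C) = \NS(A),
\end{align*}
which is precisely the isomorphism already built into the definition of $\NS$. This isomorphism is implemented by restriction: a normal state $\phi : A \to \C$ corresponds to $\phi|_{\unit_A} : \unit_A \to \unit$, which is a Scott-continuous map of generalized effect modules by Proposition \ref{nrml-alt}, and conversely every such map extends uniquely to a NsU-map by Lemma \ref{psu-determined}.

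For naturality, I would take a NsU-map $f : A \to B$ and check that the restriction isomorphisms intertwine the two actions of $f$. On the $\NS$-side the action is precomposition $(-) \circ f : \wPSU(B,\C) \to \wPSU(A,\C)$, while on the composite side the functor $[0,1]_{(-)}$ sends $f$ to its restriction $\hat f : \unit_A \to \unit_B$ and $\sdcGEMod(-,\unit)$ then sends $\hat f$ to precomposition $(-) \circ \hat f : \sdcGEMod(\unit_B,\unit) \to \sdcGEMod(\unit_A,\unit)$. Since $f$ maps effects to effects (Lemma \ref{presrv-order}), restricting the composite state $\phi \circ f$ to $\unit_A$ yields exactly $(\phi|_{\unit_B}) \circ \hat f$, so the naturality square commutes; this is merely the statement that restriction-to-effects is compatible with composition.

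The bulk of the work has therefore already been done elsewhere: the genuine analytic content lives in Lemma \ref{strong-lub} and Corollary \ref{effects-ccpo} and Theorem \ref{wstar-dcpo} (directed-completeness of the effect hom-sets), in Theorem \ref{thm-takesaki} (the order-theoretic characterization of W*-algebras by a separating set of normal states), and in Lemma \ref{zeta-jones} (the predual identification $\unit_A \simeq \SConv(\NS(A),\unit)$). Given all of these, the present theorem is essentially formal, and the only real step is the naturality verification above. The main obstacle I anticipate is purely bookkeeping: ensuring that the restriction isomorphism is applied consistently on both the effect-module and the subconvex-set sides, and that the separation condition packaged into $\sdcGEMod$ is exactly what makes the relevant hom-functors well defined and faithful, as in the preceding propositions.
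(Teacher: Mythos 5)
There is a genuine gap: a state-and-effect triangle of this kind asserts that \emph{both} triangles commute up to natural isomorphism, i.e.\ not only $\sdcGEMod(-,\unit) \circ [0,1]_{(-)} \simeq \NS$ but also $\SConv(-,\unit) \circ \NS \simeq [0,1]_{(-)}$. Your proposal explicitly reduces the theorem to the first of these two conditions and proves only that one. That half is correct but is essentially true by construction, since the paper \emph{defines} $\NS(A) = \wPSU(A,\C) \simeq \sdcGEMod(\unit_A,\unit_\C)$ and $\NS(f) = (-) \circ f$; your naturality check for it is fine and matches what the paper does implicitly by assembling Proposition \ref{pred-wstar}, the definition of $\NS$, and the adjunction of Theorem \ref{adjunction-wstar}.

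The substantive content of the theorem is the second triangle, namely the natural isomorphism $\unit_A \simeq \SConv(\NS(A),\unit)$. This is exactly Lemma \ref{zeta-jones}, and it is not formal: it rests on Sakai's characterization of W*-algebras via the predual $A_*$, the identification $\zeta_{A_*}(A_*) = \linspan(\NS(A))$, and the order-theoretic fact that an element $a$ with $\varphi(a) \in \unit$ for all normal states $\varphi$ lies in $\unit_A$. You cite this lemma, but only as background ``already done elsewhere,'' and your stated plan never asserts that the composite $\SConv(-,\unit) \circ \NS$ must be compared with $[0,1]_{(-)}$ at all, nor do you check that the isomorphism $\zeta_A$ is natural in $A$. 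As written, the proposal would establish commutativity of one triangle and leave the other unproved; to complete the argument you need to invoke Lemma \ref{zeta-jones} as the object-level isomorphism for the second triangle and verify its naturality with respect to NsU-maps.
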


\newpage
\section{Domain-theoretic properties of the lattices of projections on Hilbert spaces}\label{appendix-lattices}
In this section, after recalling some standard definitions of lattice theory and domain theory, we will define a special class of operators known as projections that play a crucial role in operator theory since von Neumann and Birkhoff proposed in \cite{birkhoff-vonNeumann} to use projections to represent mathematically the properties of physical systems.

\begin{mydef}
Let $P$ be a poset. For elements $x$ and $y$ in $P$, one says that $x \ll y$ ("x approximates y" or "x is way below y") if for any directed set $\Delta \dirsubset P$, $y \leq \bigvee \Delta$ implies that there is a $d \in \Delta$ such that $x \leq d$.
\end{mydef}

\begin{mydef}
Let $D$ be a dcpo.

An element $x \in D$ is compact if $x \ll x$. We denote by $\Komp(D)$ the set of compact elements of $D$. 

$D$ is called algebraic or an algebraic domain if every element of $D$ is the least upper bound of the compact elements below it, i.e. for every $x \in D$, ${\downarrow} x \cap \Komp(D) = \set{y \in \Komp(D) \mid y \leq x}$ is directed with $x$ as least upper bound. 




\end{mydef}

\begin{mydef}
A lattice is a poset ($L$,$\leq$) in which every pair of elements $(a,b)$ has a meet $a \wedge b$ and a join $a \vee b$ such that for every element $c$ of $L$:
\begin{itemize}
 \item $a \wedge b \leq a, b$ and $a, b \leq a \vee b$.
 \item $c \leq a, b$ implies $c \leq a \wedge b$.
 \item $a, b \leq c$ implies $a \vee b \leq c$.
\end{itemize}
It follows by induction that a lattice has all the non-empty finite joins and meets of its elements.

A lattice is complete if all its subsets have both a least upper bound and a greatest lower bound.


An algebraic lattice is a complete lattice, which is algebraic as a dcpo.
\end{mydef}

It should be noted that every complete lattice is directed-complete \cite{davey-priestley}.

\begin{mydef}
Let $(P,\leq)$ be a poset with least element $0$.

An element $a \in P$ is an atom if $0 < a$ and there is no $x \in P$ such that $0 < x < a$. We denote by $\Atom(P)$ the set of atoms of $P$. 

$P$ is atomic if for every nonzero element $b \in P$, there is an atom $a \in P$ such that $0 < a \leq b$. 

An atomistic lattice is an atomic lattice $L$ where every nonzero element $x \in L$ is a join of atoms below $x$.
\end{mydef}

\begin{mydef}
Let $H$ be a Hilbert space and $P \in \B(H)$.

$P$ is a projection in $\B(H)$ if it is self-adjoint and idempotent, i.e. $P=P^*=P^2$.


We denote by $\Proj(H)$ the set of projections in a Hilbert space $H$.
\end{mydef}

It can be shown that there is a one-to-one correspondance between closed subspaces and projections in a Hilbert space $H$, see \cite[I.5.1]{blackadar}. Moreover, for every Hilbert space $H$, the set of projections $\Proj(H)$ forms an atomistic lattice, since the projections corresponding to one-dimensional subspaces are atoms\footnote{Such projections are usually called minimal in the literature, see \cite{blackadar,dixmier,piron,sakai,takesaki1}} and every closed subspace of $H$ is the closure of the span of its one-dimensional subpaces, see \cite{piron}. 

One could think that, since every lattice of projections on a Hilbert space is a complete lattice, it might be possible to provide an algebraic lattice of projections as a mathematical model for quantum computability analysis like in \cite{edalat-heckmann}. It turns out that it is not possible for every Hilbert space, as shown by the following counter-example.

\begin{mysam}
Consider $\ltwo$ the set of countable square-summable sequences of complex numbers, i.e. the set of infinite sequences $(c_1,c_2,\cdots)$ of complex numbers where $c_i \in \C$ such that the sum $\sum_i \abs{c_i}$ is finite. It is known that $\ltwo$ is an Hilbert space with the inner product $\prodscal{x}{y} = \sum_{n \in \N} x_n \overline{y_n}$. This Hilbert space has an orthonormal basis $(e_i)_{i \in \N}$ defined by $e_1 = (1,0,0,\cdots)$, $e_2 = (0,1,0,\cdots)$, ...

Let $(p_n)_{n \geq 1}$ be a directed set of projections in $\ltwo$ such that $p_1$ corresponds to the closed subspace spanned by $e'_1=e_1$ and $p_n$ corresponds to the closed subspace spanned by $e'_n=\frac{1}{n} e_n + e_1$ for every $n \geq 2$.

Then, $\lim_{n \to \infty}(e'_n)=e_1$ implies the range $\ran \lub p_n$ contains elements of $\ltwo$ that are arbitrary close to $e_1$ and hence contains $e_1$. For every $m \geq 2$, from $e_m = m(e'_m - e_1)$, we deduce that $e_m$ is contained in the closed subspace corresponding to $p_m$ and therefore in the closed subspace corresponding to $\lub p_n$.

Let $(p'_n)_{n \geq 2}$ be the directed set defined by $p'_n = p_2 \vee \cdots \vee p_n$ for every $n \geq 2$. Then, we observe that, $p_m \leq p'_m \leq \lub_{n \geq 2} p'_n$ for every $m \geq 2$ although there is no $m \geq 2$ such that $p_1 \leq p'_m$. Thus, $p_1$ is not way below $p_n$ ($n \geq 2$) and by symmetry, there is no way below ordering between two non-zero projections.
\end{mysam}

\end{appendices}

\newpage

\newpage
\section*{Proof of Theorem \ref{thm-takesaki}}
\begin{mydef}
Let $A$ be a C*-algebra. 
A representation $\pi$ of $A$ is a *-homomorphism $\pi : A \to \B(H)$ for some Hilbert space $H$.
A representation is called faithful when it is injective.
\end{mydef}

\begin{myprop}
Every C*-algebra $A$ admits a faithful representation.
\end{myprop}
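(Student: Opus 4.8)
The plan is to invoke the Gelfand--Naimark--Segal (GNS) construction: from each state I produce a cyclic representation, and then I assemble all of these into a single representation whose faithfulness comes from the fact that states separate the points of $A$.

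First I would fix a state $\phi \in \PSU(A,\C)$ and build its GNS representation. The recipe is to equip $A$ with the positive semi-definite sesquilinear form $(x,y) \mapsto \phi(y^* x)$, to observe that its null space $N_\phi = \set{x \in A \mid \phi(x^* x) = 0}$ is a left ideal of $A$ (which follows from the Cauchy--Schwarz inequality applied to $\phi$), and to pass to the quotient $A / N_\phi$, which is then a genuine inner product space. Completing it yields a Hilbert space $H_\phi$, and left multiplication $\pi_\phi(a)(x + N_\phi) = ax + N_\phi$ defines a linear operator on $A/N_\phi$ that, using $\norm{ax}_\phi \leq \norm{a}\,\norm{x}_\phi$, extends to a bounded operator on $H_\phi$. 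One checks routinely that $a \mapsto \pi_\phi(a)$ is multiplicative and involutive, hence a $*$-homomorphism $\pi_\phi : A \to \B(H_\phi)$, i.e. a representation. Because $A$ is unital, the class $\xi_\phi = 1 + N_\phi$ is a distinguished cyclic vector with $\prodscal{\pi_\phi(a)\xi_\phi}{\xi_\phi} = \phi(a)$.

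Next I would form the direct sum $\pi = \bigoplus_{\phi} \pi_\phi$ over all states $\phi$, acting on $\bigoplus_\phi H_\phi$; this is again a representation. Faithfulness I would then establish by contraposition. If $a \neq 0$, then $a^* a$ is a nonzero positive element of $A$, and the crucial input is that the states separate the positive elements: there exists a state $\phi$ with $\phi(a^* a) > 0$. For such a $\phi$ one computes $\norm{\pi_\phi(a)\xi_\phi}^2 = \prodscal{\pi_\phi(a^* a)\xi_\phi}{\xi_\phi} = \phi(a^* a) > 0$, so $\pi_\phi(a) \neq 0$ and therefore $\pi(a) \neq 0$. Hence $\ker \pi = \set{0}$ and $\pi$ is faithful.

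The main obstacle is precisely this separation property, namely that for every nonzero $a \in A^+$ there is a state $\phi$ with $\phi(a) > 0$. Proving it requires the Hahn--Banach theorem together with the spectral fact that the norm of a self-adjoint element equals its spectral radius, which is what lets one extend a suitable functional defined on $\linspan\set{1,a} \subseteq \sa{A}$ to a state attaining $\phi(a) = \norm{a}$; one also needs that positive functionals on a unital C*-algebra are automatically bounded. As the spectral-theoretic machinery is deliberately outside the scope of this text (as already noted for Theorems \ref{self-adjoint-BH} and \ref{positive-BH}), I would cite the separation of positive elements by states from the standard literature rather than deriving it here, and then present the GNS construction and the direct-sum argument above in full.
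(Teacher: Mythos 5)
Your proof is correct: the GNS construction applied to each state, the direct sum over all states, and faithfulness via the separation of nonzero positive elements by states is the standard argument, and you rightly flag that the separation property is the one genuinely non-elementary input, to be cited rather than rederived. Note that the paper itself states this proposition without proof (it is the Gelfand--Naimark theorem, quoted as a standard fact), but the machinery you use --- GNS representations $\pi_\omega$ with cyclic vectors $\xi_\omega$ satisfying $\omega(x)=\prodscal{\pi_\omega(x)\xi_\omega}{\xi_\omega}$, assembled into a direct sum whose faithfulness follows from a separating family of states --- is exactly the technique the paper deploys immediately afterwards in its proof of Theorem \ref{thm-takesaki}, so your argument is fully consistent with the text.
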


\begin{myprop}
A C*-algebra $A$ is a W*-algebra if and only if there is a faithful representation $\pi : A \to \B(H)$, for some Hilbert space $H$, such that $\pi(A)$ is a strongly-closed subalgebra of $\B(H)$.
\end{myprop}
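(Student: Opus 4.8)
The plan is to prove the two implications separately, leaning on the von Neumann bicommutant theorem (Theorem \ref{vn-bicommutant}), the preceding existence result for faithful representations, and---for the harder direction---the order-theoretic characterization of W*-algebras (Theorem \ref{thm-takesaki}) together with Vigier's theorem (Lemma \ref{strong-lub}). Throughout I would use the standard fact (admissible here, like the other spectral-theoretic inputs) that an injective $*$-homomorphism of C*-algebras is isometric, so that a faithful representation $\pi : A \to \B(H)$ corestricts to a C*-algebra isomorphism $A \cong \pi(A)$.

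For the implication from right to left, suppose $\pi : A \to \B(H)$ is faithful with $\pi(A)$ strongly closed. First I would arrange that $\pi$ is nondegenerate by passing to the essential subspace $\overline{\pi(A)H}$, which coincides with $\pi(1)H$ since $A$ is unital; on this subspace $\pi(A)$ becomes a unital $*$-subalgebra of $\B(\overline{\pi(A)H})$, still closed in the strong topology. Then Theorem \ref{vn-bicommutant} applies and gives $\pi(A) = \pi(A)''$, so $\pi(A)$ is a von Neumann algebra. Since $A \cong \pi(A)$ as C*-algebras, $A$ satisfies the defining condition of a W*-algebra.

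For the converse I would \emph{not} use an arbitrary faithful representation, since its image need not be strongly closed; instead I would build the representation from the normal states. By Theorem \ref{thm-takesaki}, a W*-algebra $A$ is monotone-complete and its normal states separate the points of $A$. Taking $\pi$ to be the direct sum, over all normal states $\phi$, of the GNS representations, on $H = \bigoplus_\phi H_\phi$, faithfulness of $\pi$ is immediate from point-separation. The substance is to show that $\pi(A)$ is strongly closed, and here I would exploit that this $\pi$ is a normal representation, hence preserves least upper bounds of bounded increasing nets of positive elements, while such nets are exactly what control the strong topology. Concretely, given a bounded increasing net $(\pi(a_\lambda))$ of positive operators in $\pi(A)$, the net $(a_\lambda)$ has a supremum $a \in A^+$ by monotone-completeness of $A$ (equivalently, directed-completeness of $[0,1]_A$, Corollary \ref{effects-ccpo}), and normality together with Vigier's theorem (Lemma \ref{strong-lub}) identifies the strong limit of $(\pi(a_\lambda))$ with $\pi(a) \in \pi(A)$.

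The main obstacle is precisely the passage from this last fact---closure under strong limits of bounded monotone nets of positives---to full strong-closedness of $\pi(A)$. Bridging this gap requires the Kaplansky density theorem (so that the strong closure is approximated from within the unit ball) together with the decomposition of an arbitrary element into positive parts, and it is the one ingredient that genuinely lies outside the elementary machinery developed here. I would therefore invoke Kaplansky density as a black box, or else cite \cite[III.3]{takesaki1} directly for the statement that a monotone-complete C*-algebra with a separating family of normal states is realized by its normal representation as a strongly closed $*$-subalgebra of some $\B(H)$.
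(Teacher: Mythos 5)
The paper itself states this proposition without proof, as a standard fact of operator theory; given its definition of a W*-algebra as a C*-algebra satisfying the conditions of Theorem \ref{vn-bicommutant} (i.e.\ a strongly closed unital *-subalgebra of some $\B(H)$ --- see the first line of the proof of Corollary \ref{effects-ccpo}), the proposition is essentially a restatement of that definition in representation-theoretic language. Your right-to-left direction is correct and is the one the paper actually needs: the isometry of injective *-homomorphisms, the reduction to the essential subspace, and the appeal to Theorem \ref{vn-bicommutant} are exactly the right ingredients, and this is the direction invoked at the end of the paper's proof of Theorem \ref{thm-takesaki} (``$\pi(A)$ is strongly closed in $\B(H)$ and thus $A$ is a W*-algebra'').

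Your left-to-right direction, however, has a structural problem: it is circular relative to the paper. This proposition sits inside the section titled ``Proof of Theorem \ref{thm-takesaki}'' and serves as a lemma for that theorem, yet you invoke Theorem \ref{thm-takesaki} (monotone-completeness plus separating normal states) as the starting point, and then essentially reproduce the ``conversely'' half of the paper's proof of that very theorem (GNS over the normal states, faithfulness from point-separation, strong closedness via normality, Vigier's theorem and a Kaplansky-density black box). With the paper's concrete definition of W*-algebra, the forward implication needs none of this: $A$ is already given as a strongly closed unital *-subalgebra of some $\B(H)$, so the inclusion $A \hookrightarrow \B(H)$ is the required faithful representation. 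Besides the circularity, the Kaplansky density step is a genuine gap in your argument (closure under strong limits of bounded increasing nets of positives does not by itself yield strong closedness), so even read as a standalone proof from the abstract definition, the forward direction is incomplete where you say it is.
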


\begin{mytheo}
Let $A$ be a C*-algebra. Then, $A$ is a W*-algebra if and only if $A$ is monotone-complete and $\NS(A)$ is a separating set for $A$.
\end{mytheo}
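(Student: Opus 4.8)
The plan is to reduce the equivalence to the two propositions stated just above, which characterise W*-algebras as exactly the C*-algebras admitting a faithful representation with strongly-closed image, and to build the relevant representation from the normal states by the GNS construction.

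For the forward implication, suppose $A$ is a W*-algebra. Monotone-completeness is already in hand: by Corollary \ref{effects-ccpo} the poset $[0,1]_A$ is directed-complete, and Proposition \ref{mntn-clsd-alt} turns this into monotone-completeness of $A$. That $\NS(A)$ separates the points of $A$ is the substance of Lemma \ref{zeta-jones}: a W*-algebra has a predual $A_*$ with $(A_*)' = A$, and $A_*$ is the linear span of the normal states, so any $a \in A$ annihilated by every normal state is annihilated by all of $A_*$ and hence is $0$.

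For the converse, suppose $A$ is monotone-complete and $\NS(A)$ separates points. For each $\phi \in \NS(A)$ I would take the GNS representation $\pi_\phi : A \to \B(H_\phi)$ with cyclic vector $\xi_\phi$ (so that $\phi(x) = \prodscal{\pi_\phi(x)\xi_\phi}{\xi_\phi}$), and form the direct sum $\pi = \bigoplus_\phi \pi_\phi$ on $H = \bigoplus_\phi H_\phi$. Faithfulness is immediate from separation: if $\pi(a) = 0$ then $\phi(a) = 0$ for every normal state, whence $a = 0$. So $\pi$ is a faithful representation, and it remains to show that its image is strongly closed.

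The decisive step, which I expect to be the main obstacle, is the strong closure of $\pi(A)$. Each $\pi_\phi$ is normal because $\phi$ is, so $\pi$ is a normal faithful $*$-homomorphism: by Lemma \ref{strong-lub} and Proposition \ref{nrml-alt} it sends an increasing bounded net in $A^+$ with supremum $x$ to a net whose strong limit is $\pi(x) = \bigvee \pi(x_\lambda)$. To prove $\pi(A) = \pi(A)''$, I would take a positive $T$ in the unit ball of $\pi(A)''$ and, via Kaplansky's density theorem, approximate it strongly by a bounded increasing net $(\pi(a_\lambda))_\lambda$ drawn from $\pi(A)^+$; monotone-completeness of $A$ then supplies $a = \bigvee a_\lambda \in A^+$, and normality of $\pi$ gives $\pi(a) = \bigvee \pi(a_\lambda) = T$ (the join being the strong limit again by Lemma \ref{strong-lub}), so that $T \in \pi(A)$. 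Since $\pi(A)''$ is spanned by its positive unit-ball elements, this yields $\pi(A)'' = \pi(A)$, and the preceding proposition then identifies $A$ as a W*-algebra. The genuinely delicate point is arranging that Kaplansky density can be made to yield an \emph{increasing}, bounded net of positive elements, so that monotone-completeness and Lemma \ref{strong-lub} can be brought to bear and $\pi$ can be seen to intertwine order-theoretic suprema in $A$ with strong limits in $\B(H)$.
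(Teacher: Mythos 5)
Your overall route coincides with the paper's: the forward direction via Corollary \ref{effects-ccpo}, Proposition \ref{mntn-clsd-alt} and the predual/separation argument of Lemma \ref{zeta-jones} is exactly what the paper does, and the converse likewise sums the GNS representations attached to the normal states and deduces faithfulness from separation. The divergence, and the genuine gap, is in the final step. You propose to prove $\pi(A)=\pi(A)''$ by taking a positive $T$ in the unit ball of $\pi(A)''$ and using Kaplansky density to produce a bounded \emph{increasing} net in $\pi(A)^{+}$ converging strongly to $T$. That step fails as stated: Kaplansky's theorem gives a net in the positive unit ball of $\pi(A)$ converging strongly to $T$, but it gives no monotonicity, and a positive element of the bicommutant need not be the supremum of any increasing net from $\pi(A)^{+}$. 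Concretely, for $C[0,1]$ acting on $L^{2}[0,1]$ (weak closure $L^{\infty}[0,1]$), the indicator of a fat Cantor set $K$ dominates no nonzero positive continuous function, since a continuous $f$ with $0\leq f\leq 1_{K}$ a.e.\ vanishes on the dense open complement of $K$ and hence everywhere; so $1_{K}$ is a decreasing, but not an increasing, limit from the subalgebra. Reaching all of $\pi(A)''$ by monotone limits requires the up--down theorem (up--down--up in the nonseparable case), i.e.\ iterated monotone limits in both directions. The repair is to show that $\pi(A)$ is closed under bounded monotone limits both upward and downward (using monotone completeness of $A$ together with normality of $\pi$, which itself needs justification along the lines of Lemma \ref{strong-lub} and Proposition \ref{nrml-alt}) and then invoke the up--down theorem or Pedersen's argument to conclude $\pi(A)=\pi(A)''$. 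So the point you flag as ``delicate'' is not merely delicate; in the form you state it, it is false.

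For comparison, the paper's own proof elides the same difficulty: it shows that $\pi(A)$ contains the strong limits of its bounded increasing nets of positive elements and then simply asserts that $\pi(A)$ is strongly closed, deferring the complete argument to \cite[III.3]{takesaki1}. Your proposal is therefore structurally faithful to the paper and more explicit about where the real work lies, but neither your Kaplansky-density step nor the paper's one-line conclusion actually closes the argument; both need the up--down machinery.
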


\begin{proof}
Let $A$ be a C*-algebra.

Suppose that $A$ is a W*-algebra. Then, by Corollary \ref{effects-ccpo}, $\unit_A$ is a dcpo and thus $A$ is monotone-complete by Proposition \ref{mntn-clsd-alt}. Moreover, we know by \cite[Theorem 1.13.2]{sakai} and Lemma \ref{zeta-jones} that there is an isomorphism $\zeta_{A} : A \to \linspan(\NS(A))'$ defined by $\zeta_A(a)(\varphi) = \varphi(a)$ for $a \in A$ and $\varphi \in A'$. Therefore, $\zeta_A$ is injective and thus for every pair $(x,y)$ of distinct elements of $A$, $\zeta_A(x) \neq \zeta_A(y)$, which means that there is a $\varphi \in \NS(A)$ such that $\varphi(x) = \zeta_A(x)(\varphi) \neq \zeta_A(y)(\varphi) = \varphi(y)$. It follows that the set $\NS(A)$ is a separating set for $A$.

Conversely, suppose that $A$ is monotone-closed and admits its normal states as a separating set.

There is a representation $\pi : A \to B(H)$, for some Hilbert space $H$, induced by the normal states on $A$, by the Gelfand-Naimark-Segal (GNS) construction \cite[Theorem I.9.14, Definition I.9.15]{takesaki1}: 
\begin{itemize}
 \item Every normal state $\omega$ on $A$ induce a representation $\pi_\omega : A \to \B(H_\omega)$ such that there is a vector $\xi_\omega$ such that $\omega(x)=\prodscal{\pi_\omega(x)\xi_\omega}{\xi_\omega}$ for every $x \in A$
 \item We define a Hilbert space $H$, which is the direct sum of the Hilbert spaces $H_\omega$, where $\omega$ is a normal state on $A$.
 \item The representation $\pi : A \to \B(H)$ is defined pointwise for every $x \in A$: $\pi(x)$ is the bounded operator on $H$ defined as the direct sum of the bounded operators $\pi_\omega(x)$ on $H_\omega$, where $\omega$ is a normal state on $A$.
\end{itemize}

By assumption, the set of normal states of $A$ is a separating set for $A$ and thus, for every pair of distincts elements $x,y$ in $A$, there is a state $\rho$ on $A$ such that $\prodscal{\pi_\rho(x)\xi_\rho}{\xi_\rho}=\rho(x) \neq \rho(y)=\prodscal{\pi_\rho(y)\xi_\rho}{\xi_\rho}$ and thus $\pi_\rho(x) \neq \pi_\rho(y)$ for some state $\rho$ on $A$. It follows that $\pi(x) \neq \pi(y)$ and hence, the representation $\pi$ is faithful.
 
Let $\rho$ be a normal state on $A$. Since $A$ is monotone-closed, every directed set $(\rho(x_\lambda))_{\lambda \in \Lambda}$ in $\B(H)$ has a least upper bound $\lublambda \rho(x_\lambda) = \rho(\lublambda x_\lambda)$. According to the definition we gave earlier of $\pi_\rho$, this imply that $\pi_\rho(x_\lambda)$ converges weakly to $\lublambda \pi_\rho(x_\lambda)$. Since a bounded net of positive operators converges strongly whenever it converges weakly (see \cite[I.3.2.8]{blackadar}), it turns out that $\lublambda \pi_\rho(x_\lambda)$ is the strong limit of $(\pi_\rho(x_\lambda))_{\lambda \in \Lambda}$ in $\B(H_\rho)$. Hence, the strong limit of $(\pi(x_\lambda))_{\lambda \in \Lambda}$ in $\B(H)$ exists in $\B(H)$ and is defined as the direct sum of the strong limit of the nets $(\pi_\omega(x_\lambda))_{\lambda \in \Lambda}$ where $\omega$ is a normal state on $A$. Thus, $\pi(A)$ is strongly closed in $\B(H)$ and thus $A$ is a W*-algebra.
\end{proof}

\end{document}